\documentclass[11pt]{article}

\usepackage[T1]{fontenc}
\usepackage{lmodern}
\usepackage{amsmath,amssymb,amsthm,mathtools,mathrsfs}
\usepackage{microtype}
\usepackage{graphicx}
\usepackage[a4paper,margin=30mm]{geometry}
\usepackage[hidelinks]{hyperref}

\newtheorem{theorem}{Theorem}[section]
\newtheorem{proposition}[theorem]{Proposition}
\newtheorem{lemma}[theorem]{Lemma}
\newtheorem{conjecture}[theorem]{Conjecture}
\theoremstyle{remark}
\newtheorem{remark}[theorem]{Remark}

\newcommand{\Fock}{\mathcal{F}}
\newcommand{\Dom}{\mathcal{D}}
\newcommand{\C}{\mathbb{C}}
\newcommand{\Nzero}{\mathbb{N}_{0}}
\newcommand{\id}{\mathbf{1}}

\title{Operator-theoretic and analytic properties of a driven
single-mode Kerr cavity in Bargmann space}
\author{Maciej Janowicz}
\date{}

\begin{document}

\maketitle

\begin{abstract}
The stationary spectral problem for a coherently driven single-mode
electromagnetic field in a Kerr medium is studied.  For arbitrary detuning
and driving amplitude and positive Kerr coupling, the Hamiltonian is shown
to be self-adjoint, essentially self-adjoint on its finite-particle core,
bounded below, and to have compact resolvent; consequently the spectrum is
purely discrete.  The eigenvalue equation is formulated in the
Bargmann--Fock representation as a second-order differential equation for
an entire function.  A minimal-solution continued fraction gives, for non-zero drive,
an exact scalar spectral condition $\Xi(E)=0$, with every eigenvalue simple.
A Liouville transformation removes the equation's first derivative, places
it in normal form, classifies its singularities at the origin and at
infinity, and identifies its precise specialization within the
double-confluent Heun family.  An Olver-type Volterra construction then
gives uniquely normalized sectorial solutions at infinity with explicit
error bounds.  In the strong-drive limit, the leading turning-point energy
coefficient is proved rigorously, while its next-order coefficient is
conditional on an explicitly stated global-connection conjecture; an
independently derived formal Bogoliubov expansion, and a second,
complex-WKB treatment resting on a separately conjectured orientation
hypothesis, reproduce the same coefficients only as mutual consistency
checks, not as a proof of either conjecture.  A complementary formal
weak-drive perturbation expansion is independently cross-checked against
the Bargmann coefficient recurrence, and the upper edge of the resulting
tridiagonal series is summed exactly in closed form, in analogy with the
diagonal resummation of Bender and Bettencourt.  Exploratory WKB iteration
portraits, presented separately, are visual aids only and establish
nothing about the global connection, which remains conjectural.
\end{abstract}

\section{Introduction}
\label{sec:introduction}

A single bosonic mode subject to a Kerr nonlinearity and coherent
driving is considered.  In quantum optics this operator is the rotating-frame Hamiltonian of
a one-photon-driven Kerr resonator; it also arises from a near-resonantly
driven Duffing oscillator after a rotating-wave approximation.  Its
eigenvalues then represent rotating-frame energies, or representatives of
laboratory-frame Floquet quasienergies.  The resulting multiphoton resonances,
avoided quasienergy crossings, and antiresonant response have been studied
perturbatively and numerically
\cite{DykmanFistul2005Antiresonance,PeanoThorwart2006Duffing,LeytonPeanoThorwart2012Noise}.

The driven Kerr Hamiltonian considered here is, up to the sign convention for
the detuning, precisely the single-site case of the driven Bose--Hubbard
Hamiltonian studied for arrays of coupled nonlinear cavities, obtained by
dropping the inter-site hopping term \cite{LeBoiteOrsoCiuti2013DrivenBH}.  The
same rotating-frame Hamiltonian, in the same convention as
\cite{LeBoiteOrsoCiuti2013DrivenBH}, is written out explicitly in
\cite{VicentiniEtAl2018CriticalSlowing}.

Particularly close to the present model, Maslova et al.\ studied the same
one-photon Kerr Hamiltonian (with a real drive), obtained its
eigenstates by Hamiltonian-matrix diagonalization, and displayed selected
states in the complex coherent-amplitude plane
\cite{MaslovaEtAl2019Squeezing}.  Such coherent-state and phase-space
descriptions are established prior art.  They should also be distinguished
from the extensive literature on driven-dissipative stationary states, where
exact generalized-\(P\) and Segal--Bargmann constructions solve a master
equation rather than the closed eigenvalue problem
\cite{DrummondWalls1980Bistability,RobertsClerk2020Kerr}.

The same physical setting also has experimentally established nonlinear-resonator
realizations: driven Josephson resonators exhibit dynamical switching in the
Duffing regime \cite{SiddiqiEtAl2004Bifurcation}, while strong single-photon
Kerr evolution has been resolved through Husimi and Wigner tomography
\cite{KirchmairEtAl2013Kerr}.  Extensions of the rotating-frame model by
higher-order nonlinearities have likewise been studied numerically in connection
with multiphoton resonances \cite{AnikinEtAl2021Multiphoton}; they are useful
physical context but are not the pure closed Kerr eigenproblem considered here.

Our aim is a description of the holomorphic Bargmann--Fock eigenfunctions,
including their growth and pointwise structure throughout the complex plane, together with
an analytic or quasi-analytic spectral condition not based on direct
Hamiltonian-matrix diagonalization.  The exact continued-fraction condition
below supplies the spectral starting point, while the sectorial asymptotic
constructions expose the remaining global connection problem.
Independently checked formal drive expansions and exploratory portraits
complete the hierarchy of results.

The main body of the paper is organized as follows.  In
Section~\ref{sec:hamiltonian-bargmann}, we introduce the driven Kerr
Hamiltonian, derive it from the laboratory-frame model by passing to the
frame rotating at the drive frequency, and cast the eigenvalue problem in
the Bargmann representation as a second-order differential equation.
Section~\ref{sec:operator-properties} comprises the self-adjointness,
lower-semiboundedness, and compact-resolvent results that together
establish discreteness of the spectrum.
Section~\ref{sec:heunwi} identifies the Bargmann eigenvalue equation with
the Whittaker--Ince equation, gives its normalized formal solution
explicitly by a three-term recurrence, and characterizes the spectrum by
the resulting continued-fraction equation.
Section~\ref{sec:liouville} removes the first-order derivative from the
Bargmann equation by a local Liouville transformation, casts the resulting
equation in normal form, classifies its singularities at the origin and at
infinity together with their formal local behaviours, and identifies the
normal form as a degenerate parameter case of the double-confluent Heun
family, including its Leaver--Figueiredo and Whittaker--Ince
specializations.
Section~\ref{sec:sectorial-infinity} constructs formal and then rigorous
asymptotic solutions of the equation at the irregular singular point at
infinity, comparing a direct formal series expansion with the ordinary
Liouville--Green approximation before proving, following Olver's method,
the existence of uniquely normalized sectorial solutions with explicit
error bounds.
Section~\ref{sec:strong-drive} analyzes the strong-drive limit, in which
two turning points coalesce and a parabolic-cylinder comparison equation
governs the local connection problem, states the resulting global
quantization condition as a conjecture, and independently checks the
conjectured energy asymptotics by a Bogoliubov operator expansion.
Section~\ref{sec:sectorial-solutions} develops an independent complex-WKB
treatment of the strong-drive equation directly in the Fedoryuk normal
form, identifies two candidate sectorial energy branches, and states, as a
further hypothesis, the global connection conditions under which one of
them reproduces the conjectured spectrum.
Section~\ref{sec:weak-drive} develops a regular perturbation expansion in
the weak-drive parameter \(\lambda=|F|/V\), obtained after a
number-operator phase rotation, and shows that the upper edge of the
resulting tridiagonal series sums exactly, in analogy with the diagonal
resummation of Bender and Bettencourt.
Section~\ref{sec:summary} collects, as a numbered list with explicit
epistemic status, every proven, conditional, and conjectural claim made in
the paper, and Section~\ref{sec:outlook} closes with a brief synthetic
assessment of what the work as a whole establishes.

\section{Hamiltonian and Bargmann representation}
\label{sec:hamiltonian-bargmann}

Let \(a\) and \(a^\dagger\) be the annihilation and creation operators on the
one-mode bosonic Fock space \(\Fock=\ell^2(\Nzero)\), with number operator
\(N=a^\dagger a\).  We study
\begin{equation}
  H=\frac{V}{2}a^{\dagger 2}a^2+\hbar\Delta a^\dagger a
    +F a^\dagger+\overline{F}\,a,
  \label{eq:H}
\end{equation}
where
\begin{equation}
  V>0,\qquad \Delta\in\mathbb{R},\qquad F\in\mathbb{C}.
  \label{eq:parameters}
\end{equation}
For physical orientation, consider the laboratory-frame Kerr-mode model
\begin{equation}
 H_{\mathrm{lab}}(t)=\hbar\omega_cN
 +\frac{V}{2}a^{\dagger2}a^2
 +F e^{-i\omega_dt}a^\dagger
 +\overline F e^{i\omega_dt}a.
 \label{eq:H-laboratory}
\end{equation}
If \(U(t)=e^{i\omega_dtN}\) and
\(\lvert\psi_{\mathrm{rot}}\rangle
=U(t)\lvert\psi_{\mathrm{lab}}\rangle\), then
\begin{equation}
 H_{\mathrm{rot}}
 =UH_{\mathrm{lab}}U^\dagger+i\hbar\dot U U^\dagger
 =\frac{V}{2}a^{\dagger2}a^2
  +\hbar(\omega_c-\omega_d)N
  +Fa^\dagger+\overline F a.
 \label{eq:rotating-frame-map}
\end{equation}
Thus the analysis is performed in the frame rotating at the drive frequency
\(\omega_d\), and \eqref{eq:H} uses the signed drive--cavity detuning
\(\Delta:=\omega_c-\omega_d\).  Sources using the opposite convention
\(\omega_d-\omega_c\) have detuning \(-\Delta\) relative to ours.  For the laboratory model
\eqref{eq:H-laboratory}, the displayed rotation is exact.  When
\eqref{eq:H} is obtained instead from a coordinate-space Duffing Hamiltonian,
discarding counter-rotating terms is an additional rotating-wave
approximation \cite{PeanoThorwart2006Duffing}.

Accordingly, \(E\) below is an eigenvalue of the autonomous rotating-frame
operator \eqref{eq:H}.  Relative to the periodic laboratory problem it is a
Floquet quasienergy representative, defined modulo \(\hbar\omega_d\) and
subject to the chosen rotating-frame and additive-energy conventions.  These
physical interpretations do not alter the notation or spectral problem used
in the remainder of the manuscript.
Since \(a^{\dagger 2}a^2=N(N-1)\), we write
\begin{equation}
  H=H_0+W,\qquad
  H_0=\frac{V}{2}N(N-1)+\hbar\Delta N,\qquad
  W=Fa^\dagger+\overline F\,a.
  \label{eq:split}
\end{equation}

Under the Bargmann transform, \(\Fock\) is identified with
the Hilbert space of entire functions \cite{Bargmann1961Hilbert}
\begin{equation}
 \mathcal{F}_{\mathrm B}
 =\left\{f\colon\C\to\C\ \text{entire}:\
 \|f\|_{\mathrm B}^2=
 \int_{\C}|f(z)|^2e^{-|z|^2}\,\frac{d^2z}{\pi}<\infty\right\},
 \label{eq:Bargmann-space}
\end{equation}
and \(a^\dagger\) and \(a\) act as multiplication by \(z\) and differentiation,
respectively.  Thus the eigenvalue equation \(H\Psi=E\Psi\) becomes
\begin{equation}
 \frac{V}{2}z^2\Psi''(z)+(\hbar\Delta z+\overline F)\Psi'(z)
 +(Fz-E)\Psi(z)=0,
 \label{eq:Bargmann-ode}
\end{equation}
with the physical condition \(\Psi\in\mathcal{F}_{\mathrm B}\).

\section{Basic operator-theoretic properties}
\label{sec:operator-properties}

In a thorough study whose main result is the absence of Bose--Einstein
condensation in one and two dimensions, Stachura, Pusz and Wojtkiewicz first
establish self-adjointness and lower semiboundedness for the Bose--Hubbard
Hamiltonian on a finite lattice \cite{StachuraPuszWojtkiewicz2020BoseHubbard}.
Their grand-canonical Hamiltonian, written
\(H_\Lambda(u)=u\hat N_2+T_\Lambda-\mu\hat N+\lambda L_\Lambda\) in their
Eq.~(16), contains the on-site interaction term
\(\hat N_2=\sum_x\hat n_x^2\), the hopping term \(T_\Lambda\), and the
U(1)-symmetry-breaking term \(L_\Lambda=\sum_x(c_x^\dagger+c_x)\) --- which is
exactly our drive term, named there from a thermodynamic rather than a
quantum-optical point of view.  Their Theorem~3 establishes: essential
self-adjointness on the finite-particle subspace; that the domain of the
resulting closure equals the domain of the interaction operator; an explicit
lower bound; and trace-class \(\exp(-\beta H)\) for every \(\beta>0\).  The
proof proceeds, as in our Lemma~\ref{lem:relative-bound}, by showing that
each of \(T'\), \(T''\),
\(\hat N\), and \(L\) is \(\hat N_2\)-bounded with relative bound zero (their
Proposition~11).  Consequently, the results of the present section are the
\(|\Lambda|=1\) case of their theorem, and trace-class \(\exp(-\beta H)\) is a
stronger statement than compactness of the resolvent, so it also subsumes our
Theorem~\ref{thm:compact-resolvent}.  In the dictionary between the two
treatments, our single site corresponds to \(|\Lambda|=1\); their
\(u\hat N_2\) and our \((V/2)N(N-1)\) differ by a term linear in \(N\),
absorbed into the detuning; and their \(\lambda\) is real, while our \(F\) is
complex and reduced to a real coupling by a phase rotation of the number
operator, a rotation used again in the weak-drive analysis of
Section~\ref{sec:weak-drive} below.  We nonetheless give a self-contained
proof for the single-mode case, since it produces explicit
constants matched to this case and is phrased in the Bargmann representation
used throughout the rest of the paper.

We first isolate the elementary relative-boundedness estimate used below.

\begin{lemma}\label{lem:relative-bound}
The driving term \(W\) is infinitesimally bounded with respect to \(H_0\): for
every \(\varepsilon>0\) there exists \(C_\varepsilon<\infty\) such that
\begin{equation}
 \|W\psi\|\leq \varepsilon\|H_0\psi\|
       +C_\varepsilon\|\psi\|,
 \qquad \psi\in\Dom(N^2).
 \label{eq:infinitesimal-bound}
\end{equation}
\end{lemma}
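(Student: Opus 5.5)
Everything is diagonal in the number basis \(\{e_n\}_{n\in\Nzero}\), so I will reduce the estimate to elementary inequalities between functions of \(n\). For \(\psi=\sum_n\psi_ne_n\in\Dom(N^2)\), \(H_0\) acts diagonally with eigenvalue \(h(n)=\frac V2n(n-1)+\hbar\Delta n\). The drive is controlled by the standard bounds
\[
\|a\psi\|^2=\langle\psi,N\psi\rangle\le\|(N+1)^{1/2}\psi\|^2,\qquad \|a^\dagger\psi\|^2=\langle\psi,(N+1)\psi\rangle=\|(N+1)^{1/2}\psi\|^2 .
\]
Both follow directly from \(a e_n=\sqrt n\,e_{n-1}\) and \(a^\dagger e_n=\sqrt{n+1}\,e_{n+1}\), and they are valid on \(\Dom(N^{1/2})\supset\Dom(N^2)\). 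Combining them gives
\[
\|W\psi\|\le 2|F|\,\|(N+1)^{1/2}\psi\|.
\]
So it suffices to bound \(\|(N+1)^{1/2}\psi\|\) by \(\varepsilon\|H_0\psi\|+C_\varepsilon\|\psi\|\).

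The key step is a pointwise inequality in \(n\). Since \(V>0\), \(h(n)\) grows quadratically. Choose \(n_0\) such that \(|h(n)|\ge \frac V4 n^2\) for all \(n\ge n_0\); this is possible because \(h(n)-\frac V4n^2=\frac V4n^2-(\frac V2-\hbar\Delta)n\) is eventually positive. Given \(\varepsilon>0\), I claim that for all \(n\),
\[
2|F|\sqrt{n+1}\le \varepsilon|h(n)|+C_\varepsilon .
\]
For \(n\ge n_0\), the right side is at least \(\varepsilon\frac V4n^2\), which dominates \(2|F|\sqrt{n+1}\) once \(n\) exceeds some \(n_1(\varepsilon)\). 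The finitely many remaining \(n<\max(n_0,n_1)\) are absorbed into \(C_\varepsilon:=\max_{n<\max(n_0,n_1)}2|F|\sqrt{n+1}\). No explicit optimization is needed, though an explicit constant is available: using \(\sqrt{n+1}\le \delta n^2+c/\delta^{1/3}\) gives \(C_\varepsilon=O(\varepsilon^{-1/3})\).

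To pass from the pointwise bound to norms, I square it: \(4|F|^2(n+1)\le(\varepsilon|h(n)|+C_\varepsilon)^2\). Multiplying by \(|\psi_n|^2\) and summing over \(n\) gives
\[
\|2|F|(N+1)^{1/2}\psi\|\le \|(\varepsilon|H_0|+C_\varepsilon)\psi\|\le\varepsilon\|H_0\psi\|+C_\varepsilon\|\psi\|
\]
by the triangle inequality. The middle expression is the norm of \(\sum_n(\varepsilon|h(n)|+C_\varepsilon)\psi_ne_n\), and \(\|\,|H_0|\psi\|=\|H_0\psi\|\) because \(|H_0|\) is diagonal with eigenvalues \(|h(n)|\). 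Combined with the first step, this yields \eqref{eq:infinitesimal-bound}. The only mild obstacle is the handling of small \(n\), where \(h(n)\) may vanish or become negative when \(\Delta<0\); the finite-exception argument in the previous paragraph disposes of it. I also need to confirm that \(\Dom(H_0)=\Dom(N^2)\), which holds because \(|h(n)|\asymp n^2\) for large \(n\).
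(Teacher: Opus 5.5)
Your proof is correct and follows essentially the paper's route: you bound \(\|W\psi\|\) by \(2|F|\,\|(N+\id)^{1/2}\psi\|\) and then lift a scalar inequality in \(n\) through the diagonal (spectral) calculus. The only difference is that you compare \(2|F|\sqrt{n+1}\) with \(\varepsilon|h(n)|+C_\varepsilon\) in a single step. The paper instead factors this through \(N^2\), using the two scalar inequalities \((n+1)^{1/2}\le\delta n^2+c_\delta\) and \(n^2\le c_1|p(n)|+c_2\).
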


\begin{proof}
The standard identities
\(
 \|a\psi\|=\|N^{1/2}\psi\|
\)
and
\(
 \|a^\dagger\psi\|=\|(N+\id)^{1/2}\psi\|
\)
show that \(W\) is bounded from \(\Dom((N+\id)^{1/2})\) to \(\Fock\).
For every \(\delta>0\), the scalar inequality
\((n+1)^{1/2}\leq\delta n^2+c_\delta\), \(n\in\Nzero\), and the spectral
calculus give
\begin{equation}
 \|(N+\id)^{1/2}\psi\|
 \leq \delta\|N^2\psi\|+c'_{\delta}\|\psi\|.
 \label{eq:Nhalf-Ntwo}
\end{equation}
Moreover, the real quadratic polynomial
\(p(n)=(V/2)n(n-1)+\hbar\Delta n\) has positive leading coefficient.  Hence
\(n^2\leq c_1|p(n)|+c_2\) on \(\Nzero\), and therefore
\begin{equation}
 \|N^2\psi\|\leq c_1\|H_0\psi\|+c_2\|\psi\|.
 \label{eq:Ntwo-Hzero}
\end{equation}
Combining these estimates, and then choosing \(\delta\) sufficiently small,
proves \eqref{eq:infinitesimal-bound}.
\end{proof}

\begin{proposition}[Self-adjointness]\label{prop:self-adjoint}
The operator \(H\) is self-adjoint on
\begin{equation}
 \Dom(H)=\Dom(H_0)=\Dom(N^2).
 \label{eq:domain}
\end{equation}
It is essentially self-adjoint on the finite-particle subspace
\(\Fock_{\mathrm{fin}}=\operatorname{span}\{|n\rangle:n\in\Nzero\}\).
\end{proposition}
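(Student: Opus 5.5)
The plan is to apply the Kato--Rellich theorem with the unperturbed operator \(H_0\) and the perturbation \(W\), using Lemma~\ref{lem:relative-bound}. The first step is to record that \(H_0\) is self-adjoint on \(\Dom(H_0)=\Dom(N^2)\). Indeed, \(H_0\) is diagonal in the Fock basis, acting on \(|n\rangle\) as multiplication by the real sequence \(p(n)=(V/2)n(n-1)+\hbar\Delta n\). A diagonal operator with real entries is self-adjoint on its maximal domain \(\{\psi:\sum|p(n)\psi_n|^2<\infty\}\). Since \(V>0\), the bounds \(c^{-1}n^2-c\le|p(n)|\le c\,n^2+c\) hold on \(\Nzero\), as already used in \eqref{eq:Ntwo-Hzero}. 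Hence this maximal domain coincides with \(\{\psi:\sum n^4|\psi_n|^2<\infty\}=\Dom(N^2)\).

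The second step is symmetry of \(W\) on \(\Dom(N^2)\). The operators \(a\) and \(a^\dagger\) are mutually adjoint on \(\Dom(N^{1/2})\supset\Dom(N^2)\). Therefore \(\langle\phi,W\psi\rangle=\langle W\phi,\psi\rangle\) for \(\phi,\psi\in\Dom(N^2)\), because \((Fa^\dagger)^*\supset\overline F a\). Lemma~\ref{lem:relative-bound} gives \(H_0\)-boundedness of \(W\) with relative bound \(0<1\). Kato--Rellich then yields that \(H=H_0+W\) is self-adjoint on \(\Dom(H_0)=\Dom(N^2)\), which is \eqref{eq:domain}. The same theorem also gives lower semiboundedness, since \(H_0\ge\min_n p(n)>-\infty\); this is not needed here.

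For essential self-adjointness on \(\Fock_{\mathrm{fin}}\), I would invoke the core part of Kato--Rellich: any core of \(H_0\) is a core of \(H\). So it suffices to show that \(\Fock_{\mathrm{fin}}\) is a core for \(H_0\). Given \(\psi\in\Dom(N^2)\), the truncations \(\psi^{(M)}=\sum_{n\le M}\psi_n|n\rangle\) satisfy \(\psi^{(M)}\to\psi\) and \(H_0\psi^{(M)}\to H_0\psi\), by dominated convergence of \(\sum_{n>M}(1+|p(n)|^2)|\psi_n|^2\). To transfer this to \(H\), I would use the graph-norm equivalence supplied by \eqref{eq:infinitesimal-bound} together with the reverse bound \(\|H_0\psi\|\le(1-\varepsilon)^{-1}(\|H\psi\|+C_\varepsilon\|\psi\|)\). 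This equivalence shows that the closure of \(H|_{\Fock_{\mathrm{fin}}}\) contains \(H|_{\Dom(N^2)}\), which is self-adjoint. Hence \(H|_{\Fock_{\mathrm{fin}}}\) is essentially self-adjoint.

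I do not expect a genuine obstacle, since the heavy estimate is Lemma~\ref{lem:relative-bound}. The only point needing care is the domain identification \(\Dom(H_0)=\Dom(N^2)\). The lower bound \(|p(n)|\gtrsim n^2\) must hold uniformly for all \(n\), including small \(n\) where \(p\) may vanish. This is handled by the additive constant, because only finitely many \(n\) are affected.
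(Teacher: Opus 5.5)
Your proposal is correct and is essentially the paper's own argument: Kato--Rellich applied with $H_0$ self-adjoint on $\Dom(N^2)$ and $W$ symmetric of relative bound zero by Lemma~\ref{lem:relative-bound}, followed by transferring the core $\Fock_{\mathrm{fin}}$ from $H_0$ to $H$ through equivalence of the graph norms. The paper states three points without proof, and you supply them: the identification $\Dom(p(N))=\Dom(N^2)$, the symmetry of $W$, and the truncation argument showing that $\Fock_{\mathrm{fin}}$ is a core for $H_0$.
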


\begin{proof}
The operator \(H_0=p(N)\) is self-adjoint on \(\Dom(N^2)\), while \(W\) is
symmetric there.  By Lemma~\ref{lem:relative-bound}, \(W\) is \(H_0\)-bounded
with relative bound zero.  The Kato--Rellich theorem
\cite{Kato1995Perturbation} therefore implies that
\(H=H_0+W\) is self-adjoint on \(\Dom(H_0)=\Dom(N^2)\).

The finite-particle subspace is a core for \(H_0\).  Since the graph norms of
\(H\) and \(H_0\) are equivalent under a perturbation of relative bound less
than one, it is also a core for \(H\).  This proves essential
self-adjointness of the restriction of \(H\) to \(\Fock_{\mathrm{fin}}\).
\end{proof}

\begin{proposition}[Lower semiboundedness]\label{prop:lower-bound}
There exists a constant \(C_{\mathrm{lb}}<\infty\), depending on \(V\),
\(\Delta\), and \(F\), such that
\begin{equation}
 \langle\psi,H\psi\rangle\geq-C_{\mathrm{lb}}\|\psi\|^2,
 \qquad \psi\in\Dom(N^2).
 \label{eq:lower-bound}
\end{equation}
\end{proposition}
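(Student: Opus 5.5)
The plan is to bound the quadratic form of \(H\) directly, splitting \(H=H_0+W\) as in \eqref{eq:split}. The term \(H_0\) will be handled by the spectral calculus and \(W\) by a Cauchy--Schwarz estimate in terms of \(N^{1/2}\). For \(\psi\in\Dom(N^2)\) we have \(\langle\psi,H_0\psi\rangle=\sum_n p(n)|\psi_n|^2\) with \(p(n)=(V/2)n(n-1)+\hbar\Delta n\). For the drive term,
\[
|\langle\psi,W\psi\rangle|=2|\operatorname{Re}(F\langle\psi,a^\dagger\psi\rangle)|\leq 2|F|\,\|\psi\|\,\|a\psi\|=2|F|\,\|\psi\|\,\|N^{1/2}\psi\|,
\]
using \(\langle\psi,\overline F a\psi\rangle=\overline{\langle\psi,Fa^\dagger\psi\rangle}\) and the identity \(\|a\psi\|=\|N^{1/2}\psi\|\) already used in the proof of Lemma~\ref{lem:relative-bound}.

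Next I would absorb this into \(H_0\) with Young's inequality:
\[
2|F|\,\|\psi\|\,\|N^{1/2}\psi\|\leq \langle\psi,N\psi\rangle+|F|^2\|\psi\|^2 .
\]
This gives
\[
\langle\psi,H\psi\rangle\geq\sum_n\bigl(p(n)-n\bigr)|\psi_n|^2-|F|^2\|\psi\|^2 .
\]
The polynomial \(q(n)=p(n)-n=(V/2)n^2+(\hbar\Delta-V/2-1)n\) has positive leading coefficient, since \(V>0\). Hence it is bounded below on \(\Nzero\) (indeed on \(\mathbb{R}\)) by its minimum \(-\bigl(\hbar\Delta-V/2-1\bigr)^2/(2V)\), or by \(\min(0,\cdot)\) of that value.

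It follows that
\[
C_{\mathrm{lb}}=|F|^2+\max\bigl\{0,(\hbar\Delta-V/2-1)^2/(2V)\bigr\}
\]
works, and it depends only on \(V,\Delta,F\). One can also take the sharper minimum over integers; that refinement is not needed.

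There is no serious obstacle here. The only points needing care are the following:
\begin{itemize}
\item the term-by-term expansion of \(\langle\psi,H_0\psi\rangle\) and of \(\langle\psi,N\psi\rangle\) is legitimate for \(\psi\in\Dom(N^2)\subset\Dom(N^{1/2})\), so all sums converge absolutely;
\item Proposition~\ref{prop:self-adjoint} guarantees that \(\langle\psi,H\psi\rangle\) is the form of the self-adjoint \(H\) on this domain.
\end{itemize}
Alternatively, the result could be deduced abstractly from Lemma~\ref{lem:relative-bound} via Kato--Rellich semiboundedness. The direct computation is preferable because it produces the explicit constant.
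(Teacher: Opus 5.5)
Your proof is correct and follows the same strategy as the paper: Cauchy--Schwarz on the drive term, absorption into the positive-leading-coefficient polynomial $H_0$, and completing the square. The only difference is the absorption step. You trade the drive against $\langle\psi,N\psi\rangle$ via $2ab\le a^2+b^2$ and then complete the square for $p(n)-n$. The paper instead absorbs $2|F|\sqrt{\langle N\rangle+1}$ into $\tfrac{V}{8}\langle N^2\rangle$ using a quartic Young inequality and $\langle N\rangle^2\le\langle N^2\rangle$, so your version is slightly shorter and gives a simpler explicit constant.
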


\begin{proof}
For normalized \(\psi\in\Dom(N^2)\), the Cauchy--Schwarz inequality gives
\begin{equation}
 |\langle\psi,W\psi\rangle|
 \leq 2|F|\,\langle\psi,(N+\id)\psi\rangle^{1/2}.
 \label{eq:form-drive}
\end{equation}
Because \(V>0\), scalar comparison of polynomials gives constants
\(C_0,C_1<\infty\) such that
\begin{equation}
 H_0\geq \frac{V}{4}N^2-C_0\id,
 \qquad
 2|F|\sqrt{x+1}\leq \frac{V}{8}x^2+C_1
 \quad (x\geq0).
 \label{eq:scalar-lower}
\end{equation}
Explicitly, completing the square gives the constant
\(C_0=(\hbar\Delta-V/2)^2/V\), and Young's inequality (exponents \(4\) and
\(4/3\)) gives the constant
\(C_1=2|F|+\tfrac32|F|^{4/3}(4/V)^{1/3}\) (for \(F=0\) this reduces to the
trivial bound \(C_1=0\), consistent with the formula).
Moreover, Cauchy--Schwarz implies
\(\langle N\rangle^2\leq\langle N^2\rangle\).  Hence
\begin{align}
 \langle\psi,H\psi\rangle
 &\geq \frac{V}{4}\langle N^2\rangle-C_0
       -2|F|\sqrt{\langle N\rangle+1} \\
 &\geq \frac{V}{8}\langle N^2\rangle-C_0-C_1
 \geq -C_0-C_1=:-C_{\mathrm{lb}}.
\end{align}
Homogeneity gives \eqref{eq:lower-bound} for arbitrary \(\psi\).
\end{proof}

\begin{theorem}[Compact resolvent and discrete spectrum]
\label{thm:compact-resolvent}
The Hamiltonian \(H\) has compact resolvent.  Consequently, by the standard
spectral theory of self-adjoint operators with compact resolvent
\cite{ReedSimon1978Analysis}, its spectrum is purely discrete and may be
listed, with multiplicities, as
\begin{equation}
 E_0\leq E_1\leq E_2\leq\cdots,\qquad E_j\longrightarrow+\infty.
 \label{eq:discrete-spectrum}
\end{equation}
Every eigenvalue has finite multiplicity, and the spectrum has no finite
accumulation point.
\end{theorem}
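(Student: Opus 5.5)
The plan is to reduce compactness of the resolvent of \(H\) to that of \(H_0\), and then read off the spectral consequences. The unperturbed operator \(H_0=p(N)\) is diagonal in the number basis, with eigenvalues \(p(n)=(V/2)n(n-1)+\hbar\Delta n\to+\infty\) because \(V>0\). For any \(z\notin\{p(n)\}\), the resolvent \((H_0-z)^{-1}\) is therefore diagonal with entries \((p(n)-z)^{-1}\to0\). It is the norm limit of finite-rank truncations, hence compact.

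To pass to \(H\), I will use the relative bound of Lemma~\ref{lem:relative-bound}. Fix \(\varepsilon=1/2\) and choose \(z=-\mu\) with \(\mu>0\) large. Since \(H_0+\mu\) is diagonal and positive for large \(\mu\), we have
\[
\|W(H_0+\mu)^{-1}\|\le \varepsilon\,\|H_0(H_0+\mu)^{-1}\|+C_\varepsilon\,\|(H_0+\mu)^{-1}\|\le \tfrac12+\frac{C_\varepsilon}{\mu-C_0'},
\]
where \(-C_0'\) is a lower bound for \(p\). This is less than one once \(\mu\) is large. A Neumann series then gives
\[
(H+\mu)^{-1}=(H_0+\mu)^{-1}\bigl(\id+W(H_0+\mu)^{-1}\bigr)^{-1},
\]
which is compact as the product of a compact operator with a bounded one. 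The identity itself is legitimate because \(\Dom(H)=\Dom(H_0)\) by Proposition~\ref{prop:self-adjoint}, and because \(-\mu\) lies in the resolvent set of the self-adjoint \(H\) by Proposition~\ref{prop:lower-bound}. Compactness at one point of the resolvent set propagates to all points through the first resolvent identity.

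An alternative route avoids the Neumann series. By the closed graph theorem, or directly from the graph-norm equivalence used in Proposition~\ref{prop:self-adjoint}, the inclusion of \(\Dom(H)\) with the \(H\)-graph norm into \(\Fock\) is compact. This holds because that norm dominates \(\|N^2\psi\|\) by \eqref{eq:Ntwo-Hzero}, and the set \(\{\psi:\|N^2\psi\|+\|\psi\|\le1\}\) is relatively compact in \(\ell^2(\Nzero)\): its tails \(\sum_{n>M}|\psi_n|^2\le M^{-4}\) are uniformly small.

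The spectral conclusions then follow from the spectral theorem for self-adjoint operators with compact resolvent. Pick \(\mu>C_{\mathrm{lb}}\). Then \((H+\mu)^{-1}\) is compact, self-adjoint and positive, with eigenvalues \(\eta_j\downarrow0\) of finite multiplicity and with \(0\) not an eigenvalue. The eigenvalues of \(H\) are \(E_j=\eta_j^{-1}-\mu\), so they have finite multiplicity and increase to \(+\infty\). Boundedness below from Proposition~\ref{prop:lower-bound} guarantees that they can be ordered from a minimum \(E_0\).

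The only slightly delicate step is the domain bookkeeping in the resolvent factorization, and Proposition~\ref{prop:self-adjoint} already supplies it. I therefore expect no real obstacle: the argument is routine once Lemma~\ref{lem:relative-bound} is in hand.
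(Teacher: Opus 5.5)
Your proof is correct. Your ``alternative route'' is exactly the paper's proof. The paper shows that $\Dom(N^2)$, with its graph norm, embeds compactly into $\Fock$ by the tail estimate \eqref{eq:tail-bound}. It then uses the graph-norm equivalence of $H$, $H_0$ and $N^2$ that follows from the relative bound, and composes the compact embedding with the bounded map $(H-\lambda)^{-1}\colon\Fock\to\Dom(H)$.

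Your main route is genuinely different. You factor $(H+\mu)^{-1}=(H_0+\mu)^{-1}\bigl(\id+W(H_0+\mu)^{-1}\bigr)^{-1}$ by a Neumann series, which reduces compactness to that of the diagonal operator $(H_0+\mu)^{-1}$. You then pass to the whole resolvent set with the first resolvent identity. This is the standard stability of compact resolvent under perturbations of relative bound less than one.

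What your route buys:
\begin{itemize}
\item It gives an explicit point $-\mu$ of the resolvent set and an explicit resolvent formula.
\item It independently shows that $H+\mu$ maps $\Dom(H_0)$ bijectively onto $\Fock$, so you do not strictly need Proposition~\ref{prop:lower-bound} for that step.
\end{itemize}

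What it costs:
\begin{itemize}
\item You must choose $\mu$ quantitatively. For instance, $\|H_0(H_0+\mu)^{-1}\|\le1$ needs $\mu\ge 2C_0'$ when $p$ takes negative values; your ``$\mu$ large'' covers this.
\item You need the extra resolvent-identity step.
\end{itemize}

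The paper's embedding argument treats every $\lambda$ in the resolvent set at once and needs only the graph-norm equivalence, not a norm estimate at a specific spectral parameter. Both arguments rest on the same two inputs: Lemma~\ref{lem:relative-bound} and the quadratic growth of $p(n)$.

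Your derivation of $E_j=\eta_j^{-1}-\mu\uparrow+\infty$ from the positive compact operator $(H+\mu)^{-1}$ also spells out what the paper leaves to standard theory.
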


\begin{proof}
The embedding \(\Dom(N^2)\hookrightarrow\Fock\), with \(\Dom(N^2)\) equipped
with its graph norm, is compact.  Indeed, if \(\psi=\sum_{n\geq0}c_n|n\rangle\)
belongs to a graph-norm bounded set, then
\begin{equation}
 \sum_{n>M}|c_n|^2
 \leq \frac{1}{(M+1)^4}\sum_{n>M}n^4|c_n|^2,
 \label{eq:tail-bound}
\end{equation}
so its tails tend to zero uniformly; finite-dimensional truncation then proves
relative compactness.

By Proposition~\ref{prop:self-adjoint}, \(\Dom(H)=\Dom(N^2)\), and the graph norms
of \(H\) and \(H_0\), hence also those of \(H\) and \(N^2\), are equivalent.
Thus the embedding \(\Dom(H)\hookrightarrow\Fock\) is compact.  For any
\(\lambda\) in the resolvent set of \(H\), the resolvent
\((H-\lambda)^{-1}\) maps \(\Fock\) boundedly into \(\Dom(H)\).  Composing it
with the compact embedding proves that \((H-\lambda)^{-1}\) is compact.
The spectral assertions now follow from self-adjointness, compactness of the
resolvent, and Proposition~\ref{prop:lower-bound}.
\end{proof}

\begin{remark}
Theorem~\ref{thm:compact-resolvent} proves spectral quantization without
diagonalizing \(H\) and without solving \eqref{eq:Bargmann-ode}.  In the
Bargmann picture, entire analyticity alone should not be confused with the
physical boundary condition: an eigenfunction must also satisfy the Gaussian
square-integrability condition \eqref{eq:Bargmann-space}.  The relation between
this global growth condition and a characteristic equation for \(E\) will be
developed in the sequel.
\end{remark}

\begin{remark}[Scope of the single-mode results]
The multi-site extension of Proposition~\ref{prop:self-adjoint},
Proposition~\ref{prop:lower-bound}, and Theorem~\ref{thm:compact-resolvent},
including both the hopping and the drive terms, is already established for
arbitrary finite lattices in \cite{StachuraPuszWojtkiewicz2020BoseHubbard}.
What this section contributes beyond that result is the explicit constants
\(C_0\), \(C_1\), and \(C_{\mathrm{lb}}\) produced directly for the
single-mode case, together with the formulation in the Bargmann
representation used throughout the rest of the paper.  The proof given here
is self-contained and does not rely on the lattice apparatus of
\cite{StachuraPuszWojtkiewicz2020BoseHubbard}.
\end{remark}

\section{Whittaker--Ince characteristic values}
\label{sec:heunwi}

Multiplying \eqref{eq:Bargmann-ode} by \(2/V\) identifies it with the
Whittaker--Ince equation
\begin{equation}
 z^2U''+(B_1+B_2z)U'+(B_3+qz)U=0,
 \label{eq:wi-equation}
\end{equation}
under the map
\begin{equation}
 B_1=\frac{2\overline F}{V},\qquad
 B_2=\frac{2\hbar\Delta}{V},\qquad
 B_3=-\frac{2E}{V},\qquad
 q=\frac{2F}{V}.
 \label{eq:wi-kerr-map}
\end{equation}
The genealogy of \eqref{eq:wi-equation} as a singular limit of the
double-confluent Heun equation is verified in
Section~\ref{sec:liouville}.  We first define the special-function objects
needed for the spectral statement.

\paragraph{Author-defined notation.}
The symbol \(\operatorname{HeunWI}\) introduced below is a definition
specific to this manuscript.  It is not notation of Figueiredo, El-Jaick,
Maple, Wolfram Language, or DLMF.

For \(B_1\ne0\), define
\begin{equation}
 \operatorname{HeunWI}(B_1,B_2,B_3,q;z)
 \label{eq:heunwi-formal}
\end{equation}
initially as the unique normalized \emph{formal} solution in
\(\C[[z]]\) of \eqref{eq:wi-equation} with constant coefficient \(U(0)=1\).
The origin is an irregular singular point, so this definition does not assert
that the formal series converges.  If it does converge, its sum is the
corresponding holomorphic germ; if its radius is infinite, the same notation
denotes the resulting entire function.  No phase convention is imposed on
the independent complex parameters.  On the physical slice,
\(q=\overline{B_1}\) and \(B_2\in\mathbb R\), as follows from
\eqref{eq:wi-kerr-map}.

The condition \(B_1\ne0\) is essential to this normalization: when \(B_1=0\),
the constant term of the equation does not determine the next formal
coefficient, and \(U(0)=1\) is inconsistent unless \(B_3=0\); further
degeneracies then require a separate Frobenius or Euler analysis.  The case
\(q=0\) with \(B_1\ne0\) is allowed by the abstract definition and includes
terminating exceptional series.  In the physical family, however, \(q=0\)
and \(B_1=0\) occur together and are treated separately below.

For \(B_1\ne0\), coefficient extraction directly from \eqref{eq:wi-equation}
gives the three-term recurrence that determines the formal object
\(\operatorname{HeunWI}\) explicitly.  Set
\begin{equation}
 A_k=k(k-1)+B_2k+B_3.
 \label{eq:taylor-Ak}
\end{equation}
Direct substitution into \eqref{eq:wi-equation} then gives
\begin{equation}
 B_1(k+1)c_{k+1}+A_kc_k+qc_{k-1}=0,\qquad
 c_{-1}=0,\quad c_0=1.
 \label{eq:taylor-general-recurrence}
\end{equation}
This recurrence determines every \(c_k\) uniquely from \(c_0=1\), proving
uniqueness of the normalized formal object.

\subsection{Derived recurrence and computational characteristic equation}

On the physical slice, put
\begin{equation}
 d_k=\frac V2k(k-1)+\hbar\Delta k,\qquad g=|F|^2.
 \label{eq:dk-g}
\end{equation}
Substitution of \(\Psi(z)=\sum_{k\ge0}c_kz^k\) into
\eqref{eq:Bargmann-ode} gives
\begin{equation}
 \overline F(k+1)c_{k+1}+(d_k-E)c_k+Fc_{k-1}=0,
 \qquad c_{-1}=0,
 \label{eq:taylor-recurrence}
\end{equation}
and its first equation is
\begin{equation}
 \overline F c_1=Ec_0.
 \label{eq:origin-compatibility}
\end{equation}
This is the physical-slice specialization of the general recurrence
\eqref{eq:taylor-general-recurrence} under the substitution
\eqref{eq:wi-kerr-map}, and inherits its uniqueness conclusion for
\(c_0=1\).

\begin{lemma}[Asymptotic dichotomy for the physical-slice recurrence]
\label{lem:asymptotic-dichotomy}
Fix \(V>0\), \(\Delta\in\mathbb R\), \(F\ne0\), and \(E\in\mathbb C\), and
let \((c_k)_{k\ge0}\) be a nonzero solution of \eqref{eq:taylor-recurrence}
with \(c_k\ne0\) for all large \(k\).  Then exactly one of the following
holds as \(k\to\infty\):
\begin{itemize}
\item[(a)] \(c_{k+1}/c_k=-\dfrac{V}{2\overline F}\,k\,(1+o(1))\)
 (\emph{dominant} behavior); or
\item[(b)] \(c_k/c_{k-1}=-\dfrac{2F}{V}\,\dfrac1{k^2}\,(1+o(1))\)
 (\emph{minimal} behavior).
\end{itemize}
\end{lemma}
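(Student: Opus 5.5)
The plan is to treat \eqref{eq:taylor-recurrence} as a Poincaré--Perron type recurrence whose coefficients grow at different polynomial rates, and to use the ratio \(r_k=c_{k+1}/c_k\), defined for large \(k\). Dividing \eqref{eq:taylor-recurrence} by \(c_k\) gives the Riccati-type relation
\begin{equation*}
 \overline F(k+1)\,r_k+(d_k-E)+\frac{F}{r_{k-1}}=0 .
\end{equation*}
Here \(d_k-E=\tfrac V2k^2(1+O(1/k))\). The Newton polygon of the recurrence has two slopes, which suggests two balances. In the first, the two leading terms balance, giving \(r_k\sim -\tfrac{V}{2\overline F}k\); this is case (a). In the second, the last two terms balance, giving \(1/r_{k-1}\sim -\tfrac{V}{2F}k^2\), that is, \(c_k/c_{k-1}\sim-\tfrac{2F}{V}k^{-2}\); this is case (b). The lemma asserts that every solution follows one of these balances and that the two cannot both hold.

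\emph{Exclusivity.} Exclusivity is immediate. In (a), \(|r_k|\to\infty\), while in (b), \(r_{k-1}\to0\). Both therefore cannot hold along the same sequence.

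\emph{Existence of a limit regime.} I would normalize by setting \(s_k=-\tfrac{2\overline F}{V}\,r_k/k\). The relation then becomes
\begin{equation*}
 s_k=1+\epsilon_k+\frac{\theta_k}{s_{k-1}},
\end{equation*}
with \(\epsilon_k=O(1/k)\) and \(\theta_k=\tfrac{4|F|^2}{V^2}\,\tfrac{1}{k(k-1)}(1+O(1/k))\to0\). The two candidate fixed behaviours are then \(s_k\to1\) and \(s_{k-1}\sim-\theta_k\to0\). I would proceed in three steps.

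First, I would show the attracting property of (a). If for some large \(k_0\) we have \(|s_{k_0}|\ge\eta\) for a fixed small \(\eta\), then \(|\theta_k/s_{k-1}|\le\theta_k/\eta\) is small. By induction \(|s_k|\ge 1/2\) for all \(k>k_0\). Then \(s_k-1=\epsilon_k+O(\theta_k)\to0\), which yields (a).

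Second, I would handle the complementary case, where \(|s_k|<\eta\) for all large \(k\). Rearranging the relation gives
\begin{equation*}
 s_{k-1}=\frac{\theta_k}{s_k-1-\epsilon_k}.
\end{equation*}
This forces \(s_{k-1}=-\theta_k(1+O(\eta)+O(1/k))\). A short bootstrap, with \(s_k=O(\theta_{k+1})\to0\), then sharpens \(O(\eta)\) to \(o(1)\), giving \(s_{k-1}\sim-\theta_k\). Translating back yields \(c_k/c_{k-1}\sim-\tfrac{2F}{V}k^{-2}\), which is (b).

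Third, I would note the dichotomy of hypotheses. Since \(\eta\) is fixed, either \(|s_k|\ge\eta\) for some large \(k\), in which case the first step applies with \(k_0\) that \(k\), or \(|s_k|<\eta\) eventually. The first step needs \(k_0\) larger than a threshold \(K(\eta)\) beyond which \(\theta_k,|\epsilon_k|\) are small. This is fine because "for some \(k\ge K(\eta)\)" is exactly the negation of "eventually \(<\eta\)".

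\emph{Main obstacle.} The delicate point is the uniform-in-\(k\) control in the first step. I need \(\sum\)-free pointwise bounds \(|\epsilon_k|+2\theta_k/\eta\le1/4\) for all \(k\ge K(\eta)\), so that the lower bound \(|s_k|\ge1/2\) propagates without degrading. I would choose \(\eta=1/2\) directly, which makes the induction self-consistent. I would also keep \(E\) complex throughout, so that all estimates use only absolute values. The hypothesis \(c_k\ne0\) eventually is used only to define \(r_k\) and \(s_k\). The hypothesis \(F\ne0\) is used to divide by \(\overline F\) and to make the constant in (b) meaningful.
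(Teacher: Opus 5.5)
Your approach is correct and differs from the paper's, but one algebra slip must be fixed first. Substituting \(r_{k-1}=-\tfrac{V}{2\overline F}(k-1)s_{k-1}\) gives \(F/r_{k-1}=-2|F|^2/[V(k-1)s_{k-1}]\). Dividing the relation by \(\tfrac V2k(k+1)\) then yields
\[
 \theta_k=-\frac{4|F|^2}{V^2(k-1)k(k+1)},
\]
which is negative and of order \(k^{-3}\), not \(+\tfrac{4|F|^2}{V^2}\tfrac{1}{k(k-1)}\). Your three steps use only \(\theta_k\to0\) and \(\theta_k\ne0\), so they survive unchanged; just write \(|\theta_k|\) in the uniform bound. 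The final translation does not survive. With your \(\theta_k\), the relation \(s_{k-1}\sim-\theta_k\) would give \(c_k/c_{k-1}\sim+\tfrac{2F}{V}k^{-1}\), which contradicts (b). With the corrected value it gives exactly \(c_k/c_{k-1}\sim-\tfrac{2F}{V}k^{-2}\). The rest (exclusivity, the invariant region \(|s_k|\ge\tfrac12\), the bootstrap, and the exhaustive split on \(\eta=\tfrac12\)) is sound.

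The paper does not argue directly. It normalizes to \(c_{k+1}+u_kc_k+v_kc_{k-1}=0\) and reads off \(u_k\sim\tfrac{V}{2\overline F}k\) and \(v_k\sim\tfrac{F}{\overline F}k^{-1}\). It then checks the nondegeneracy condition \(\alpha-\beta/2=3/2>0\) and invokes the Perron--Kreuser-type classification quoted from Gautschi's Theorem~2.3(a).

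Your invariant-region and bootstrap argument is self-contained and elementary, and it proves exactly the stated dichotomy for any given eventually nonzero solution. The citation buys the paper something your argument does not deliver: the existence of a minimal solution and its uniqueness up to scale. Both are used immediately afterwards in Theorem~\ref{thm:bargmann-characteristic}, where the Pincherle step needs existence.

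Uniqueness you could add cheaply from the Casoratian identity \(W_k=\tfrac{F}{\overline F(k+1)}W_{k-1}\). It gives \(|W_k|=|W_0|/(k+1)!\), which is incompatible with two independent solutions both of size roughly \((2|F|/V)^k/(k!)^2\). Existence would need a separate argument, for example a contraction for the backward ratio map, or Pincherle/Perron theory.
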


\begin{proof}
We invoke the classification of normalized three-term recurrences
\(c_{k+1}+u_kc_k+v_kc_{k-1}=0\) with \(u_k\sim P\,k^\alpha\) and
\(v_k\sim Q\,k^\beta\) as \(k\to\infty\), \(P,Q\ne0\)
\cite[Theorem~2.3(a)]{Gautschi1967Recurrence}: in the nondegenerate case
\(\alpha>\beta/2\), the recurrence admits, up to scale, a unique minimal
solution with ratio \(c_k/c_{k-1}\sim-(Q/P)\,k^{\beta-\alpha}\), and every
solution not proportional to it has ratio \(c_{k+1}/c_k\sim-P\,k^\alpha\).

Dividing \eqref{eq:taylor-recurrence} by \(\overline F(k+1)\) (valid since
\(F\ne0\)) puts it in this normalized form with
\begin{equation}
 u_k=\frac{d_k-E}{\overline F(k+1)},\qquad
 v_k=\frac F{\overline F(k+1)}.
\end{equation}
Since \(d_k-E=\tfrac V2k^2+(\hbar\Delta-\tfrac V2)k-E\), we read off
\(u_k\sim\dfrac V{2\overline F}\,k\), i.e.\ \(\alpha=1\),
\(P=V/(2\overline F)\), with \(P\ne0\) since \(V>0\); and
\(v_k\sim\dfrac F{\overline F}\,k^{-1}\), i.e.\ \(\beta=-1\),
\(Q=F/\overline F\), with \(Q\ne0\) since \(F\ne0\).  Then
\begin{equation}
 \alpha-\frac\beta2=1-\left(-\frac12\right)=\frac32>0
\end{equation}
for every \(V>0\), \(\Delta\in\mathbb R\), \(E\in\mathbb C\), and
\(F\ne0\): the nondegenerate case holds unconditionally, and the only
hypothesis used is \(F\ne0\).  The two ratios are
\begin{equation}
 -P\,k^\alpha=-\frac V{2\overline F}k,
 \qquad
 -\frac QP\,k^{\beta-\alpha}
 =-\frac F{\overline F}\cdot\frac{2\overline F}V\,k^{-2}
 =-\frac{2F}Vk^{-2},
\end{equation}
the second after cancelling \(\overline F\).  By the classification, the
minimal branch is unique up to scale with ratio
\(c_k/c_{k-1}\sim-(2F/V)k^{-2}\) --- case (b) --- and every solution not
proportional to it has ratio \(c_{k+1}/c_k\sim-(V/(2\overline F))k\) ---
case (a).
\end{proof}

\begin{theorem}[Continued-fraction characterization of the spectrum]
\label{thm:bargmann-characteristic}
For the physical family \eqref{eq:parameters} with \(F\ne0\), let \((c_k)\)
be the sequence determined by \eqref{eq:taylor-recurrence} with \(c_0=1\).
The following are equivalent:
\begin{itemize}
\item[(i)] the series \(\sum_{k\ge0}c_kz^k\) has infinite radius of
 convergence;
\item[(ii)] its sum belongs to the Bargmann space \(\mathcal F_{\rm B}\);
\item[(iii)] \((c_k)\) is the minimal solution, in the sense of
 Lemma~\ref{lem:asymptotic-dichotomy}, of \eqref{eq:taylor-recurrence};
\item[(iv)] \(E\) satisfies
\begin{equation}
 E-\cfrac{g}{
 E-d_1-\cfrac{2g}{
 E-d_2-\cfrac{3g}{
 E-d_3-\ddots}}}=0.
 \label{eq:characteristic-cf}
\end{equation}
\end{itemize}
We write \(\Xi(E)\) for the left-hand side of \eqref{eq:characteristic-cf},
so that condition (iv) reads \(\Xi(E)=0\).
The fraction in (iv) is the Pincherle limit belonging to the minimal
solution of \eqref{eq:taylor-recurrence}; at a pole the equation is
understood in its equivalent cross-multiplied form.  The set of \(E\)
satisfying these equivalent conditions is the spectrum of \(H\).  Every
eigenvalue is simple, with eigenfunction
\begin{equation}
 \Psi_n(z)=\mathcal N_n\,
 \operatorname{HeunWI}\!\left(
 \frac{2\overline F}{V},\frac{2\hbar\Delta}{V},
 -\frac{2E_n}{V},\frac{2F}{V};z\right);
 \label{eq:heunwi-eigenfunction}
\end{equation}
reality, discreteness, absence of a finite accumulation point, and
completeness follow from Proposition~\ref{prop:self-adjoint} and
Theorem~\ref{thm:compact-resolvent}.  These claims are restricted to the physical
slice; they are not asserted for arbitrary complex \(B_1,B_2,q\).
\end{theorem}

\begin{proof}
By Lemma~\ref{lem:asymptotic-dichotomy}, there is a unique minimal branch
up to scale, with leading order \(c_k^{\min}/c_{k-1}^{\min}\sim
-(2F/V)k^{-2}\), whereas every independent branch is dominant and
satisfies
\begin{equation}
 \frac{c_{k+1}^{\rm dom}}{c_k^{\rm dom}}
 \sim-\frac{V}{2\overline F}k.
 \label{eq:dominant-ratio}
\end{equation}
Given minimality, the subleading order in \(k\) follows from the
recurrence itself, by elementary manipulation: writing
\(s_k:=c_k/c_{k-1}\), \eqref{eq:taylor-recurrence} rearranges to the exact
relation \(s_k=-F/\bigl[(d_k-E)+\overline F(k+1)s_{k+1}\bigr]\).  By the
Lemma, \(s_{k+1}=O(k^{-2})\) for the minimal branch, so
\(\overline F(k+1)s_{k+1}=O(k^{-1})\), which is of relative order
\(k^{-3}\) against \(d_k-E\sim\tfrac V2k^2\); hence
\(s_k=-F/(d_k-E)\cdot\bigl(1+O(k^{-3})\bigr)\).  Expanding
\(d_k-E=\tfrac V2k^2\bigl[1+(B_2-1)/k+O(k^{-2})\bigr]\) and inverting the
bracket gives
\begin{equation}
 \frac{c_k^{\min}}{c_{k-1}^{\min}}
 =-\frac{2F}{V}\frac1{k^2}
 \left[1+\frac{1-B_2}{k}+O(k^{-2})\right].
 \label{eq:minimal-ratio}
\end{equation}
This refines the leading order given by the Lemma under the minimality
already established; it is not an independent proof of minimality.
The forward formal series therefore has infinite radius exactly when the
origin condition selects the minimal branch, proving (i)\(\Leftrightarrow\)(iii).
Pincherle's theorem \cite[Theorem~1.1]{Gautschi1967Recurrence} gives
\begin{equation}
 \frac{c_1^{\min}}{c_0^{\min}}
 =\frac{F}{
 E-d_1-\cfrac{2g}{
 E-d_2-\cfrac{3g}{
 E-d_3-\ddots}}},
 \label{eq:minimal-first-ratio}
\end{equation}
whose compatibility with \eqref{eq:origin-compatibility} is precisely
\eqref{eq:characteristic-cf}, proving (iii)\(\Leftrightarrow\)(iv).
Multiplying \eqref{eq:minimal-ratio} yields
\begin{equation}
 c_k^{\min}
 =\kappa_{\min}\,\frac{(-2F/V)^k}{(k!)^2}
 k^{1-B_2}\left[1+O(k^{-1})\right].
 \label{eq:minimal-coefficients}
\end{equation}
Thus the entire series has order \(1/2\), type
\(2\sqrt{2|F|/V}\), and
\(\sum_{k\ge0}k!|c_k|^2<\infty\), so the minimal branch belongs to
\(\mathcal F_{\rm B}\).  Conversely, a Bargmann coefficient sequence cannot
contain the factorially growing dominant branch, so (ii)\(\Leftrightarrow\)(iii)
as well, completing the equivalence of (i)--(iv).  The set of \(E\)
satisfying these equivalent conditions is, by definition, the set of
eigenvalues of \(H\) in the Bargmann representation, hence the spectrum of
\(H\); reality, discreteness, absence of a finite accumulation point, and
completeness follow from Proposition~\ref{prop:self-adjoint} and
Theorem~\ref{thm:compact-resolvent}.  Simplicity follows because for \(F\ne0\) the
coefficient of \(z^0\) fixes the first derivative from the value at the
origin, while a solution with zero value there vanishes identically.  These
claims are restricted to the physical slice; they are not asserted for
arbitrary complex \(B_1,B_2,q\).
\end{proof}

\begin{remark}[Undriven spectral problem]
For \(F=0\), neither the preceding normalization nor any formula obtained by
division by \(F\), \(\overline F\), \(B_1\), or \(q\) applies.  Directly,
\eqref{eq:Bargmann-ode} is an Euler equation.  Substitution of
\(\Psi=z^\lambda\) gives
\[
 \frac V2\lambda(\lambda-1)+\hbar\Delta\lambda-E=0.
\]
A nonzero single-valued entire Bargmann solution requires
\(\lambda=n\in\Nzero\).  Since
\(\|z^n\|_{\rm B}^2=n!\), the normalized eigenfunctions and energies are
\begin{equation}
 \Psi_n^{(0)}(z)=\frac{z^n}{\sqrt{n!}},\qquad
 E_n^{(0)}=\frac V2n(n-1)+\hbar\Delta n.
 \label{eq:undriven-spectrum}
\end{equation}
For \(n>0\), \(\Psi_n^{(0)}(0)=0\), so the driven normalization \(c_0=1\)
cannot be continued to this exactly degenerate point.  The forward Taylor
recurrence divided by \(B_1\) and the square-root expansion divided by
\(\sqrt{-q}\) both cease to exist.  Thus direct substitution into formulas
derived for \(B_1q\ne0\) is invalid; the exact boundary is distinct from
studying the physical limit \(F\to0\) with \(\overline F=F^*\).
\end{remark}

No novelty is claimed for the continued-fraction technique itself.
The \(F\to0\) reduction of \eqref{eq:characteristic-cf} is singular: roots
near \(d_n\), \(n\ge1\), emerge
through successive tail poles rather than by setting \(g=0\) only at the
outermost level.  Relative perturbation of the diagonal operator gives the
controlled multiset limit \(\{E_n\}\to\{d_n\}\), including
\eqref{eq:undriven-spectrum}, with possible relabelling at degeneracies.

\subsection{Taylor expansion at the origin}

For any possible Pad\'e and Hermite--Pad\'e constructions (not considered
in this work), it is useful to record the low-order behaviour of the
recurrence
\eqref{eq:taylor-general-recurrence}, with \(A_k\) as in
\eqref{eq:taylor-Ak}.  Writing \(A_1=B_2+B_3\), \(A_2=2+2B_2+B_3\), and
\(A_3=6+3B_2+B_3\), the first coefficients are
\begin{align}
 c_1={}&-\frac{B_3}{B_1},\nonumber\\
 c_2={}&\frac{B_3A_1-qB_1}{2B_1^2},\nonumber\\
 c_3={}&\frac{qB_1(A_2+2B_3)-B_3A_1A_2}{6B_1^3},\nonumber\\
 c_4={}&\frac{A_3B_3A_1A_2-qB_1A_3(A_2+2B_3)
 -3qB_1B_3A_1+3q^2B_1^2}{24B_1^4}.
 \label{eq:taylor-first-four}
\end{align}
Consequently,
\begin{equation}
 \operatorname{HeunWI}(B_1,B_2,B_3,q;z)
 =1+c_1z+c_2z^2+c_3z^3+c_4z^4+O(z^5).
 \label{eq:heunwi-taylor-four}
\end{equation}

On the physical slice, let \(R_k=E-d_k\), with \(d_k\) and \(g=|F|^2\)
as in \eqref{eq:dk-g}.  Formula~\eqref{eq:taylor-first-four} becomes
\begin{align}
 c_1={}&\frac{E}{\overline F},\nonumber\\
 c_2={}&\frac{R_1E-g}{2\overline F^{\,2}},\nonumber\\
 c_3={}&\frac{R_2(R_1E-g)-2gE}{6\overline F^{\,3}},\nonumber\\
 c_4={}&\frac{R_3[R_2(R_1E-g)-2gE]-3g(R_1E-g)}
 {24\overline F^{\,4}}.
 \label{eq:taylor-physical-four}
\end{align}
Direct symbolic substitution cancels the coefficients through \(z^3\);
the omitted \(c_5z^5\) first enters the residual at order \(z^4\).
The routine \texttt{taylor\_coefficients} in the accompanying local-series
module implements
\eqref{eq:taylor-general-recurrence} to arbitrary order without converting
the caller's exact, symbolic, complex, or arbitrary-precision scalar type.
It explicitly rejects \(B_1=0\) and invalid orders.

\subsection{Historical scope of continued fractions}

The verified early quantum-optics uses concern a related but different
problem.  Drummond and Walls solve the stationary generalized-\(P\)
Fokker--Planck equation for an open, coherently driven Kerr cavity
\cite{DrummondWalls1980Bistability}.  Vogel and Risken applied matrix
continued fractions to its quantum Fokker--Planck equation in 1987
\cite{VogelRisken1987Matrix}.  They subsequently expanded the open master
equation in Fourier--Laguerre modes and used a
\emph{matrix}-valued continued fraction: their recurrence (4.13), transfer
relation (5.4), and determinant condition (5.7) compute stationary
quasidistributions and Liouvillian decay rates, not eigenenergies of the
closed Hamiltonian \cite{VogelRisken1988Tunneling}.  Their later extension
again treats quasidistributions and the lowest nonzero master-equation
eigenvalue \cite{VogelRisken1989Quasiprobability}.  Haake, Risken, Savage,
and Walls derive a damped nonlinear-oscillator master equation and its
stationary state, not the scalar fraction \eqref{eq:characteristic-cf}
\cite{HaakeEtAl1986Master}.

None of these inspected formulas is algebraically identical to
\eqref{eq:characteristic-cf}: the older recurrences are block/matrix
recurrences for a density operator or phase-space distribution and include
damping.  They do establish important methodological prior art.  The present
backward search is not exhaustive, so absence of an inspected exact match is
not a priority claim; the result here is stated only as a HeunWI
interpretation and derivation for the closed spectral problem.

In summary, this section identifies the Bargmann eigenvalue equation
\eqref{eq:Bargmann-ode} with the Whittaker--Ince equation
\eqref{eq:wi-equation} and gives its normalized formal solution
\(\operatorname{HeunWI}\) explicitly by the three-term recurrence
\eqref{eq:taylor-general-recurrence}.  On the physical slice with
\(F\ne0\), Theorem~\ref{thm:bargmann-characteristic} shows that
convergence of the resulting series to a Bargmann eigenfunction,
selection of the minimal solution among the two asymptotic branches of
the recurrence, and satisfaction of the continued-fraction equation
\eqref{eq:characteristic-cf} are equivalent, and that their common
solution set is the spectrum of \(H\).  The case \(F=0\) is treated
separately above, and the \(F\to0\) limit recovers the unperturbed
energies \eqref{eq:undriven-spectrum}.

\section{Liouville normal form and the double-confluent Heun identification}
\label{sec:liouville}

We next remove the first derivative in \eqref{eq:Bargmann-ode}.  This is a
local transformation on a simply connected subdomain of \(\C\setminus\{0\}\),
on which a branch of \(\operatorname{Log} z\) has been fixed.  We work
directly in the parameters \(B_1,B_2,B_3,q\) already introduced by the map
\eqref{eq:wi-kerr-map} of Section~\ref{sec:heunwi}, rather than introducing a
second set of symbols for the same four quantities.  Dividing
\eqref{eq:wi-equation} by \(z^2\) --- equivalently, dividing
\eqref{eq:Bargmann-ode} by \((V/2)z^2\) and using \eqref{eq:wi-kerr-map} ---
the equation reads
\begin{equation}
 \Psi''+P(z)\Psi'+Q(z)\Psi=0,
 \qquad
 P(z)=\frac{B_2}{z}+\frac{B_1}{z^2},
 \qquad
 Q(z)=\frac{q}{z}+\frac{B_3}{z^2}.
 \label{eq:standard-ode}
\end{equation}

\begin{proposition}[Liouville normal form]
On the chosen branch domain, the substitution
\begin{equation}
 \Psi(z)=A(z)\Phi(z),
 \qquad
 A(z)=z^{-B_2/2}\exp\!\left(\frac{B_1}{2z}\right)
 \label{eq:liouville-substitution}
\end{equation}
transforms \eqref{eq:standard-ode} into
\begin{equation}
 \Phi''(z)+\left[
 -\frac{B_1^2}{4z^4}
 +\frac{B_1(1-B_2/2)}{z^3}
 +\frac{B_3+B_2/2-B_2^2/4}{z^2}
 +\frac{q}{z}
 \right]\Phi(z)=0.
 \label{eq:liouville-normal-form}
\end{equation}
\end{proposition}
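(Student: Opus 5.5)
I will use the standard Liouville normal-form formula for second-order equations. For \(\Psi''+P\Psi'+Q\Psi=0\), put \(\Psi=A\Phi\) with \(A'/A=-P/2\). The first-derivative term then cancels identically. The new potential is the classical invariant
\begin{equation*}
 I(z)=Q-\tfrac14P^2-\tfrac12P',
\end{equation*}
so that \(\Phi''+I\Phi=0\).

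\textbf{The multiplier.} With the branch of \(\operatorname{Log}z\) fixed on the simply connected domain, I integrate \(-P/2=-\tfrac{B_2}{2z}-\tfrac{B_1}{2z^2}\). This gives \(\log A=-\tfrac{B_2}{2}\operatorname{Log}z+\tfrac{B_1}{2z}\), which is exactly \eqref{eq:liouville-substitution}. A short verification avoids quoting the invariant formula blindly. Write \(\Psi'=A(\Phi'+\ell\Phi)\) and \(\Psi''=A(\Phi''+2\ell\Phi'+(\ell'+\ell^2)\Phi)\), with \(\ell=A'/A=-P/2\). Substituting, the coefficient of \(\Phi'\) is \(2\ell+P=0\). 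The coefficient of \(\Phi\) is \(\ell'+\ell^2+P\ell+Q=Q-\tfrac12P'-\tfrac14P^2\). Dividing by \(A\) is legitimate because \(A\) is nonvanishing on the branch domain.

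\textbf{The potential.} The only real work is expanding \(I\):
\begin{align*}
 P^2&=\frac{B_1^2}{z^4}+\frac{2B_1B_2}{z^3}+\frac{B_2^2}{z^2},\\
 P'&=-\frac{B_2}{z^2}-\frac{2B_1}{z^3}.
\end{align*}
Collecting powers gives the following coefficients:
\begin{itemize}
\item \(z^{-4}\): \(-B_1^2/4\);
\item \(z^{-3}\): \(-B_1B_2/2+B_1=B_1(1-B_2/2)\);
\item \(z^{-2}\): \(B_3-B_2^2/4+B_2/2\);
\item \(z^{-1}\): \(q\).
\end{itemize}
These reproduce \eqref{eq:liouville-normal-form}.

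There is no genuine obstacle here; the step most prone to error is the sign bookkeeping in the \(z^{-3}\) and \(z^{-2}\) coefficients. I will double-check these against the \(-\tfrac12P'\) contribution.

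I will also note that the computation is purely local and formal. It therefore holds for arbitrary complex \(B_1,B_2,B_3,q\), not only on the physical slice. The branch choice affects only the factor \(z^{-B_2/2}\), and this factor cancels from the resulting equation.
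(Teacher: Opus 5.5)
Your proposal is correct and follows the same route as the paper: you choose \(A'/A=-P/2\) to remove the \(\Phi'\) term, then expand \(Q-\tfrac12P'-\tfrac14P^2\) power by power. Your coefficients for \(z^{-4},\dots,z^{-1}\) are right, and your explicit check of the \(\Phi\)-coefficient just spells out a step the paper states without computation.
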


\begin{proof}
Substitution of \(\Psi=A\Phi\) shows that the coefficient of \(\Phi'\) is
\(2A'/A+P\).  It vanishes precisely when
\begin{equation}
 \frac{A'}{A}=-\frac{P}{2},
 \qquad
 A=C\exp\!\left(-\frac12\int P(z)\,dz\right)
   =Cz^{-B_2/2}\exp\!\left(\frac{B_1}{2z}\right).
\end{equation}
The nonzero constant \(C\) may be absorbed into \(\Phi\), giving
\eqref{eq:liouville-substitution}.  The remaining coefficient is
\begin{equation}
 Q-\frac{P'}{2}-\frac{P^2}{4}.
\end{equation}
Inserting the functions in \eqref{eq:standard-ode} and collecting the powers
\(z^{-4},\ldots,z^{-1}\) gives \eqref{eq:liouville-normal-form}.
\end{proof}

The rank terminology and the separation of formal exponential and power
factors used here belong to the general asymptotic theory of singular linear
differential equations \cite{Olver1997Asymptotics,Fedoryuk1993Asymptotic}.
These citations provide background only: no theorem from either monograph is
being invoked here as already verified for the present ramified problem.

\begin{proposition}[Singularity classification]
\label{prop:singularity-classification}
Assume \(F\neq0\).  On the Riemann sphere, the only singular points of
\eqref{eq:standard-ode} are \(z=0\) and \(z=\infty\).  In scalar normal-form
terminology, the origin is an unramified irregular singular point of
Poincar\'e rank (1), while infinity is a ramified irregular singular point of
rank (1/2).  The latter becomes an unramified rank-(1) point after the
quadratic covering \(z=\zeta^2\).  The two leading formal behaviours at
infinity are
\begin{equation}
 \Phi_{\infty,\pm}(z)
 \sim z^{1/4}\exp\!\left(\pm2\sqrt{-qz}\right),
 \qquad
 \Psi_{\infty,\pm}(z)
 \sim z^{1/4-B_2/2}
       \exp\!\left(\pm2\sqrt{-qz}\right),
 \label{eq:infinity-formal-behaviours}
\end{equation}
on sectors with fixed branches of the square root.
\end{proposition}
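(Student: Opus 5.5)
The plan is to read everything off the coefficients $P$ and $Q$ in \eqref{eq:standard-ode} and off the normal form \eqref{eq:liouville-normal-form}.

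\textbf{Finite singular points.} $P$ and $Q$ are rational functions whose only finite poles are at $z=0$. So every $z\ne0$ is an ordinary point, and it remains to examine $0$ and $\infty$.

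\textbf{The origin.} Since $F\ne0$, we have $B_1=2\overline F/V\ne0$. Hence $P$ has a pole of order two at $z=0$, which exceeds the order one allowed at a regular singular point, so the origin is irregular. To fix the rank I will use the normal form. Its potential $R(z)$, the bracket in \eqref{eq:liouville-normal-form}, satisfies
\[
 R(z)=-\frac{B_1^2}{4z^4}\bigl(1+O(z)\bigr)\qquad(z\to0).
\]
The leading exponent is $-4$, which is even. The formal exponential $\exp(\pm\int\sqrt{-R})$ is then governed by $\pm\int (B_1/2)z^{-2}\,dz$, giving $\exp(\mp B_1/(2z))$. This is an integer power of $1/z$, so the point is unramified of rank $1$. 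Concretely, the two formal solutions behave like $z^{\rho}e^{\mp B_1/(2z)}$, with exponents $\rho$ obtained from the $z^{-3}$ coefficient. Translating back through $A(z)$, one of them corresponds to the formal power series used for $\operatorname{HeunWI}$.

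\textbf{Infinity.} I substitute $z=1/t$. The normal form becomes
\[
 \Phi_{tt}+\frac{2}{t}\Phi_t+t^{-4}R(1/t)\Phi=0,
\]
and since $R(1/t)=qt\,(1+O(t))$, the potential $t^{-4}R(1/t)\sim q\,t^{-3}$. This has a pole of odd order exceeding two at $t=0$. Equivalently, as $z\to\infty$ we have $R(z)\sim q/z$ with $q=2F/V\ne0$. The exponent $-1$ is odd, so $\int\sqrt{-q/z}\,dz=2\sqrt{-qz}$ involves $z^{1/2}$. This gives a ramified irregular point of rank $1/2$.

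\textbf{The quadratic covering.} Put $z=\zeta^2$ and $\Phi=\zeta^{1/2}Y$ to remove the new first derivative. The resulting potential is $4\zeta^2R(\zeta^2)+\text{const}\cdot\zeta^{-2}$, which tends to $4q+O(\zeta^{-2})$ as $\zeta\to\infty$. That is the standard form of an unramified rank-$1$ point, with exponentials $e^{\pm2\sqrt{-q}\,\zeta}$.

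\textbf{The formal behaviours at infinity.} I apply the Liouville--Green ansatz $\Phi\sim R^{-1/4}\exp(\pm i\int R^{1/2})$. With $R\sim q/z$,
\[
 R^{-1/4}\propto z^{1/4},\qquad i\int\sqrt{q/z}\,dz=2\sqrt{-qz}
\]
on a fixed branch. The $z^{-2}$ term of $R$ and the correction term $-\tfrac{5}{16}(R')^2/R^3+\tfrac14R''/R^2$ only affect the subleading series. This yields $\Phi_{\infty,\pm}$. Multiplying by $A(z)$, where $e^{B_1/(2z)}\to1$, gives $z^{1/4-B_2/2}$ and hence $\Psi_{\infty,\pm}$.

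\textbf{Main obstacle.} The delicate step is the rank bookkeeping at infinity. In the original $\Psi$-equation the naive Poincaré-rank formula counts only pole orders after $z=1/t$. It must be supplemented by the Katz-invariant or Newton-polygon viewpoint to justify the fractional rank $1/2$ rather than an integer. I would handle this cleanly by working with the scalar normal form, as above, and checking that the covering $z=\zeta^2$ produces an integer leading exponent.
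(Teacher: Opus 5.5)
Your proposal is correct and follows essentially the paper's route. In both, rank and ramification are read off from the pole order, and its parity, of the first-derivative-free potential at $0$ and at $\infty$; your inverted normal-form potential $t^{-4}R(1/t)$ coincides exactly with the paper's coefficient after removing the first derivative from \eqref{eq:infinity-inverted}, since the extra $2\Phi_t/t$ term contributes nothing to the potential.

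The only variations are these. You obtain the leading behaviours at infinity from a Liouville--Green ansatz on the normal form, where the paper instead balances $\Psi=e^{\lambda\sqrt z}z^\rho(1+o(1))$ directly in the $\Psi$-equation; your ansatz is fine because $\sqrt{R}$ has no $z^{-1}$ term, so no logarithm can shift the $z^{1/4}$ power. You also explicitly verify the covering claim, with potential $4q+(4D-3/4)\zeta^{-2}+\cdots$, which the paper only asserts.
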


\begin{proof}
At the origin, the coefficient in the normal-form equation
\eqref{eq:liouville-normal-form} has a pole of order (4), with nonzero leading
coefficient \(-B_1^2/4\).  The scalar rank is therefore
\((4-2)/2=1\).  To inspect infinity, put \(t=1/z\) and
\(Y(t)=\Psi(1/t)\).  Equation~\eqref{eq:standard-ode} becomes
\begin{equation}
 Y''(t)+\left(\frac{2-B_2}{t}-B_1\right)Y'(t)
 +\left(\frac{q}{t^3}+\frac{B_3}{t^2}\right)Y(t)=0.
 \label{eq:infinity-inverted}
\end{equation}
After removal of its first derivative, the coefficient has a pole of order
\(3\), because \(q\neq0\).  Hence the scalar rank is
\((3-2)/2=1/2\), and the half-integer rank accounts for the quadratic
ramification.

For the formal factors at infinity, substitute
\(\Psi=e^{\lambda\sqrt z}z^\rho(1+o(1))\) into
\eqref{eq:standard-ode}.  The coefficients of \(z^{-1}\) and \(z^{-3/2}\)
give
\begin{equation}
 \frac{\lambda^2}{4}+q=0,
 \qquad
 \lambda\left(\rho-\frac14+\frac{B_2}2\right)=0.
\end{equation}
Thus \(\lambda=\pm2\sqrt{-q}\) and
\(\rho=1/4-B_2/2\).  Since
\(A(z)=z^{-B_2/2}(1+O(z^{-1}))\) at infinity, division by (A) gives the
corresponding behaviours of \(\Phi\) in
\eqref{eq:infinity-formal-behaviours}.
\end{proof}

\begin{remark}[Undriven degeneration]
If \(F=0\), then \(B_1=q=0\), and both (0) and infinity are regular
singular points rather than irregular ones.  The Bargmann equation reduces to
the Euler equation
\begin{equation}
 \Psi''+\frac{B_2}{z}\Psi'+\frac{B_3}{z^2}\Psi=0,
\end{equation}
with indicial equation
\begin{equation}
 r(r-1)+B_2r+B_3=0.
\end{equation}
This degeneration is consistent with the fact that the undriven Hamiltonian
is diagonal in the number basis.
\end{remark}

The factor \(z^{-B_2/2}=\exp[-(B_2/2)\operatorname{Log} z]\) is multivalued unless
\(B_2/2\in\mathbb Z\); its monodromy about the origin is
\(e^{-\pi iB_2}\).  If \(F\neq0\), the exponential factor in \(A\) has an
essential singularity at \(z=0\), while remaining single-valued on the
punctured plane.  Thus \(A\) is an auxiliary local gauge factor, not a
Bargmann--Fock wavefunction.  The branch dependence introduced by \(A\) must
cancel in the product \(A\Phi\) whenever that product is the physical entire
function \(\Psi\).

\begin{proposition}[Formal behaviours at the origin]
Assume \(F\neq0\), so that \(B_1\neq0\).  The two leading formal local
behaviours admitted by \eqref{eq:liouville-normal-form} are
\begin{align}
 \Phi_{\mathrm{reg}}(z)
 &\sim z^{B_2/2}\exp\!\left(-\frac{B_1}{2z}\right),
 \label{eq:phi-regular-formal}\\
 \Phi_{\mathrm{sing}}(z)
 &\sim z^{2-B_2/2}\exp\!\left(\frac{B_1}{2z}\right).
 \label{eq:phi-singular-formal}
\end{align}
These expressions specify only the leading formal factors; sectorial
asymptotic existence and Stokes connection data are not asserted here.
\end{proposition}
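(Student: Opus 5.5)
The plan is a direct WKB-type balance at the origin for the normal-form equation \eqref{eq:liouville-normal-form}, cross-checked against the gauge factor $A$ of \eqref{eq:liouville-substitution}.

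First, substitute a trial form $\Phi=\exp(\mu/z)\,z^{\sigma}(1+o(1))$ into \eqref{eq:liouville-normal-form}. Computing $\Phi''/\Phi$ gives
\[
\frac{\mu^2}{z^4}+\frac{2\mu(1-\sigma)}{z^3}+\frac{\sigma(\sigma-1)}{z^2}+\cdots .
\]
The $z^{-4}$ coefficient must cancel $-B_1^2/4$, so $\mu^2=B_1^2/4$ and $\mu=\pm B_1/2$. Because $B_1\neq0$, both roots are distinct and nonzero; this is the only point where $F\neq0$ enters.

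Next, the $z^{-3}$ balance,
\[
2\mu(1-\sigma)+B_1(1-B_2/2)=0,
\]
fixes $\sigma$ for each branch:
\begin{itemize}
\item For $\mu=-B_1/2$: $-(1-\sigma)+(1-B_2/2)=0$, so $\sigma=B_2/2$. This gives \eqref{eq:phi-regular-formal}.
\item For $\mu=+B_1/2$: $(1-\sigma)+(1-B_2/2)=0$, so $\sigma=2-B_2/2$. This gives \eqref{eq:phi-singular-formal}.
\end{itemize}
The lower-order terms only affect the correction series, so leading factors are all the statement requires.

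As an independent check, I would pull back through $\Psi=A\Phi$ with $A=z^{-B_2/2}e^{B_1/(2z)}$:
\begin{itemize}
\item The regular branch becomes $\Psi\sim 1$. This matches the formal power series $\operatorname{HeunWI}$ with $c_0=1$, which is unique by the recurrence \eqref{eq:taylor-general-recurrence}. So the regular behaviour is actually realized by an exact formal series: $\Phi_{\rm reg}=A^{-1}\operatorname{HeunWI}$ formally.
\item The singular branch becomes $\Psi\sim z^{2-B_2}e^{B_1/z}$. Substituting this into \eqref{eq:standard-ode} recovers the indicial-type balance, with the $z^{-4}$ terms cancelling between $\Psi''$ and $(B_1/z^2)\Psi'$.
\end{itemize}

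The main obstacle is scope rather than computation. One would want to justify that these leading factors extend to full formal solutions, i.e.\ that the correction series $1+\sum a_k z^k$ is determinable for the singular branch as well. To do this I would solve the $z^{-3-k}$ balance recursively. The coefficient multiplying $a_k$ there is $\pm B_1 k\neq0$ (it is $2\mu$ times the shift), so every $a_k$ is uniquely determined.

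Convergence and sectorial realization are explicitly not claimed, so Borel summability and Stokes data at this rank-one point need not be addressed.
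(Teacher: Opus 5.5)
Your proposal is correct and follows the paper's proof essentially verbatim: substituting $e^{\mu/z}z^{\sigma}(1+o(1))$ and matching the $z^{-4}$ and $z^{-3}$ coefficients gives exactly the paper's conditions $\mu^2=B_1^2/4$ and $2\mu(1-\sigma)+B_1(1-B_2/2)=0$, with the same two solutions. Your pullback check through $A$ and your recursion for the full formal series go beyond what the statement requires (and the balance that determines $a_k$ sits at relative order $z^{k-3}$, not $z^{-3-k}$, though its coefficient $-2\mu k\neq0$ is as you say).
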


\begin{proof}
Insert \(\Phi=e^{s/z}z^\rho(1+o(1))\) formally into
\eqref{eq:liouville-normal-form}.  The coefficients of \(z^{-4}\) and
\(z^{-3}\) give
\begin{equation}
 s^2-\frac{B_1^2}{4}=0,
 \qquad
 2s(1-\rho)+B_1\left(1-\frac{B_2}2\right)=0.
\end{equation}
For \(s=-B_1/2\) one obtains \(\rho=B_2/2\), whereas for
\(s=B_1/2\) one obtains \(\rho=2-B_2/2\).
\end{proof}

Returning to \(\Psi=A\Phi\) displays the decisive compensation mechanism:
\begin{align}
 A(z)\Phi_{\mathrm{reg}}(z)&\sim 1,
 \label{eq:compensated-branch}\\
 A(z)\Phi_{\mathrm{sing}}(z)&\sim
 z^{2-B_2}\exp\!\left(\frac{B_1}{z}\right).
 \label{eq:enhanced-branch}
\end{align}
In the first branch both the essential exponential and the fractional power
cancel.  This is compatible with the locally analytic Bargmann solution.  In
the second branch the essential exponential is enhanced rather than cancelled
and a power factor remains.  Equations \eqref{eq:compensated-branch} and
\eqref{eq:enhanced-branch} are formal local statements; by themselves they do
not decide global Bargmann--Fock membership and impose no condition on \(E\).

For completeness, we record a precise special-function classification without
using it as a spectral assertion.  The normalized DLMF convention for the
double-confluent Heun equation assumes a nonzero quadratic-derivative
coefficient, while B\"uhring's connection theory treats a generic equation with
two rank-one irregular endpoints \cite{DLMF31,Buhring1994DCHE}.  Analogous
difficulties have been encountered for the same equation by El-Jaick and
Figueiredo, who obtain convergent expansions at both singular points but
report finding no transformation connecting them
\cite{ElJaickFigueiredo2013Coulomb}.  Neither result therefore applies
automatically to the ramified specialization below, for which that
coefficient vanishes.  Define the broader canonical family by the explicit
convention
\begin{equation}
 z^2u''+(\gamma_{\rm D}+\delta_{\rm D}z+\varepsilon_{\rm D}z^2)u'
 +(\alpha_{\rm D}z-q_{\rm D})u=0.
 \label{eq:dche-canonical}
\end{equation}
Then \eqref{eq:standard-ode} is exactly the specialization
\begin{equation}
 \gamma_{\rm D}=\frac{2\overline F}{V},\qquad
 \delta_{\rm D}=\frac{2\hbar\Delta}{V},\qquad
 \varepsilon_{\rm D}=0,\qquad
 \alpha_{\rm D}=\frac{2F}{V},\qquad
 q_{\rm D}=-B_3=\frac{2E}{V}.
 \label{eq:dche-correspondence}
\end{equation}
This algebraic correspondence rigorously places the Bargmann equation on a
degenerate parameter hypersurface of the double-confluent Heun family as
defined by \eqref{eq:dche-canonical}.  It does not, without further analysis,
identify a preferred global Heun solution, prove a connection formula, or
yield an energy-quantization condition.  The case \(F=0\) is more degenerate
and the two formal behaviours above do not apply; in that case the original
Hamiltonian is already diagonal in the number basis.

More precisely, in the Leaver--Figueiredo convention the parent DCHE is
\begin{equation}
 z^2U''+(B_1+B_2z)U'
 +(B_3-2\eta\omega z+\omega^2z^2)U=0.
 \label{eq:leaver-dche}
\end{equation}
The singular Whittaker--Ince limit
\begin{equation}
 \omega\longrightarrow0,\qquad
 \eta\longrightarrow\infty,\qquad
 2\eta\omega=-q\quad\text{fixed}
 \label{eq:whittaker-ince-limit}
\end{equation}
gives \eqref{eq:wi-equation}
\cite{Figueiredo2005Ince,ElJaickFigueiredo2008Solutions,
ElJaickFigueiredo2013Coulomb}.  Thus setting
\(\varepsilon_{\rm D}=0\) describes the resulting equation algebraically,
but is not literally the same parameter operation: the limit removes
\(\omega^2z^2U\) while retaining \(-2\eta\omega zU=qzU\).  It also changes
infinity to the rank-\(1/2\), ramified point whose formal balance was recorded
in \eqref{eq:infinity-formal-behaviours}.

This section removes the first-order derivative from the Bargmann equation,
proves the singularity classification at the origin and at infinity, and
derives the two formal local behaviours at each singular point together
with the compensation mechanism that distinguishes the regular from the
singular branch there.  It places the resulting normal form, via the
algebraic setting \(\varepsilon_{\rm D}=0\), as an exact degeneration of
the double-confluent Heun family
\eqref{eq:dche-canonical}--\eqref{eq:dche-correspondence}, and, via the
Leaver--Figueiredo correspondence \eqref{eq:leaver-dche} and the
Whittaker--Ince limit \eqref{eq:whittaker-ince-limit}, separately
identifies the Whittaker--Ince equation as the outcome of a distinct
limiting operation converging to the same equation.
None of these local and formal results by itself selects a preferred global
solution, proves a connection formula, or imposes a condition on the
energy.

\section{Sectorial solutions at infinity}
\label{sec:sectorial-infinity}

\subsection{Formal expansion at infinity}

The leading factors in \eqref{eq:infinity-formal-behaviours} extend to a
formal series in half-integer powers.  To derive it directly from
\eqref{eq:wi-equation}, put \(z=t^2\) and \(Y(t)=\Psi(t^2)\).  Since
\[
 \Psi'=\frac{Y'}{2t},\qquad
 \Psi''=\frac{Y''}{4t^2}-\frac{Y'}{4t^3},
\]
the transformed equation is
\begin{equation}
 Y''+\left(\frac{2B_2-1}{t}+\frac{2B_1}{t^3}\right)Y'
 +\left(4q+\frac{4B_3}{t^2}\right)Y=0.
 \label{eq:wi-t-equation}
\end{equation}
An integer-power series in \(z^{-1}\) is not generally closed when
\(q\ne0\), because differentiation of the square-root exponential introduces
odd powers of \(t^{-1}=z^{-1/2}\).

Choose compatible square roots so that \(s=\sqrt{-q}\) and
\(\sqrt{-qz}=s\sqrt z=st\), and let
\begin{equation}
 a_\sigma=2\sigma s,\qquad
 p=\frac12-B_2,\qquad
 r=\frac p2=\frac14-\frac{B_2}{2},
 \qquad \sigma\in\{-1,1\},
 \label{eq:asymptotic-parameters}
\end{equation}
and insert
\[
 Y=e^{a_\sigma t}t^p\sum_{n\ge0}u_n^{(\sigma)}t^{-n},
 \qquad u_0^{(\sigma)}=1.
\]
The coefficients of \(t^0\) and \(t^{-1}\) in the normalized
equation give \(a_\sigma^2+4q=0\) and, since \(a_\sigma\ne0\),
\(2p+2B_2-1=0\), respectively.
Define
\begin{equation}
 C_{\rm nf}=p^2+(2B_2-2)p+4B_3.
 \label{eq:asymptotic-C}
\end{equation}
With \(u_j^{(\sigma)}=0\) for \(j<0\), coefficient extraction gives the
arbitrary-order formal recurrence
\begin{equation}
 u_{m+1}^{(\sigma)}
 =\frac{[m(m+1)+C_{\rm nf}]u_m^{(\sigma)}
 +2a_\sigma B_1u_{m-1}^{(\sigma)}
 +2B_1(p-m+2)u_{m-2}^{(\sigma)}}
 {2a_\sigma(m+1)}.
 \label{eq:asymptotic-recurrence}
\end{equation}
In particular,
\begin{align}
 u_1^{(\sigma)}={}&\frac{C_{\rm nf}}{2a_\sigma},\nonumber\\
 u_2^{(\sigma)}={}&\frac{B_1}{2}
 +\frac{C_{\rm nf}(C_{\rm nf}+2)}{8a_\sigma^2},\nonumber\\
 u_3^{(\sigma)}={}&
 \frac{C_{\rm nf}(C_{\rm nf}+2)(C_{\rm nf}+6)}{48a_\sigma^3}
 +\frac{B_1(3C_{\rm nf}+6+4p)}{12a_\sigma}.
 \label{eq:asymptotic-first-three}
\end{align}
Thus, on a sector carrying fixed logarithms,
\begin{equation}
 \Psi_\sigma(z)\sim
 e^{2\sigma\sqrt{-qz}}z^{1/4-B_2/2}
 \left[1+u_1^{(\sigma)}z^{-1/2}
 +u_2^{(\sigma)}z^{-1}
 +u_3^{(\sigma)}z^{-3/2}+O(z^{-2})\right].
 \label{eq:asymptotic-three-terms}
\end{equation}
This is a formal expansion: neither convergence nor uniform error bounds are
asserted, and the factors \(t^p\) and \(z^r\) require the stated logarithm
branches unless their respective exponents are integers.  Independent
substitution cancels the residual through \(t^{-4}\) in the normalized
equation \eqref{eq:wi-t-equation}; after truncation at \(u_3\), its first
uncancelled order is \(t^{-5}\).  Equivalently, in the equation multiplied
by \(t^2\), the first uncancelled order is \(t^{-3}=z^{-3/2}\), which
determines \(u_4\).
The routine \texttt{asymptotic\_coefficients} in
\texttt{src/kerr\_heun/local\_series.py} implements
\eqref{eq:asymptotic-recurrence}; its explicit
\texttt{sqrt\_minus\_q} argument records the chosen branch.  Exact rational
inputs are checked exactly, while numerical inputs must satisfy
\[
 |s^2+q|\leq 10^{-12}\max\{1,|s^2|,|q|\}.
\]
For symbolic inputs a decisive exact zero test is honored; a genuinely
undecidable symbolic relation remains the caller's responsibility.

\paragraph{Branches and sectors.}
On a simply connected sector choose \(\operatorname{Log}(-qz)\) and
\(\operatorname{Log}z\), and define
\[
 \sqrt{-qz}=\exp\!\left[\frac12\operatorname{Log}(-qz)\right],
 \qquad z^r=\exp[r\operatorname{Log}z].
\]
For this branch, \(\Psi_\sigma\) is exponentially dominant where
\(\Re(\sigma\sqrt{-qz})>0\), subdominant where it is negative, and of equal
exponential magnitude on the rays where it vanishes.  We call
\begin{equation}
 \Re\sqrt{-qz}=0\quad\hbox{Stokes rays},\qquad
 \Im\sqrt{-qz}=0\quad\hbox{anti-Stokes rays}.
 \label{eq:stokes-ray-convention}
\end{equation}
This naming convention is not universal; the displayed equations, rather
than the names, define our usage.  No connection matrices or Stokes
multipliers are inferred here.

With the chosen value \(s=\sqrt{-q}\) held fixed, recurrence
\eqref{eq:asymptotic-recurrence} gives
\(u_n^{(-\sigma)}=(-1)^nu_n^{(\sigma)}\) when only
\(\sigma\mapsto-\sigma\).  The simultaneous change
\((\sigma,s)\mapsto(-\sigma,-s)\), by contrast, leaves
\(a_\sigma=2\sigma s\) and the coefficient sequence unchanged.

More precisely, let continuation from \(t\) to \(-t\) give
\(\operatorname{Log}(-t)=\operatorname{Log}t+\varepsilon i\pi\), where
\(\varepsilon=+1\) or \(-1\) according to the chosen half-turn.  Termwise,
\((-t)^p=e^{\varepsilon i\pi p}t^p\) and
\((-t)^{-n}=(-1)^nt^{-n}\).  Consequently the normalized formal expressions
satisfy the exact sheet-exchange rule
\begin{equation}
 Y_\sigma(-t)=e^{\varepsilon i\pi p}Y_{-\sigma}(t)
 \quad\text{formally},
 \qquad p=\frac12-B_2,\quad \varepsilon\in\{-1,1\},
 \label{eq:formal-sheet-exchange}
\end{equation}
where the left side means continuation along that half-turn.  Thus equality
without the phase is not generally valid.  This rule concerns only formal
expressions on the cover; it asserts neither the existence of sectorial
solutions with these expansions nor Bargmann admissibility or a global
spectral condition.

Under the analytic continuation \(z\mapsto ze^{2\pi i}\), one
counterclockwise circuit of \(z\) around the origin in the finite
\(z\)-plane, \(\sqrt z\mapsto-\sqrt z\),
\(z^r\mapsto e^{2\pi ir}z^r\), and
\(z^{-n/2}\mapsto(-1)^nz^{-n/2}\).  Thus the two exponential expressions
are interchanged (equivalently relabelled with fixed \(s\)), while the
half-integer formal series acquires its termwise factors \((-1)^n\), apart
from the algebraic monodromy factor.

\subsection{Formal Liouville--Green approximation at infinity}
\label{subsec:formal-liouville-green}

We now connect the direct expansion above with the ordinary
Liouville--Green approximation.  Recall that
\begin{equation}
 q=\frac{2F}{V}\in\C.
 \label{eq:lg-q-definition}
\end{equation}
The drive \(F\), and hence \(q\), may be genuinely complex; neither is
assumed real or positive.  On the sector under consideration, fix branches
of \(\sqrt q\) and \(\sqrt z\), and set
\begin{equation}
 s\coloneqq i\sqrt q,
 \qquad s^2=-q.
 \label{eq:lg-s-definition}
\end{equation}
For real \(V\), the physical relation is
\(B_1=2F^*/V\), and therefore \(B_1=q^*\).  In particular, \(B_1\), \(q\),
and \(s\) need not be real.  Write
\begin{equation}
 D=B_3+\frac{B_2}{2}-\frac{B_2^2}{4},\qquad
 \mathcal{H}=B_1\left(1-\frac{B_2}{2}\right).
 \label{eq:lg-DH}
\end{equation}
The exact Liouville transformation is
\begin{equation}
 \Psi=z^{-B_2/2}\exp\!\left(\frac{B_1}{2z}\right)\Phi,
 \qquad
 \Phi''+R(z;E)\Phi=0,
 \end{equation}
where, consistently with \eqref{eq:liouville-normal-form},
\begin{equation}
 R(z;E)=\frac qz+\frac D{z^2}+\frac{\mathcal{H}}{z^3}
        -\frac{B_1^2}{4z^4}.
 \label{eq:lg-normal-form}
\end{equation}
It is meromorphic on the finite plane, with its only finite singularity at
the origin (generically a fourth-order pole), and tends to zero as
\(q/z+O(z^{-2})\) at complex infinity.  The exceptional cancellations at
the origin and the undriven degeneration \(q=0\) are excluded below.

Fix a simply connected sector at infinity compatible with the selected
branches, and take \(q\ne0\), hence \(s\ne0\).  With
\(\sqrt{-qz}=s\sqrt z\), define
\begin{equation}
 p(z;E)=\sqrt{-R(z;E)},\qquad
 p(z;E)\sim sz^{-1/2}\sum_{j\geq0}c_jz^{-j},\qquad c_0=1.
 \label{eq:lg-momentum}
\end{equation}
The square root is the one asymptotic to \(sz^{-1/2}\).  Direct expansion
gives
\begin{align}
 c_1={}&\frac{D}{2q},\nonumber\\
 c_2={}&\frac{\mathcal{H}}{2q}-\frac{D^2}{8q^2},\nonumber\\
 c_3={}&-\frac{B_1^2}{8q}-\frac{DH}{4q^2}
          +\frac{D^3}{16q^3}.
 \label{eq:lg-momentum-coefficients}
\end{align}
In particular,
\begin{equation}
 \int^z p(t;E)\,dt
 \sim 2s\sqrt z+\frac{D}{s}z^{-1/2}
 -\frac{s c_2}{3}z^{-3/2}+\cdots.
 \label{eq:lg-phase}
\end{equation}
Moreover,
\begin{equation}
 p^{-1/2}\sim s^{-1/2}z^{1/4}
 \left(1-\frac{D}{4q}z^{-1}+\cdots\right).
 \label{eq:lg-phase-prefactor}
\end{equation}
After absorbing the constant \(s^{-1/2}\), the lowest-order formula
\(\Phi_\sigma^{[0]}=p^{-1/2}\exp(\sigma\int^z p\,dt)\) therefore yields
\begin{equation}
 \Psi_\sigma^{[0]}(z)\sim
 e^{2\sigma s\sqrt z}z^{1/4-B_2/2}
 \left[1+\frac{\sigma D}{s}z^{-1/2}+O(z^{-1})\right].
 \label{eq:lg-leading-psi}
\end{equation}
This reproduces exactly the direct formal exponential rate and algebraic power.
For the fixed branch of \(\sqrt z\), the \(\sigma\) branch grows where
\(\Re(\sigma s\sqrt z)>0\), decays where this real part is negative, and has
equal exponential magnitude with the other branch where it vanishes.  These
sectors depend on \(\arg(F)=\arg(q)\).  Changing the chosen branch of
\(\sqrt q\), and hence sending \(s\mapsto-s\), merely interchanges the labels
\(\sigma=+1\) and \(\sigma=-1\) of the two formal solutions.
The leading Liouville--Green approximation does not reproduce the complete
first series coefficient.  Indeed, since
\begin{equation}
 C_{\rm nf}=4D-\frac34,
 \end{equation}
the direct result is
\(u_1^{(\sigma)}=D/(\sigma s)-3/(16\sigma s)\), whereas leading WKB gives
only \(D/(\sigma s)\).

The missing term is the first formal transport correction.  For the exact
logarithmic derivative \(S=\Phi'/\Phi\), the Riccati equation
\(S'+S^2=p^2\) gives
\begin{equation}
 S_\sigma\sim \sigma p-\frac{p'}{2p}
 +\sigma\left(\frac{p''}{4p^2}-\frac{3p'^2}{8p^3}\right)+\cdots.
 \label{eq:lg-riccati-transport}
\end{equation}
The last parenthesis and its contribution to the exponent are, respectively,
\begin{equation}
 \frac{3}{32s}z^{-3/2}+O(z^{-5/2}),
 \qquad -\frac{3\sigma}{16s}z^{-1/2}+O(z^{-3/2}).
\end{equation}
Hence
\begin{center}
\begin{tabular}{c|c|c|c}
 quantity & direct series & leading WKB & corrected WKB\\ \hline
 rate in \(\sqrt z\) & \(2\sigma s\) & \(2\sigma s\) & \(2\sigma s\)\\
 power of \(z\) & \(\frac14-\frac{B_2}{2}\) & same & same\\
 \(u_1^{(\sigma)}\) & \(\frac{D}{\sigma s}-\frac{3}{16\sigma s}\)
 & \(\frac{D}{\sigma s}\)
 & \(\frac{D}{\sigma s}-\frac{3}{16\sigma s}\)
\end{tabular}
\end{center}

Analytic continuation across a chosen cut changes the relevant logarithm and
may reverse the sign of \(p\), thereby relabelling the two signs.  On a fixed
sheet, the curves \(\Re\int^z p(t)dt=0\) have equal exponential magnitudes,
whereas on \(\Im\int^z p(t)dt=0\) the action is real; these equations are our
operative convention independent of competing Stokes terminology.  All
statements in this subsection are formal and sectorial.  Ordinary WKB is
singular at a turning point \(R(z;E)=0\), and no uniform validity or error
bound is claimed.  This failure motivates a subsequent Olver uniform
analysis.

\subsection{Uniform asymptotic expansions: Olver's approach}
\label{subsec:olver-uniform}

Put \(z=t^2\), with compatible logarithms, in
\eqref{eq:wi-t-equation}.  Direct substitution of
\begin{equation}
 \Psi(t^2)=t^{1/2-B_2}\exp\!\left(\frac{B_1}{2t^2}\right)w(t)
 \label{eq:olver-exact-substitution}
\end{equation}
gives the exact equation
\begin{equation}
 w''=\{a_\sigma^2+g(t)\}w,\qquad
 g(t)=-\frac{C_{\rm nf}}{t^2}-\frac{4\mathcal{H}}{t^4}+\frac{B_1^2}{t^6},
 \label{eq:olver-covering-normal-form}
\end{equation}
where \(a_\sigma=2\sigma s\), \(s^2=-q\), and
\begin{equation}
 D=B_3+\frac{B_2}{2}-\frac{B_2^2}{4},\quad
 \mathcal{H}=B_1\left(1-\frac{B_2}{2}\right),\quad C_{\rm nf}=4D-\frac34.
\end{equation}
Since \(a_\sigma^2=-4q\), the equation is independent of \(\sigma\).
Transforming the exact normal form \eqref{eq:lg-normal-form} after
\(z=t^2\) gives the same three coefficients and independently checks their
signs.  Infinity has become an unramified rank-1 point, and
\(g=O(t^{-2})\) is integrable on rays.

Fix \(\sigma\in\{+1,-1\}\).  Choose a unit direction \(d\) with
\(\Re(a_\sigma d)\leq0\) and a sufficiently large \(R>0\), and define the
simply connected half-plane
\begin{equation}
 \Omega_{d,R}=\{t:\ x_d(t):=\Re(\overline d t)>R\},\qquad
 \Gamma_d(t)=\{t+sd:s\geq0\}.
 \label{eq:olver-progressive-condition}
\end{equation}
Every translated half-ray stays in \(\Omega_{d,R}\), avoids zero, and,
oriented from \(t\) to infinity, satisfies
\(\Re\{a_\sigma(u-t)\}=s\Re(a_\sigma d)\leq0\).  Thus the inequality,
orientation, and normalization by \(e^{a_\sigma t}\) agree explicitly.

\begin{theorem}[Normalized sectorial solutions at infinity]
\label{thm:olver-sectorial}
Let \(q\ne0\), fix \(s^2=-q\), then fix
\(\sigma\in\{+1,-1\}\), and choose \(d,R\) in that order as above.  There
is a unique holomorphic solution \(w_{\sigma,d}\) on \(\Omega_{d,R}\)
\begin{equation}
 w_{\sigma,d}=e^{a_\sigma t}(1+\varepsilon_{\sigma,d}),\qquad
 \varepsilon_{\sigma,d},\varepsilon_{\sigma,d}'\longrightarrow0,
\end{equation}
as \(x_d(t)\to\infty\), uniformly in \(\Omega_{d,R}\).  If
\begin{equation}
 I_d(t)=\int_0^\infty|g(t+sd)|\,ds,\qquad
 G_d(t)=\frac{I_d(t)}{|a_\sigma|},
\end{equation}
then throughout that half-plane
\begin{align}
 |\varepsilon_{\sigma,d}(t)|&\leq e^{G_d(t)}-1,
 \label{eq:olver-epsilon-bound}\\
 |\varepsilon_{\sigma,d}'(t)|&\leq I_d(t)e^{G_d(t)},
 \label{eq:olver-derivative-bound}
\end{align}
\begin{equation}
 I_d(t)\leq M_d(t):=
 \frac{|C_{\rm nf}|}{x_d(t)}+\frac{4|\mathcal{H}|}{3x_d(t)^3}
 +\frac{|B_1|^2}{5x_d(t)^5}.
 \label{eq:olver-radial-majorant}
\end{equation}
The conclusion includes \(\Re(a_\sigma d)=0\): the kernel is still bounded,
the integrals converge, and the construction remains holomorphic.
The other sign generally requires a different progressive direction and a
different half-plane or sector.
\end{theorem}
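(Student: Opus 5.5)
We follow Olver's Volterra-equation method for equations of the form \(w''=\{a^2+g\}w\) with \(g\) integrable along progressive paths. Write \(w=e^{a_\sigma t}(1+\varepsilon)\), with \(a=a_\sigma\). Substituting into \eqref{eq:olver-covering-normal-form} gives \(\varepsilon''+2a\varepsilon'=g(1+\varepsilon)\). Treating the right-hand side as a source and imposing \(\varepsilon,\varepsilon'\to0\) at infinity along \(\Gamma_d(t)\), variation of constants yields the Volterra equation
\[
 \varepsilon(t)=\int_{\Gamma_d(t)}K(t,u)\,g(u)\,\{1+\varepsilon(u)\}\,du,
 \qquad
 K(t,u)=\frac{1-e^{2a(u-t)}}{2a}.
\]
The orientation condition \(\Re\{a(u-t)\}=s\Re(ad)\le0\), recorded just before the theorem, gives \(|e^{2a(u-t)}|\le1\). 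Hence \(|K|\le 1/|a|\) and \(|\partial_tK|=|e^{2a(u-t)}|\le1\). This remains true in the boundary case \(\Re(ad)=0\), which is why that case is included.

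Next, solve the Volterra equation by Picard iteration. Set \(\varepsilon_0=0\) and \(\varepsilon_{n+1}=\int K g(1+\varepsilon_n)\,du\). Parametrize \(u=t+sd\) and use that for \(u\in\Gamma_d(t)\) one has \(\Gamma_d(u)\subset\Gamma_d(t)\) and \(I_d(u)\le I_d(t)\), with \(\tfrac{d}{ds}I_d(t+sd)=-|g(t+sd)|\). Induction then gives
\[
 |\varepsilon_{n+1}-\varepsilon_n|(t)\le \frac{G_d(t)^{n+1}}{(n+1)!}.
\]
The series therefore converges uniformly on \(\Omega_{d,R}\) to a solution satisfying \eqref{eq:olver-epsilon-bound}. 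Holomorphy follows because each iterate is holomorphic: the integral along the translated ray can be differentiated under the integral sign, and the convergence is locally uniform. Differentiating the integral equation and using \(|\partial_tK|\le1\) together with the bound on \(1+\varepsilon\) gives \(|\varepsilon'|\le\int|g|\,e^{G_d}\), which is \eqref{eq:olver-derivative-bound}. Differentiating twice confirms that \(w\) actually satisfies the ODE.

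For the majorant \eqref{eq:olver-radial-majorant}, use \(|t+sd|\ge\Re(\overline d(t+sd))=x_d(t)+s\). Then bound each term of \(g\) termwise:
\[
 \int_0^\infty\frac{ds}{(x+s)^k}=\frac{1}{(k-1)x^{k-1}}\quad\text{for }k=2,4,6,
\]
which produces exactly the coefficients \(1\), \(1/3\), \(1/5\). Because \(M_d\to0\) as \(x_d\to\infty\), this bound yields the uniform decay \(\varepsilon,\varepsilon'\to0\).

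Uniqueness is the step needing most care. Take a second solution \(\tilde w=e^{at}(1+\tilde\varepsilon)\) with the same normalization. The difference \(w-\tilde w\) is a solution that is \(o(|e^{at}|)\) as \(x_d\to\infty\), with derivative also \(o(|e^{at}|)\). Any solution equals a combination \(\alpha w+\beta\,w_{\mathrm{rec}}\), where \(w_{\mathrm{rec}}\sim e^{-at}\) is obtained, for example, via the Wronskian. The plan is to show \(\alpha=\beta=0\) by evaluating the Wronskian with \(w\) along the ray \(\Gamma_d(t)\) as \(s\to\infty\). The Wronskian \(\mathcal W(w,w-\tilde w)\) is constant, being the Wronskian of a normal-form equation. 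It is also bounded by \(|e^{2at}|\cdot o(1)\) along the ray, and \(|e^{2a(t+sd)}|\) is non-increasing in \(s\) for each fixed \(t\). Hence it tends to zero, so \(w-\tilde w\) is a multiple of \(w\); the normalization \(\varepsilon-\tilde\varepsilon\to0\) then forces that multiple to be zero. The difficulty is the boundary case \(\Re(ad)=0\), where the \(o(1)\) must come from the \(\varepsilon\) terms rather than exponential decay; we expect this to suffice because both error terms vanish uniformly on the half-plane.
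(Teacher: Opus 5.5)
Your existence argument is the paper's: the same Volterra equation along the translated ray $\Gamma_d(t)$, the bound $|K|\le|a_\sigma|^{-1}$ from $\Re(a_\sigma d)\le0$ (boundary case included), Picard majorants $G_d^n/n!$, holomorphy via the fixed-parameter integral in $s$, and the majorant from $|t+sd|\ge x_d(t)+s$.

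One correction is needed there. With $\Gamma_d(t)$ oriented from $t$ to infinity, which is the paper's convention and the one your parametrization $u=t+sd$, $s\ge0$, suggests, the kernel must be $K(t,u)=\bigl(e^{2a(u-t)}-1\bigr)/(2a)$. This gives $K(t,t)=0$ and $\partial_tK(t,t)=-1$. Your kernel $(1-e^{2a(u-t)})/(2a)$ instead has $\partial_tK(t,t)=+1$, so its fixed point solves $\varepsilon''+2a\varepsilon'=-g(1+\varepsilon)$. Your closing step (``differentiating twice confirms the ODE'') therefore fails as written. You can confirm the correct sign by integrating $(e^{2au}\varepsilon')'=e^{2au}g(1+\varepsilon)$ along the ray and then integrating $\varepsilon'$ once more. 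Your estimates use only $|K|\le|a|^{-1}$ and $|\partial_tK|\le1$, so nothing else changes.

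Where you genuinely differ is uniqueness. The paper takes the Volterra route. A solution in the normalized class satisfies the same integral equation: variation of constants along the complete ray works because the boundary term $e^{2a(u-t)}h'(u)$ vanishes, since $h'\to0$ and $|e^{2a(u-t)}|\le1$; the paper spells this out in its gluing step. The iteration estimate applied to the difference then forces it to vanish.

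Your Wronskian argument is also correct and somewhat shorter. Set $\delta=\varepsilon-\tilde\varepsilon$. Then $\mathcal W(w,w-\tilde w)=e^{2at}\bigl[(1+\varepsilon)\delta'-\varepsilon'\delta\bigr]$. Along the ray the prefactor $|e^{2a(t+sd)}|$ is non-increasing in $s$, and the bracket tends to zero as $x_d\to\infty$. So the boundary case $\Re(a_\sigma d)=0$, which you flagged as delicate, needs nothing beyond the normalization $\delta,\delta',\varepsilon'\to0$ and boundedness of $\varepsilon$. The detour through a recessive solution $w_{\rm rec}\sim e^{-at}$ is unnecessary, and such asymptotics on the whole half-plane would themselves need proof. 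The vanishing constant Wronskian already gives $w-\tilde w=cw$, and $c=0$ follows from the normalization.

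Each route has its advantage. The paper's step ``normalized solution $\Rightarrow$ Volterra equation'' is reused to glue the half-plane solutions for neighbouring directions $d_1,d_2$ into one sectorial solution. That gluing is not part of the stated theorem. Your route relies on the absence of a first-derivative term (constant Wronskian), and it would also do the gluing: apply it along the $d_2$-ray inside the convex overlap, where $x_{d_1}\to\infty$ as well.
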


\begin{proof}
Writing \(w_{\sigma,d}=e^{a_\sigma t}h_{\sigma,d}\) gives
\(h_{\sigma,d}''+2a_\sigma h_{\sigma,d}'=g h_{\sigma,d}\).  Variation of constants
with the stated endpoint conditions gives, after \(u=t+sd\),
\begin{equation}
 h_{\sigma,d}(t)=1+\int_{\Gamma_d(t)}K_\sigma(t,u)g(u)h_{\sigma,d}(u)\,du,
 \quad K_\sigma(t,u)=\frac{e^{2a_\sigma(u-t)}-1}{2a_\sigma}.
 \label{eq:olver-volterra}
\end{equation}
Here \(K(t,t)=0\), \(K_t=-e^{2a_\sigma(u-t)}\), and
\(K_{tt}+2a_\sigma K_t=0\); two differentiations recover the equation and
endpoint conditions.  By \eqref{eq:olver-progressive-condition},
\(|K_\sigma|\leq|a_\sigma|^{-1}\).  Moreover
\(|t+sd|\geq x_d(t)+s\), which proves \eqref{eq:olver-radial-majorant}.
Successive approximations are majorized by \(G_d^n/n!\), locally uniformly
on \(\Omega_{d,R}\).  Each iterate is holomorphic: this follows by
differentiation under the fixed-parameter integral in \(s\), whose integrand
and derivatives have locally uniform integrable majorants.  Weierstrass'
theorem therefore makes the limit holomorphic.  Differentiating the limiting
equation recovers the ODE and gives \eqref{eq:olver-derivative-bound}; the
same Volterra estimate for a difference proves uniqueness in the stated
normalized class.

Let \(\mathcal D\) be a connected closed arc of unit directions satisfying
\(\Re(a_\sigma d)\leq0\), and consider the union of the corresponding
half-planes \(\mathcal S_{\sigma,R}=\bigcup_{d\in\mathcal D}\Omega_{d,R}\).
They cover the associated proper progressive sector sufficiently far out.
For neighboring \(d_1,d_2\in\mathcal D\), require
\(\Re(\overline d_1d_2)>0\).  If
\(t\in\Omega_{d_1,R}\cap\Omega_{d_2,R}\), then for every \(s\geq0\)
\begin{align}
 x_{d_1}(t+sd_2)&=x_{d_1}(t)
   +s\Re(\overline d_1d_2)>R,\nonumber\\
 x_{d_2}(t+sd_2)&=x_{d_2}(t)+s>R.
 \label{eq:olver-overlap-rays}
\end{align}
Thus the complete \(d_2\)-ray remains in the overlap, where
\(h_{\sigma,d_1}\) is defined and holomorphic.  Moreover
\(x_{d_1}(t+sd_2)\to\infty\), so its normalization and derivative
normalization hold along that ray.  Variation of constants for the same ODE
along this complete ray therefore shows that \(h_{\sigma,d_1}\) satisfies
the \(d_2\)-Volterra equation.  Uniqueness for that equation gives
\(w_{\sigma,d_1}=w_{\sigma,d_2}\) throughout the overlap.  A finite chain
of directions in \(\mathcal D\), chosen with the same positive-inner-product
condition for consecutive members, gives all pairwise identifications.
Consequently the half-plane solutions glue to one holomorphic canonical
solution precisely on \(\mathcal S_{\sigma,R}\).
\end{proof}

For a radial path beginning at \(t=re^{i\theta}\), take
\(d=e^{i\theta}\).  Then \(x_d(t)=|t|\), so
\eqref{eq:olver-radial-majorant} becomes exactly
\(|C_{\rm nf}|/|t|+4|\mathcal{H}|/(3|t|^3)+|B_1|^2/(5|t|^5)\).  This radial formula is a
corollary, not an estimate transferred without proof to a nonradial path.

The first coefficient is rigorous as well.  Put
\(b_\sigma=C_{\rm nf}/(2a_\sigma)\), \(h_{0,\sigma}=1+b_\sigma/t\), and
\(L_\sigma h=h''+2a_\sigma h'-gh\).  Exact collection gives
\begin{equation}
 L_\sigma h_{0,\sigma}=
 \frac{b_\sigma(C_{\rm nf}+2)}{t^3}+\frac{4\mathcal{H}}{t^4}
 +\frac{4Hb_\sigma}{t^5}-\frac{B_1^2}{t^6}
 -\frac{B_1^2b_\sigma}{t^7}.
 \label{eq:olver-one-term-residual}
\end{equation}
If
\(J_{\sigma,d}=|a_\sigma|^{-1}\int_{\Gamma_d(t)}
|L_\sigma h_{0,\sigma}(u)|\,|du|\), the same proof yields
\begin{equation}
 w_{\sigma,d}=e^{a_\sigma t}
 \left(1+\frac{C_{\rm nf}}{2a_\sigma t}+\rho_{1,\sigma,d}\right),\qquad
 |\rho_{1,\sigma,d}(t)|\leq J_{\sigma,d}(t)e^{G_d(t)}.
 \label{eq:olver-one-term-bound}
\end{equation}
Indeed \(L_\sigma\rho_{1,\sigma,d}=-L_\sigma h_{0,\sigma}\), so the
inhomogeneous Volterra equation contains \(-L_\sigma h_{0,\sigma}\); its
sign disappears only after taking absolute values.  On a translated path,
the integral entering \(|a_\sigma|J_{\sigma,d}\) is at most
\begin{equation}
 \frac{|b_\sigma(C_{\rm nf}+2)|}{2x_d^2}+\frac{4|\mathcal{H}|}{3x_d^3}
 +\frac{|Hb_\sigma|}{x_d^4}+\frac{|B_1|^2}{5x_d^5}
 +\frac{|B_1|^2|b_\sigma|}{6x_d^6}.
 \label{eq:olver-one-term-majorant}
\end{equation}
Thus \(\rho_{1,\sigma,d}\) is holomorphic and \(O(x_d^{-2})\) uniformly on
each half-plane, hence \(O(t^{-2})\) uniformly on every covered closed proper
subsector.  Inserting
\(1+b_\sigma/t+O(t^{-2})\) in the exact equation gives the same
\(b_\sigma=C_{\rm nf}/(2a_\sigma)=u_1^{(\sigma)}\) as the direct recurrence.
Since \(e^{B_1/(2t^2)}=1+O(t^{-2})\), undoing
\eqref{eq:olver-exact-substitution} gives, with
\(\sqrt{-qz}=st\) and \(z^{1/4-B_2/2}=t^{1/2-B_2}\),
\begin{equation}
 \Psi_\sigma(z)=e^{2\sigma\sqrt{-qz}}z^{1/4-B_2/2}
 \left[1+u_1^{(\sigma)}z^{-1/2}+O(z^{-1})\right],\qquad
 u_1^{(\sigma)}=\frac{C_{\rm nf}}{2a_\sigma},
 \label{eq:olver-z-result}
\end{equation}
uniformly on the corresponding covering-sector images.  In particular the
relative error in the leading approximation is uniformly \(O(|t|^{-1})\),
that is, \(O(|z|^{-1/2})\).

Write \(q=|q|e^{i\phi}\), \(z=re^{i\theta}\), with angles modulo
\(2\pi\).  The phases \(S_\sigma=2\sigma\sqrt{-qz}\) have equal moduli when
\(\Re\sqrt{-qz}=0\), giving \(\theta=-\phi\pmod{2\pi}\); we call this the
Stokes ray.  They have maximal exponential separation when
\(\Im\sqrt{-qz}=0\), giving \(\theta=\pi-\phi\pmod{2\pi}\); we call this the
anti-Stokes ray.  Some literature reverses the names.  On the cover the same
geometry is \(\Re(st)=0\) and \(\Im(st)=0\).  Each line has two opposite
\(t\)-rays which \(z=t^2\) identifies as one \(z\)-ray.  Locating these rays
does not compute their Stokes multipliers.

Figure~\ref{fig:stokes-geometry} displays this covering geometry and its
rotation with the drive phase.  More explicitly, our operative convention is
that \(\Re\Delta S=0\) denotes equal-modulus (Stokes) curves and
\(\Im\Delta S=0\) denotes constant-phase (anti-Stokes) curves, with
\(\Delta S=4st\).  The names are not used without these equations.  Although
the rays rotate in fixed Bargmann coordinates when \(\arg F=\arg q\) changes,
all drive phases are physically equivalent: for \(U_\chi=e^{i\chi N}\),
\(U_\chi H(F)U_\chi^\dagger=H(Fe^{i\chi})\).  Hence the spectrum depends on
\(|F|\), not on \(\arg F\).  Moreover, \(|q|\) changes only the magnitude of
\(\Re\Delta S\), and therefore the dominance contrast, not the ray angles.
The exponentials are single-valued on the \(t\)-cover, but
\(S_+(-t)=S_-(t)\) and \(S_-(-t)=S_+(t)\).  Hence \(t\) and \(-t\) lie above
the same physical \(z\) while exchanging the exponential labels; dominance
in the ordinary \(z\)-plane is meaningful only after a sheet/branch choice.

\begin{figure}[tbp]
 \centering
 \includegraphics[width=0.96\textwidth]{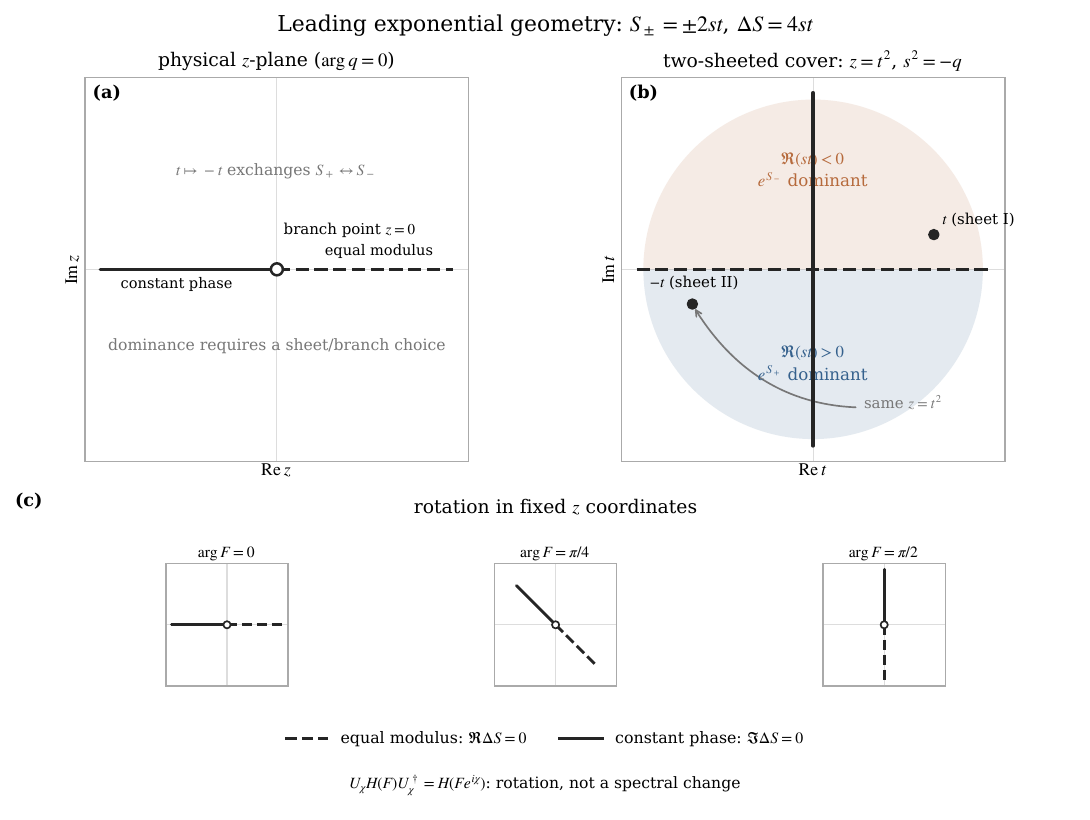}
 \caption{Stokes geometry of the leading factors
 \(S_\pm=\pm2\sqrt{-qz}=\pm2st\), with \(z=t^2\) and \(s^2=-q\).
 Our convention is: dashed curves are equal-modulus curves
 \(\Re\Delta S=0\) (Stokes), while solid curves are constant-phase curves
 \(\Im\Delta S=0\) (anti-Stokes), where \(\Delta S=4st\).  The exponentials
 are single-valued on the \(t\)-cover.  Opposite points \(t\) and \(-t\) lie
 above the same \(z\) but exchange \(S_+\leftrightarrow S_-\); consequently,
 dominance labels in the physical \(z\)-plane require a sheet/branch choice.
 On the cover, blue and orange indicate dominance of \(e^{S_+}\) and
 \(e^{S_-}\), respectively.  Changing \(\arg F=\arg q\) rotates the geometry in fixed
 Bargmann coordinates but, because
 \(U_\chi H(F)U_\chi^\dagger=H(Fe^{i\chi})\), does not change the spectrum.
 The angular geometry is independent of \(|q|\), which controls only the
 strength of the exponential contrast.}
 \label{fig:stokes-geometry}
\end{figure}

This is the fixed-parameter integral-equation error-control method from
Olver's chapter on asymptotic solutions of linear differential equations
\cite{Olver1997Asymptotics}; no large parameter is introduced.  We have
proved actual uniquely normalized solutions, uniform bounds, coefficient
agreement, and leading phase geometry.  This sectorial result provides the
outer canonical solutions for a future origin-to-infinity connection; Stokes
multipliers, the Bargmann determinant, and eigenvalue asymptotics belong to
that global step.  ``Dominant'' means larger relative exponential size.
Because both branches grow as \(\exp(O(|z|^{1/2}))\), the Gaussian Bargmann
norm alone does not justify discarding either one.

The covering equation has no turning point near infinity because its leading
coefficient \(-4q\) is nonzero; finite turning points nevertheless control the
strong-drive connection problem treated next.

This section derives a formal series expansion of the Bargmann solutions at
infinity in the covering variable, and shows that the corresponding
leading-order Liouville--Green approximation does not reproduce the full
first coefficient, requiring an explicit transport correction.  Following
Olver's error-bound method, it then proves the existence of uniquely
normalized, holomorphic canonical solutions on suitable half-planes, with
explicit bounds on their remainders, and displays the resulting Stokes
geometry.  These sectorial solutions supply outer building blocks for a
future origin-to-infinity connection; they do not by themselves determine
Stokes multipliers, the Bargmann determinant, or eigenvalue asymptotics.

\section{Strong drive: coalescing turning points and the Weber connection problem}
\label{sec:strong-drive}
\label{subsec:strong-drive-olver}

Let
\begin{equation}
 \eta=\frac{|F|}{V},\qquad \delta=\frac{\hbar\Delta}{V},
 \qquad \eta\longrightarrow\infty,
 \label{eq:strong-drive-parameters}
\end{equation}
with \(\delta\) fixed.  Unitary phase equivalence permits the choice
\(F=|F|>0\); hence
\begin{equation}
 B_1=q=2\eta,\qquad B_2=2\delta,\qquad B_3=-2E/V.
 \label{eq:strong-drive-WI-map}
\end{equation}
Substitution in the definitions of \(D,\mathcal{H},C_{\rm nf}\), independently of any scaling,
gives
\begin{equation}
 D=-2E/V+\delta-\delta^2,\quad \mathcal{H}=2\eta(1-\delta),\quad
 C_{\rm nf}=-8E/V+4\delta-4\delta^2-\frac34,
\end{equation}
and the exact covering equation is
\begin{equation}
 w''(t)=\left[-8\eta-\frac{C_{\rm nf}}{t^2}
 -\frac{8\eta(1-\delta)}{t^4}+\frac{4\eta^2}{t^6}\right]w(t).
 \label{eq:strong-drive-exact-cover}
\end{equation}

Balancing the constant and inverse-sixth-power terms gives
\(t=\eta^{1/6}x\); balancing the energy part of \(C_{\rm nf}/t^2\) with them then
gives \(E/V=\eta^{4/3}e\).  With \(u=\eta^{2/3}\), direct substitution yields
\begin{equation}
 \frac{d^2w}{dx^2}=\{u^2f(x;e)+u g(x)+h(x)\}w,
 \label{eq:strong-drive-scaled}
\end{equation}
where
\begin{equation}
 f=\frac4{x^6}+\frac{8e}{x^2}-8,\qquad
 g=-\frac{8(1-\delta)}{x^4},\qquad
 h=\frac{4\delta^2-4\delta+3/4}{x^2}.
 \label{eq:strong-drive-fgh}
\end{equation}
Thus the natural expansion parameter after extraction of the leading energy
scale is \(u^{-1}=\eta^{-2/3}\), not \(\eta^{-1}\) or a small-drive Taylor
parameter.  In particular, \(4\eta^2/t^6=4\eta/x^6\) before the derivative
rescaling and contributes \(4u^2/x^6\) in
\eqref{eq:strong-drive-scaled}: it is leading order.

Putting \(y=x^2\), the turning points satisfy
\begin{equation}
 1+2ey^2-2y^3=0.
\end{equation}
The exact derivative is
\(P_y=4ey-6y^2=2y(2e-3y)\).  Thus a nonzero double root has
\(e=3y/2\), and substitution in \(P=0\) gives \(1+y^3=0\).  The real
branch is
\begin{equation}
 e_0=-\frac32,\qquad
 1-3y^2-2y^3=-(y+1)^2(2y-1).
 \label{eq:strong-drive-factorization}
\end{equation}
The local strong-drive geometry selects the coalescing points \(x=i\) and
\(x=-i\) on sheets related by \(t\mapsto-t\); both cover the same negative
real Bargmann point \(z=t^2\).  The negative physical displacement caused by
a positive drive and the sheet-exchange rule
\eqref{eq:formal-sheet-exchange} therefore suggest this pair as physically
relevant.  Identifying it with the actual global continuation contour would,
however, require the still-missing construction of progressive paths from the
compensated branch \eqref{eq:compensated-branch} through the turning region
and the corresponding global connection map.  A positive-real-\(x\)
Schr\"odinger analogy alone neither supplies that construction nor captures
the required sheet exchange.

The complete \(u\)-dependent turning-point audit is obtained from
\begin{equation}
 Q_u=f+u^{-1}g+u^{-2}h,\qquad
 P_u(y)=1-\frac{2(1-\delta)}u y+
 \left(2e+\frac{4\delta^2-4\delta+3/4}{4u^2}\right)y^2-2y^3.
 \label{eq:strong-drive-full-Q}
\end{equation}
For \(e=-3/2+e_1/u+O(u^{-2})\), its two roots near \(-1\) and the
remaining root are
\begin{align}
 y_\pm={}&-1\pm v u^{-1/2}
 +\frac{-\delta+4e_1+1}{9u}+O(u^{-3/2}),\qquad
 v^2=\frac23(\delta-1-e_1),\nonumber\\
 y_s={}&\frac12+\frac{2\delta+e_1-2}{9u}+O(u^{-2}).
 \label{eq:strong-drive-all-turning-points}
\end{align}
Their six square roots comprise two points near \(i\), their negatives near
\(-i\), and the simple pair near \(\pm1/\sqrt2\).  In the \(t\)-plane they
are multiplied by \(u^{1/4}\), while in the \(z=t^2\) plane the opposite
square roots project together to \(z=u^{1/2}y\).  The pole of order six at
\(x=0\) remains excluded.  In particular, the simple pair cannot be omitted
from a global-domain claim.

Two coalescing simple turning points require a parabolic-cylinder (Weber)
comparison equation, as in Olver's two-turning-point construction
\cite[Chapter~12, Sections~1--4]{Olver1997Asymptotics}; a Bessel comparison
has the wrong turning-point class.

\begin{figure}[htbp]
\centering
\includegraphics[width=\textwidth]{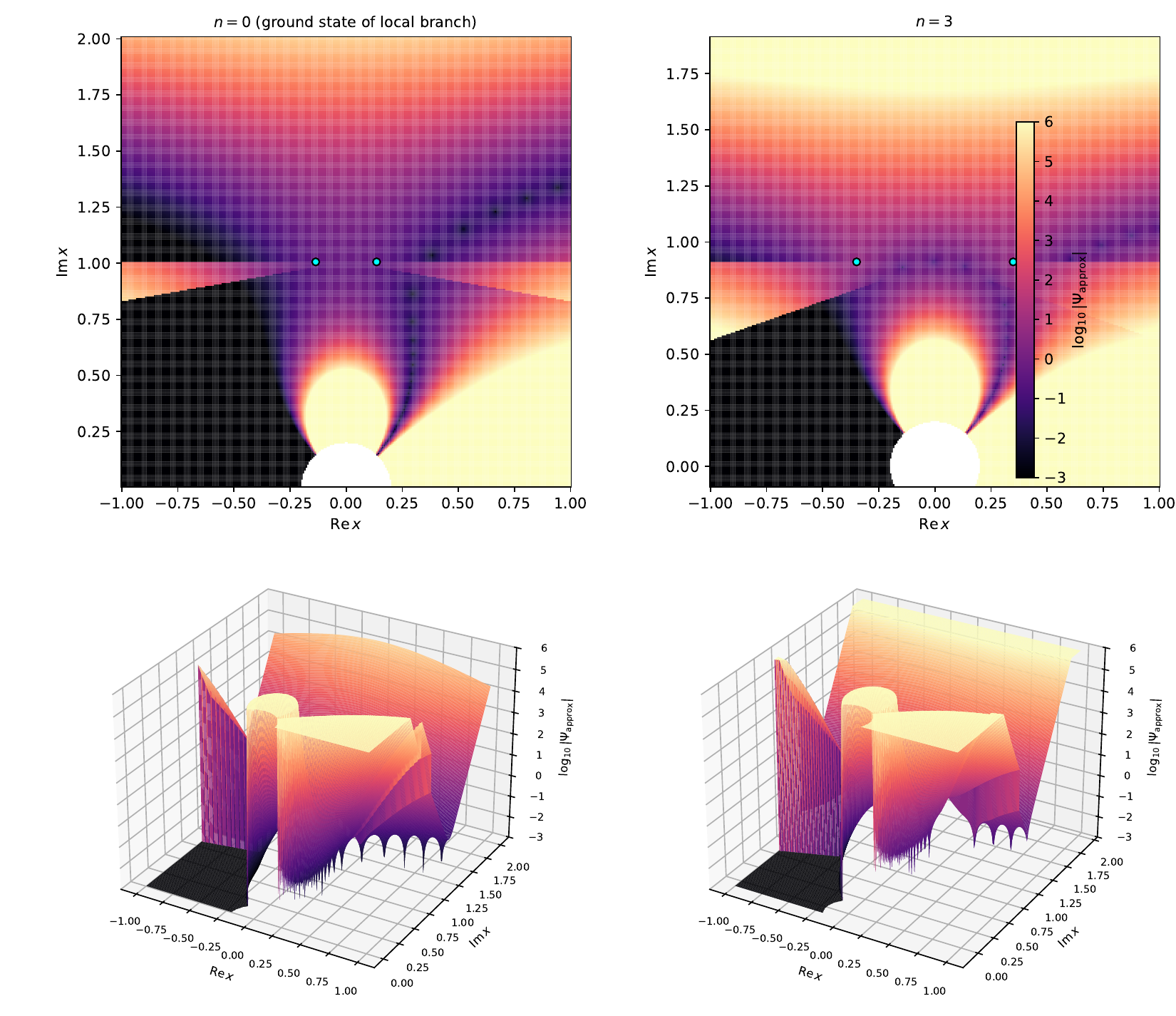}
\caption{Leading-order modulus of the Weber-approximate solution,
\(|\Psi_{\rm approx}(x)|=|(\mathrm d\zeta/\mathrm dx)^{-1/2}
D_p(\sqrt{2u}\,\zeta(x))|\), shown as a contour map (top) and as a surface
with \(\log_{10}|\Psi_{\rm approx}|\) on the vertical axis (bottom), near
the coalescing turning-point pair, for \(2|F|/V\approx45.25\) and zero
detuning \(2\hbar\Delta/V=0\) (i.e.\
\(u=8\), \(\delta=0\)).  The two columns compare the ground state of the
local branch (\(n=0\), \(2E_0/V\approx-194.1\)) with \(n=3\)
(\(2E_3/V\approx-111.0\)), using the boxed formula
\eqref{eq:strong-drive-energy}; the second coefficient of that formula is
conditional on the unfinished global error control noted there, so the
energies used here are illustrative rather than certified eigenvalues.
Markers are the exact turning points of the coalescing pair used in the
local construction, indicating where the leading local coefficient
\(u^2(\zeta^2-a_u^2)\) vanishes, separating the oscillatory sector from
the exponentially growing/decaying ones.  The faint fringes visible along
\(\arg(\sqrt{2u}\,\zeta)\approx\mp3\pi/4\) trace the anti-Stokes lines of
\(D_p\), where its two asymptotic terms have comparable magnitude and
interfere; they are a genuine feature of the comparison function, not a
numerical artifact.  The figure is purely illustrative of the local
Stokes geometry; it does not itself establish any global connection
statement, and the leading-order (\(u^{-1}\to0\)) construction used here
omits the \(\psi_u\) correction of \eqref{eq:olver-weber-remainder}.}
\label{fig:weber-modulus}
\end{figure}

To absorb the complete order-\(u\)
coefficient, take a simply connected local cut domain \(\mathcal X\)
containing the selected pair, but excluding \(x=0\) and the other four
zeros, and fix the branch of \(\zeta\) by
\begin{equation}
 \int_{\alpha_u}^{x}\sqrt{Q_u(s)}\,ds
 =\int_{-a_u}^{\zeta}\sqrt{v^2-a_u^2}\,dv,
 \qquad
 \frac{\pi i}{2}a_u^2=\int_{\alpha_u}^{\beta_u}\sqrt{Q_u(s)}\,ds,
 \label{eq:olver-weber-map}
\end{equation}
with compatible branches, \(\zeta(\alpha_u)=-a_u\), and
\(\zeta(\beta_u)=a_u\).  Reversing an orientation or a square-root branch
changes the corresponding action sign.  The Liouville transformation
\(W=(d\zeta/dx)^{1/2}w\) gives exactly
\begin{equation}
 W_{\zeta\zeta}=\{u^2(\zeta^2-a_u^2)+\psi_u(\zeta)\}W,
 \label{eq:olver-weber-transformed}
\end{equation}
where
\begin{equation}
 \psi_u=\left(\frac{dx}{d\zeta}\right)^{1/2}
 \frac{d^2}{d\zeta^2}\left(\frac{dx}{d\zeta}\right)^{-1/2}.
 \label{eq:olver-weber-remainder}
\end{equation}
Indeed, writing \(p=d\zeta/dx\) gives \(w=p^{-1/2}W\) and
\(d/dx=p\,d/d\zeta\).  Differentiating twice cancels the
\(W_\zeta\) term and leaves the Liouville contribution
\begin{equation}
 p^{-1/2}\frac{d^2}{d\zeta^2}p^{1/2}
 =-\frac34p^{-4}p_x^2+\frac12p^{-3}p_{xx},
 \label{eq:olver-liouville-identity}
\end{equation}
which is the second term in \eqref{eq:olver-weber-remainder} because
\(dx/d\zeta=p^{-1}\).
Level curves of the real part of either action in
\eqref{eq:olver-weber-map} map the original Stokes curves to the standard
Weber graph.  Canonical sectorial solutions are the appropriate rotations of
\(D_p(\sqrt{2u}\,\zeta)\), selected subdominant in the corresponding standard
Weber sectors and continued along the mapped progressive paths.

A complete global uniform error theorem is not claimed here.  Put
\(R_u(\zeta)=\psi_u(\zeta)\).  Variation of constants with a canonical Weber pair
\(U_1,U_2\), Wronskian \(\mathcal W\), gives the exact equation
\begin{equation}
 \epsilon(\zeta)=\frac1{\mathcal W}
 \int_{\Gamma(\zeta)}[U_1(\zeta)U_2(v)-U_2(\zeta)U_1(v)]
 R_u(v)[U(v)+\epsilon(v)]\,dv.
 \label{eq:weber-error-integral}
\end{equation}
If the kernel divided by the selected Weber modulus/envelope is bounded by
\(K_u(\zeta,v)\) along the progressive path, successive approximation gives
the concrete absolute bound
\begin{equation}
 |\epsilon(\zeta)|\leq M_U(\zeta)
 \left\{\exp\!\left(\int_{\Gamma(\zeta)}
 K_u(\zeta,v)|R_u(v)|\,|dv|\right)-1\right\}.
 \label{eq:weber-conditional-bound}
\end{equation}
It remains meaningful across zeros because it is absolute, not relative.
The construction is conditional on uniform finiteness of the displayed
variation.  Its global completion requires one domain reaching both the
irregular origin and the outer matching sectors; the displayed estimate is
the error-control criterion for that connection problem.

The endpoint expansion makes the obstruction explicit.  In a fixed
punctured origin sector,
\begin{equation}
 \sqrt{Q_u(x)}=\frac2{x^3}-\frac{2(1-\delta)}{ux}+O(x),\qquad
 \int^x\sqrt{Q_u(s)}\,ds=-\frac1{x^2}
 -\frac{2(1-\delta)}u\log x+O(1),
 \label{eq:strong-drive-origin-action}
\end{equation}
up to the simultaneous branch-dependent sign and an additive constant.
The Weber action is
\(\zeta^2/2-(a_u^2/2)\log\zeta+O(1)\), so
\(\zeta\sim\pm i\sqrt2/x\), sector by sector.  At infinity,
\(\int^x\sqrt{Q_u}\,dx=\sqrt{-8}\,x+O(x^{-1})\), whence
\(\zeta^2/2\sim\sqrt{-8}\,x\).  Thus both \(x=0\) and \(x=\infty\) lead to
different sectors of \(\zeta=\infty\), while the additional simple turning
points remain relevant to possible continuations.  No univalent single
Weber domain with all asserted endpoints has been proved.

For reference, \(Z=\sqrt{2u}\,\zeta\) gives
\begin{equation}
 W_{ZZ}=\left(\frac{Z^2}{4}-p_u-\frac12\right)W,\qquad
 p_u+\frac12=\frac{u a_u^2}{2}.
 \label{eq:strong-drive-weber-index}
\end{equation}
The exact parabolic-cylinder formula DLMF~12.2.19 \cite{DLMF12}, together with
\(D_p(Z)=U(-p-1/2,Z)\) (DLMF~12.2.5), contains a model coefficient
proportional to \(1/\Gamma(-p)\).  Its zeros cannot yet be identified with
physical zeros because the required rotations, origin branches, gauge
monodromy, and progressive domain chain have not been constructed.  The
first missing hypothesis is therefore the comparison-domain hypothesis,
before any assertion that the Volterra variation is small.  A minimal repair
is a finite Weber--Airy--endpoint domain-chain lemma proving univalence,
progressive paths, sheet exchange, and \(o(1)\) transition errors.

The relation to the physical wavefunction must be made before naming either
comparison solution.  Combining \eqref{eq:olver-exact-substitution},
\(t=\eta^{1/6}x\), and \(W=(d\zeta/dx)^{1/2}w\) gives the complete pullback
\begin{equation}
 \boxed{\mathcal T_uW:=\Psi(z)=
 t^{1/2-B_2}\exp\!\left(\frac{B_1}{2t^2}\right)
 \left(\frac{d\zeta}{dx}\right)^{-1/2}W(\zeta(x)),}
 \quad z=t^2,\quad B_1=2\eta,\quad B_2=2\delta.
 \label{eq:complete-weber-pullback}
\end{equation}
Here \(B_1/(2t^2)=u/x^2\).  On a consistent endpoint branch,
\eqref{eq:strong-drive-origin-action} and
\(\zeta\sim\pm i\sqrt2/x\) permit neutral canonical labels such that
\begin{equation}
 W_1\sim x^{-3/2+2\delta}e^{-u/x^2},\qquad
 W_2\sim x^{5/2-2\delta}e^{u/x^2},\qquad
 \left(\frac{d\zeta}{dx}\right)^{-1/2}\sim c\,x .
 \label{eq:neutral-weber-origin-pair}
\end{equation}
Since \(t^{1/2-2\delta}\) differs from \(x^{1/2-2\delta}\) only by a
nonzero \(u\)-dependent constant, direct multiplication gives
\begin{equation}
 \mathcal T_uW_1\sim1,\qquad
 \mathcal T_uW_2\sim x^{4-4\delta}e^{2u/x^2}
 \sim z^{2-2\delta}e^{2\eta/z}.
 \label{eq:weber-cancellation-enhancement}
\end{equation}
This is the covering-variable version of
\eqref{eq:compensated-branch}--\eqref{eq:enhanced-branch}, under the
identification \(B_1=2\eta\), \(B_2=2\delta\) already recorded in
\eqref{eq:complete-weber-pullback}.  Reversing the action or
square-root branch exchanges \(W_1,W_2\), but not the invariant distinction
between cancellation and enhancement of the gauge exponential.  Thus a
generic continued solution has the formal endpoint form
\begin{equation}
 \Psi(z;E)\sim C_{\rm reg}(E)[1+\cdots]+C_{\rm ess}(E)
 z^{2-2\delta}e^{2\eta/z}[1+\cdots].
 \label{eq:essential-endpoint-coefficient}
\end{equation}
Cancellation is not automatic, and Stokes continuation may restore a
\(W_2\) component even if it is absent initially.  These are formal endpoint
relations, not a proof of sectorial existence or holomorphic patching.

\begin{conjecture}[Global compensated-to-essential connection coefficient]
\label{conj:essential-connection}
There exists a finite canonical domain chain \(\Gamma_{\rm phys}\), with the
required sheet exchange, and canonically normalized bases
\((W_1^{(0)},W_2^{(0)})\) and \((W_1^{(m)},W_2^{(m)})\) in its initial and
terminal origin sectors, satisfying
\[
 \mathcal T_uW_1^{(j)}\sim1,\qquad
 \mathcal T_uW_2^{(j)}\sim z^{2-2\delta}e^{2\eta/z}.
\]
Its continuation operator obeys
\[
 \mathcal A_{\Gamma_{\rm phys}}W_1^{(0)}
 =C_{\rm reg}W_1^{(m)}+C_{\rm ess}W_2^{(m)},
\]
where, with the Wronskian order shown,
\begin{equation}
 \boxed{C_{\rm ess}(E;u,\delta)=
 \frac{\mathscr W_\zeta(
 \mathcal A_{\Gamma_{\rm phys}}W_1^{(0)},W_1^{(m)})}
 {\mathscr W_\zeta(W_2^{(m)},W_1^{(m)})}.}
 \label{eq:essential-wronskian}
\end{equation}
Bilinearity gives the numerator as \(C_{\rm ess}\) times the denominator,
fixing the sign.  Nonzero renormalizations multiply the coefficient by a
nonzero factor and leave its zero set unchanged.

Physical conjugation and sheet symmetries reduce the cancellation conditions
to \(C_{\rm ess}=0\), equivalent to a single-valued holomorphic germ at zero.
On the physical slice its zeros, with multiplicities, coincide with those of
\(\Xi\).  Equivalently, locally in \(E\) and away from the poles of \(\Xi\),
\(C_{\rm ess}=G\,\Xi\) for a holomorphic nonvanishing \(G\).  Near a fixed
level the expected Weber form is
\[
 C_{\rm ess}=\frac{A(E;u,\delta)}{\Gamma[-p_u(E)]}
 +R(E;u,\delta),\qquad A\ne0,\qquad R=O(u^{-1}A).
\]
The domain chain, symmetry reduction, determinant equivalence, and remainder
estimate in this statement are all conjectural.
\end{conjecture}

The independent formal operator prediction associated with this conjectural
zero localization is \eqref{eq:strong-drive-energy-constant}, with the
explicit coefficient \(K_n(\delta)\) in \eqref{eq:strong-drive-K}.  Its
operator derivation does not prove any assertion in the conjecture.

The modified-Bessel equation remains useful only outside this region.  For
\begin{equation}
 w_0''=(a^2-C_{\rm nf}/t^2)w_0,\qquad a^2=-8\eta,
\end{equation}
its solutions are \(\sqrt t I_\nu(at)\) and \(\sqrt t K_\nu(at)\), whose
Wronskian is \(1\) up to orientation, with
\begin{equation}
 \nu^2=\frac14-C_{\rm nf}=8E/V+(2\delta-1)^2.
 \label{eq:bessel-order}
\end{equation}
Direct differentiation checks both statements.  In the strong-drive scaling,
\(at\) and \(\nu\) are of order \(u\), generally with complex argument and
order.  Since the omitted inverse-sixth-power term is leading near the pair,
this is only an outer comparison.  Exact-to-Bessel Volterra error and
large-order/large-argument Bessel truncation error must be estimated
separately; no combined estimate is asserted.

Quantization must not discard an infinity branch: both have
\(\exp(O(|z|^{1/2}))\) growth.  Before the complete pullback, use neutral
sectorial labels \(W_1^{(j)},W_2^{(j)}\).  On overlaps one may formally write
\begin{equation}
 W_1^{(j+1)}=W_1^{(j)}+s_j(E)W_2^{(j)}.
 \label{eq:origin-stokes-multiplier}
\end{equation}
Conditional on constructing canonical sectorial solutions with these
asymptotics on sectors covering a punctured neighbourhood of the irregular
origin, the \(W_1\) branches selected by the pullback would patch to a
holomorphic germ exactly when
all independent inadmissible coefficients \(s_j(E)\) vanish and the algebraic
monodromy cancels as in \eqref{eq:compensated-branch}.  At present the
\(s_j\) are formal sectorial connection data: their required global
construction and evaluation have not been supplied.  Before physical
conjugation and sheet symmetry they constitute several conditions; a proved
reduction to one physical scalar Olver condition is also absent.  The exact
scalar condition available independently from the earlier entireness theorem
is
\begin{equation}
 \boxed{\Xi(E)=0.}
 \label{eq:strong-drive-exact-quantization}
\end{equation}
This is not a newly derived Olver connection formula.  Its zeros are exactly
the Bargmann eigenvalues by
Theorem~\ref{thm:bargmann-characteristic}.  A nonzero rescaling of a sectorial
solution only rescales its Wronskians and connection coefficients, leaving
their common zero set unchanged.  The individual \(s_j\) remain formal;
Conjecture~\ref{conj:essential-connection} gives a precise proposed global
coefficient, not its construction or evaluation.

The local Weber reduction determines the first two fixed-level coefficients,
the second conditionally on identifying the physical global connection zero
with the canonical Weber zero.  Write \(e=e_0+e_1/u+O(u^{-2})\).  At
\(x=i,e_0=-3/2\),
\begin{equation}
 \tfrac12f_{xx}=48,\qquad f_e=-8,\qquad g=-8(1-\delta).
\end{equation}
For \(X=48^{1/4}u^{1/2}(x-i)\), the leading local equation is
\begin{equation}
 w_{XX}=\{X^2+\lambda\}w,\qquad
 \lambda=-\frac{2}{\sqrt3}(e_1+1-\delta).
\end{equation}
The canonical solution \(D_p(\sqrt2X)\) has \(\lambda=-2p-1\).  Its
fixed-level connection zero \(p=n\) therefore gives
\begin{equation}
 \boxed{\frac{E_n}{V}=-\frac32\eta^{4/3}
 +\left[\delta+\sqrt3\left(n+\frac12\right)-1\right]\eta^{2/3}
 +O(1),\qquad n\ \hbox{fixed}.}
 \label{eq:strong-drive-energy}
\end{equation}
The leading coefficient follows rigorously from the double-root geometry;
the second remains conditional on the unfinished global error-control and
connection step.  No uniformity for growing \(n\) is claimed.

\subsection{Independent Bogoliubov operator check}

As an independent operator check, put \(a=b-r\), where
\(r>0\) satisfies \(r^3+\delta r=\eta\).  Exact normal-ordered expansion,
including cancellation of the linear term, gives
\begin{align}
 \frac HV={}&\left(\frac12r^4+\delta r^2-2\eta r\right)
 +(2r^2+\delta)b^\dagger b
 +\frac{r^2}{2}(b^{\dagger2}+b^2)\nonumber\\
 &-r(b^{\dagger2}b+b^\dagger b^2)+\frac12b^{\dagger2}b^2.
 \label{eq:shifted-hamiltonian}
\end{align}
Using \(\eta=r^3+\delta r\), the constant is exactly
\(-3r^4/2-\delta r^2\).
Thus the exact quadratic coefficient is \(A=2r^2+\delta\).  Bogoliubov
diagonalization gives
\(\Omega=\sqrt{(2r^2+\delta)^2-r^4}=\sqrt3r^2+O(1)\) and quadratic energy
\begin{equation}
 \frac{E_n^{(2)}}V=\frac12r^4+\delta r^2-2\eta r
 +\Omega\left(n+\frac12\right)-\frac{2r^2+\delta}{2}.
\end{equation}
Choose the real transformation \(b=Cc+Sc^\dagger\), where
\(\tanh(2\theta)=-r^2/(2r^2+\delta)\), \(C=\cosh\theta\), and
\(S=\sinh\theta\).  At leading order
\[
 C^2+S^2=\frac2{\sqrt3},\qquad 2CS=-\frac1{\sqrt3}.
\]
Parity makes the first-order cubic correction vanish.  Exact ladder-operator
algebra in the \(c\)-number basis gives the first-order quartic and
second-order cubic terms
\begin{align}
 \Delta_{4,n}^{(1)}&=
 \frac{18n^2+(18-16\sqrt3)n+15-8\sqrt3}{24},\nonumber\\
 \Delta_{3,n}^{(2)}&=-
 \frac{30n^2+(30-24\sqrt3)n+23-12\sqrt3}{36},
 \label{eq:cubic-quartic-constant}
\end{align}
and hence
\begin{equation}
 \Delta_{4,n}^{(1)}+\Delta_{3,n}^{(2)}
 =-\frac{6n^2+6n+1}{72}.
\end{equation}
The expansion
\[
 r=\eta^{1/3}-\frac{\delta}{3}\eta^{-1/3}
 +\frac{\delta^3}{81}\eta^{-5/3}+O(\eta^{-7/3})
\]
in the exact quadratic energy contributes
\(\delta(1-2\delta)/6\) at order one; its other \(n\)-dependent constant
terms cancel internally.  Thus formal Rayleigh--Schr\"odinger perturbation
theory, independently of the global connection conjecture, yields
\begin{equation}
 \boxed{\begin{aligned}
 \frac{E_n}{V}={}&-\frac32\eta^{4/3}
 +\left[\delta+\sqrt3\left(n+\frac12\right)-1\right]\eta^{2/3}\\
 &+\frac{\delta(1-2\delta)}6-\frac{6n^2+6n+1}{72}
 +K_n(\delta)\eta^{-2/3}+O(\eta^{-4/3})
 \end{aligned}}
 \label{eq:strong-drive-energy-constant}
\end{equation}
where the exactly evaluated coefficient is
\begin{align}
 K_n(\delta)=\frac1{2592}\bigl\{&
 96\delta^3+(144\sqrt3\,n-288+72\sqrt3)\delta^2\nonumber\\
 &+(144n^2-288\sqrt3\,n+144n-144\sqrt3+312)\delta\nonumber\\
 &+10\sqrt3\,n^3+(15\sqrt3-144)n^2
 +(139\sqrt3-144)n+67\sqrt3-120\bigr\}.
 \label{eq:strong-drive-K}
\end{align}
This is an algebraically verified coefficient of the formal fixed-\(n\)
Rayleigh--Schr\"odinger expansion, not a consequence of the unproved Olver
connection conjecture and not a remainder estimate uniform in \(n\).
It implies
\begin{align}
 \frac{E_{n+1}-E_n}{V}
 &=\sqrt3\,\eta^{2/3}-\frac{n+1}{6}
 +L_n(\delta)\eta^{-2/3}+O(\eta^{-4/3}),\nonumber\\
 \frac{E_{n+2}-2E_{n+1}+E_n}{V}
 &=-\frac16+M_n(\delta)\eta^{-2/3}+O(\eta^{-4/3}),
 \label{eq:strong-drive-spacing}
\end{align}
with
\begin{align}
 L_n(\delta)&=\frac1{1296}\bigl[
 72\sqrt3\,\delta^2+(144n-144\sqrt3+144)\delta\nonumber\\
 &\hspace{22mm}+15\sqrt3\,n^2+(30\sqrt3-144)n
 +82\sqrt3-144\bigr],\nonumber\\
 M_n(\delta)&=\frac{48\delta+10\sqrt3\,n+15\sqrt3-48}{432}.
\end{align}

For reproducibility, set \(\lambda=r^{-1}\) and expand
\(\lambda^2(H/V-E_{\rm vac}^{(2)})=\sum_{j=0}^4\lambda^jK_j+O(\lambda^5)\).
Intermediate-normalization recursion retains every reachable number state
and gives \(\kappa_1=\kappa_3=0\).  The independently evaluated finite-sum
organization includes the finite-\(r\) vertex and denominator corrections,
\(V_4^2\), all \(V_3^2V_4\) permutations, \(V_3^4\), and both normalization
subtractions.  The two calculations give identical \(\kappa_2,\kappa_4\);
only after their combination is \(r^3+\delta r=\eta\) re-expanded to obtain
\eqref{eq:strong-drive-K}.

For arbitrary phase, applying
\(U_\chi=e^{i\chi N}\), \(\chi=-\arg F\), transports the positive-drive
eigenfunction; the formal expansion \eqref{eq:strong-drive-energy-constant}
remains valid with
\(\eta=|F|/V\).

This section identifies the coalescing pair of turning points that
controls the strong-drive limit and reduces the local connection problem,
via the covering-variable normal form, to a parabolic-cylinder comparison
equation following Olver's two-turning-point construction; this is one of
two independent WKB routes to the strong-drive asymptotics developed in
this work.  It states as a conjecture the global compensated-to-essential
connection coefficient whose vanishing would give the exact quantization
condition, with the leading energy term rigorous from the double-root
geometry alone and the remaining terms conditional on that conjecture.  A
separate, non-WKB Bogoliubov operator expansion then reproduces the same
leading and next-order energy coefficients as an algebraically verified
formal Rayleigh--Schr\"odinger result, independently of the connection
conjecture, and supplies a further coefficient not reached by the
conjectural route.

\section{An independent complex-WKB route}
\label{sec:sectorial-solutions}

The strong-drive equation also admits a direct Fedoryuk normal form, without
passing through the preceding covering transformation.  With \(F=|F|>0\),
\(-B_3=2E/V=2\eta^{4/3}e\), the exact substitution
\begin{equation}
 \begin{aligned}
  \Psi(z)&=z^{-\delta}e^{\eta/z}u(z),\\
  z&=\eta^{1/3}y, & \varepsilon&=\eta^{-2/3}.
 \end{aligned}
 \label{eq:fedoryuk-direct-map}
\end{equation}
on a fixed punctured sector gives
\begin{equation}
 \begin{aligned}
  \varepsilon^2u_{yy}
   &=\left\{Q_0+\varepsilon Q_1+\varepsilon^2Q_2\right\}u,\\
  Q_0&=\frac{P(y;e)}{y^4}, &
  Q_1&=\frac{2(\delta-1)}{y^3}, &
  Q_2&=\frac{\delta(\delta-1)}{y^2}.
 \end{aligned}
 \label{eq:fedoryuk-normal-form}
\end{equation}
where \(P(y;e)=1+2ey^2-2y^3\).  Thus the natural small
parameter is \(\eta^{-2/3}\), rather than \(\eta^{-1/3}\), and the leading
coefficient is meromorphic with a fourth-order pole at the genuine irregular
point \(y=0\).

For generic \(e\), the three zeros of \(P\) are simple.  At
\((y,e)=(-1,-3/2)\) two coalesce to a double turning point, while \(y=1/2\)
remains simple.  A branch
\(p=\sqrt{Q_0}\) on a cut domain defines
\(S_j(y)=\int_{y_j}^yp(t)\,dt\).  We use Fedoryuk's convention that Stokes
curves satisfy \(\Re S_j=0\); the terminology is reversed in part of the
physics literature.  The exact coefficient satisfies the local analytic
prerequisites for canonical-domain WKB constructions on a cut domain that
excludes the pole and turning points and for which progressive paths are
separately verified \cite[Chapter~III, Sections~2 and~4]{Fedoryuk1993Asymptotic}.
The local Airy model is appropriate at a separated simple point, whereas the
critical pair requires the multiple or two-close-turning-point models
\cite[Chapter~IV, Sections~2, 4, and~6--7]{Fedoryuk1993Asymptotic}.  The
complete required collection of such domains and paths has not been proved.

\begin{figure}[t]
 \centering
 \includegraphics[width=.92\linewidth]{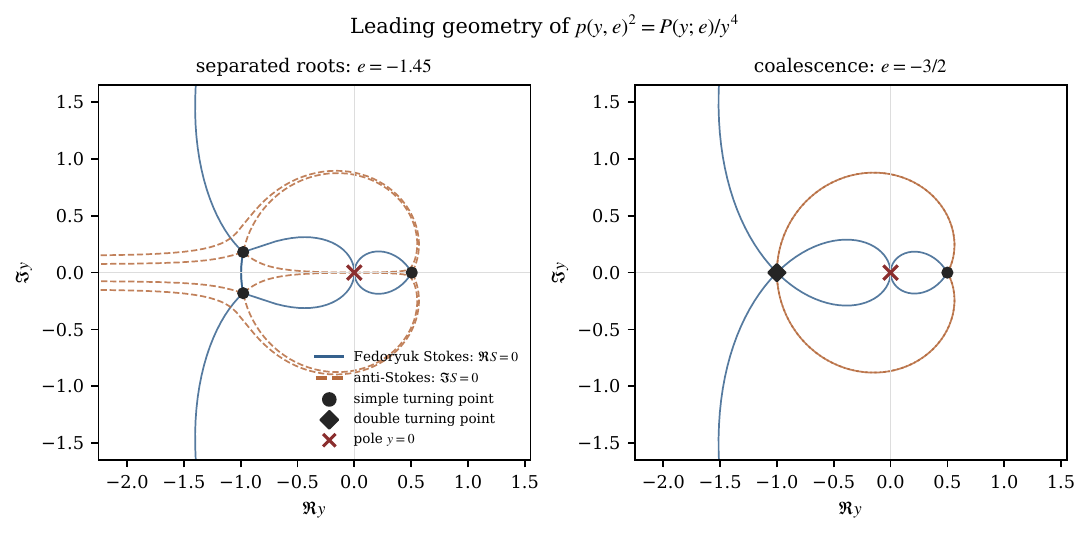}
 \caption{Leading quadratic-differential geometry for
 \(p^2=P(y;e)/y^4\), for a nearby separated configuration and at the real
 coalescence.  Turning-point locations, multiplicities, and initial incidence
 directions are analytic; the displayed continuations are numerically traced
 and do not prove global connectivity.  Solid curves use the Fedoryuk Stokes
 convention
 \(\Re S=0\), and dashed curves satisfy \(\Im S=0\).}
 \label{fig:fedoryuk-geometry}
\end{figure}

The close pair also has a direct local Weber reduction.  Set
\(e=-3/2+\varepsilon e_1+O(\varepsilon^2)\) and
\(y=-1+\varepsilon^{1/2}X\).  Expansion of the complete coefficient in
\eqref{eq:fedoryuk-normal-form}, including \(Q_1\), gives
\begin{equation}
 u_{XX}=\left\{3X^2+2(e_1+1-\delta)
 +O(\varepsilon^{1/2})\right\}u.
 \label{eq:fedoryuk-local-weber}
\end{equation}
With \(s=3^{1/4}X\), this is
\(u_{ss}=(s^2+\Lambda)u\), where
\(\Lambda=2(e_1+1-\delta)/\sqrt3\).  We use
\(D_\nu(Z)=U(-\nu-1/2,Z)\), so that
\(D_\nu(\sqrt2s)\) obeys
\(u_{ss}=\{s^2-(2\nu+1)\}u\).

There are two inequivalent opposite subdominant-sector pairs.  For the pair
centred on the real \(s\) axis, the Weber connection zero \(\nu=n\) gives
\begin{equation}
 e_{1,n}^{(R)}=\delta-1-\sqrt3(n+1/2),\qquad
 u_{n,R}^{(0)}=\phi_n(s),
 \label{eq:fedoryuk-real-branch}
\end{equation}
where
\(\phi_n(w)=\pi^{-1/4}(2^n n!)^{-1/2}e^{-w^2/2}H_n(w)\).
For the pair centred on the imaginary axis, put \(t=-is\).  The corresponding
zero instead gives
\begin{equation}
 e_{1,n}^{(I)}=\delta-1+\sqrt3(n+1/2),\qquad
 u_{n,I}^{(0)}=\phi_n(-is).
 \label{eq:fedoryuk-imaginary-branch}
\end{equation}
Indeed, the rotation reverses the sign in the index relation.  Since exactly
\(X=z+\eta^{1/3}\), their sectorial Bargmann pullbacks are
\begin{equation}
 \Psi_{n,\sigma}^{\rm loc}(z)=z^{-\delta}e^{\eta/z}
 \phi_n\!\left(c_\sigma3^{1/4}[z+\eta^{1/3}]\right),
 \quad c_R=1,\quad c_I=-i.
 \label{eq:fedoryuk-local-pullback}
\end{equation}
The logarithm in \(z^{-\delta}\) is fixed on the local sector; this expression
is centred near \(z=-\eta^{1/3}\), excludes the origin, and has only a local
normalization.  Neither the gauge factor nor the displayed product is thereby
proved to be a global Bargmann eigenfunction.

Thus the two local candidate energies are
\begin{equation}
 \frac{E_{n,\sigma}}{V}=-\frac32\eta^{4/3}
 +e_{1,n}^{(\sigma)}\eta^{2/3}+e_{2,n}^{(\sigma)}
 +O(\eta^{-2/3}),
 \qquad \sigma=R,I.
 \label{eq:fedoryuk-local-energies}
\end{equation}
Local intermediate-normalization solvability at the next order gives, for
both sector orientations,
\begin{equation}
 e_{2,n}^{(R)}=e_{2,n}^{(I)}
 =\frac{\delta(1-2\delta)}6-\frac{6n^2+6n+1}{72}.
 \label{eq:fedoryuk-local-constant}
\end{equation}
The corrected Stokes geometry does not select between these pairs.  We call
the assertion that the compensated Bargmann continuation selects the
imaginary-axis pair the \emph{Fedoryuk--Bargmann orientation hypothesis}.
Its positive-\(\sqrt3\) coefficient agrees with the independent
Bogoliubov result \eqref{eq:strong-drive-energy}, but this is only a
consistency check, not a proof of the hypothesis.  The first
\(O(\varepsilon^{1/2})\) local correction is obtained from an inhomogeneous
Weber equation; its odd cubic-plus-linear forcing couples only the Hermite
indices \(n\mathbin{\pm}1,n\mathbin{\pm}3\), and its diagonal solvability
term vanishes.  More precisely, if
\(A=10/3^{5/4}\) and
\(B_I=[2-2\delta+4\sqrt3(n+1/2)]/3^{3/4}\), then \(t=-is\) gives the
rotated differential perturbation
\begin{equation}
 P_I(t)=i(At^3-B_It),
 \qquad (\mathcal{L}_{\rm Herm}-2n-1)u_{1,I}=-P_I\phi_n.
 \label{eq:fedoryuk-rotated-perturbation}
\end{equation}
Thus its cubic and linear terms have opposite relative signs; they do not
acquire one common phase.  Exact finite Hermite sums solve both first-order
inhomogeneous equations, and substitution of \(u_0+\varepsilon^{1/2}u_1\)
leaves an absolute local residual \(O(\varepsilon)\), whereas \(u_0\) alone
has residual \(O(\varepsilon^{1/2})\).  For the rotated pair the first- and
second-order solvability conditions use the analytically continued contour
bilinear form, not a real-axis \(L^2\) inner product.  The equality of
\eqref{eq:fedoryuk-local-constant} with the independent operator constant in
\eqref{eq:strong-drive-energy-constant} is a consistency check only.

This yields a verified sectorial applicability framework.  Its global
completion is a finite chain connecting the gauge-compensated origin branch
to the sectors needed for a single-valued Bargmann germ, with control of the
close pair, any forced Airy passage, overlaps, and gauge/sheet monodromy.
Until that chain is established, a Fedoryuk spectral condition cannot be
identified with \(\Xi(E)=0\).  The local scaling
\(e+3/2=O(\varepsilon)\) independently explains energy corrections in steps
of \(\eta^{2/3}\), but assigning their \(n\)-dependent coefficients to
physical eigenvalues still requires the open global connection argument.

Set \(\mathcal A=(D_\nu(Z),D_{-\nu-1}(iZ))^T\) and
\(\mathcal B=(D_\nu(-Z),D_{-\nu-1}(-iZ))^T\), with principal arguments and
the indicated rotations.  Their members are subdominant about the
positive/negative real and negative/positive imaginary axes, respectively.
DLMF 12.2.17--12.2.20 \cite{DLMF12} and
\(D_\nu(Z)=U(-\nu-1/2,Z)\) give two independent identities:
\begin{equation}
 \begin{gathered}
 c_W=\frac{\sqrt{2\pi}e^{-i\pi(\nu+1)/2}}{\Gamma(-\nu)},\qquad
 d_W=\frac{\sqrt{2\pi}e^{i\pi\nu/2}}{\Gamma(\nu+1)},\\
 D_\nu(Z)=e^{-i\pi\nu}D_\nu(-Z)+c_WD_{-\nu-1}(iZ),\\
 D_{-\nu-1}(iZ)=d_WD_\nu(-Z)-e^{i\pi\nu}D_{-\nu-1}(-iZ),\\
 \mathcal A=M_W\mathcal B,\qquad
 M_W=\begin{pmatrix}e^{i\pi\nu}&-e^{i\pi\nu}c_W\\
 d_W&-e^{i\pi\nu}\end{pmatrix},\quad
 M_W^{-1}=M_W,\quad\det M_W=-1.
 \end{gathered}
 \label{eq:fedoryuk-weber-matrix}
\end{equation}
Indeed \(c_Wd_W=e^{i\pi\nu}-e^{-i\pi\nu}\), while the ordered
Wronskians are \(-ie^{-i\pi\nu/2}\) and \(ie^{-i\pi\nu/2}\).  Both
bases are nondegenerate for all complex \(\nu\), including \(\nu=n\),
where \(c_W(n)=0\) is simple and the matrix has a regular limit.  This exact
model matrix is not yet the physical transfer matrix.  In transported WKB
bases the latter has only the schematic factorization
\begin{equation}
 M_{\rm glob}=B_{\rm out}M_A^{\chi_A}B_2M_WB_1G_0,
 \qquad \chi_A\in\{0,1\},
 \label{eq:fedoryuk-global-matrix-scheme}
\end{equation}
where \(M_A\) occurs only if the continuation is forced past the remaining
simple point, the \(B_j\) are overlap and sheet-change matrices, and \(G_0\)
is the origin gauge transition.  Its entries cannot be assigned before the
paths and branches are constructed.  Absorb the exterior factors and
boundary data into \(\ell=(\ell_1,\ell_2)^T\) and
\(r=(r_1,r_2)^T\).  The candidate inadmissible coefficient is
\begin{equation}
 \ell^TM_Wr=e^{i\pi\nu}(\ell_1r_1-\ell_2r_2)
 -e^{i\pi\nu}c_W\ell_1r_2+d_W\ell_2r_1.
 \label{eq:fedoryuk-global-contraction}
\end{equation}
General exterior matrices therefore mix every Weber entry.  The choice
\(\ell=r=(1,0)^T\), for example, gives \(e^{i\pi\nu}\), so
\(c_W(n)=0\) is not a global zero; an unknown Airy transition can likewise
move or remove it.  Factorization through \(c_W\) requires
\(e^{i\pi\nu}(\ell_1r_1-\ell_2r_2)+d_W\ell_2r_1=0\) and
\(\ell_1r_2\ne0\).  Sectorial dominance alone does not prove these exact
alignment conditions.  On a positive circuit the factor
\(z^{-\delta}\) has multiplier \(e^{-2\pi i\delta}\); the transformed
solution must therefore acquire \(e^{2\pi i\delta}\) for their product to be
single-valued.

\begin{conjecture}[Global Fedoryuk--Bargmann connection hypotheses]
\label{conj:fedoryuk-bargmann-lemma}
H1 assumes only a uniformly finite chain of punctured-origin, WKB,
close-pair Weber and, if needed, Airy domains, with uniform overlaps,
progressive paths, controlled pole and comparison transitions, compatible
sheets and monodromy, and identification of the initial solution with the
gauge-compensated Bargmann germ.  It assumes no orientation, gamma zero, or
energy formula.  H2 assumes only that one analytic boundary contraction,
minor, or determinant, with controlled remainder, measures the final
inadmissible branch.  A separate H3 is required: the transported boundary
vectors preserve the factorizing condition following
\eqref{eq:fedoryuk-global-contraction}, to sufficient uniform order, with a
nonzero coefficient of \(c_W\).  Neither H1 nor H2 is known to imply H3.
\end{conjecture}

The verified local WKB and turning-point results isolate the remaining global
task: a univalent chain with progressive paths at the fourth-order pole,
uniform overlap errors, and Bargmann identification.  Under H1--H3, emergence
of the imaginary orientation from the constructed path, and independent
identification with \(\Xi(E)=0\), the simple gamma zero and a
Rouch\'e argument yield
\begin{equation}
 \begin{aligned}
 \mathcal C_{\rm Fed}(E,\eta,\delta)=0
 \quad\Longrightarrow\quad
 \frac{E_n}{V}={}&-\frac32\eta^{4/3}\\
 &+\left[\delta+\sqrt3\left(n+\frac12\right)-1\right]\eta^{2/3}\\
 &+\frac{\delta(1-2\delta)}6-\frac{6n^2+6n+1}{72}
 +O(\eta^{-2/3}).
 \end{aligned}
 \label{eq:fedoryuk-conditional-quantization}
\end{equation}
Thus the displayed energy is a consequence of this strengthened conditional
package, not of H1 or local Weber theory alone.
The positive-\(\sqrt3\) sign agrees independently with Bogoliubov and Olver,
and the constant agrees with formal Rayleigh--Schr\"odinger theory; neither
agreement proves the global connection.  In particular the known
\(K_n(\delta)\eta^{-2/3}\) term in
\eqref{eq:strong-drive-energy-constant} remains a formal operator
coefficient, not a Fedoryuk consequence.

This section develops the second independent WKB route to the strong-drive
asymptotics, working directly with the Fedoryuk normal form of the original
equation rather than through the covering-variable transformation of the
preceding section.  It identifies two candidate sectorial energy branches
and introduces, as a hypothesis, the Fedoryuk--Bargmann orientation
selecting one of them, noting that its agreement with the independently
computed Bogoliubov coefficients is a consistency check, not a proof.  A
further global-connection hypothesis is required before the resulting
quantization condition can be identified with the exact spectral condition
established elsewhere in the paper.

\section{Weak-drive expansion and edge resummation}
\label{sec:weak-drive}

The strong-drive limits above are distinct from the regular weak-drive
problem.  Put
\begin{equation}
 \delta=\frac{\hbar\Delta}{V},\qquad f=\frac{F}{V},\qquad
 \lambda=|f|\ll1.
\end{equation}
After a number-operator phase rotation we may take $f=\lambda>0$, and
\begin{equation}
 h=\frac{H}{V}=h_0+\lambda(z+\partial_z),\qquad
 h_0=\frac{1}{2}z^2\partial_z^2+\delta z\partial_z .
 \label{eq:weak-drive-split}
\end{equation}
In particular, the entire number term belongs to the zeroth-order operator.
Its normalized eigenfunction and energy are
\begin{equation}
 \Psi_N^{(0)}(z)=\frac{z^N}{\sqrt{N!}},\qquad
 e_N=\frac{N(N-1)}{2}+\delta N,\qquad N\in\Nzero .
 \label{eq:weak-drive-zero}
\end{equation}
For fixed $N$, nondegenerate perturbation theory requires
$\delta\ne-(n+N-1)/2$ for every $n\in\Nzero\setminus\{N\}$.

With intermediate normalization, parity makes the energy an even function
of $\lambda$.  Direct Rayleigh--Schr\"odinger calculation, independently
checked in the Bargmann coefficient recurrence, gives
\begin{equation}
 \frac{E_N}{V}=e_N+\lambda^2 e_N^{(2)}+\lambda^4e_N^{(4)}+O(\lambda^6),
 \qquad
 e_N^{(2)}=\frac{N}{N-1+\delta}-\frac{N+1}{N+\delta},
 \label{eq:weak-drive-energy2}
\end{equation}
and
\begin{align}
e_N^{(4)}={}&\frac{N(N-1)}{(N-1+\delta)^2(2N-3+2\delta)}
-\frac{(N+1)(N+2)}{(N+\delta)^2(2N+1+2\delta)}\notag\\
&-e_N^{(2)}\left[\frac{N}{(N-1+\delta)^2}
+\frac{N+1}{(N+\delta)^2}\right].
\label{eq:weak-drive-energy4}
\end{align}
The first term in \eqref{eq:weak-drive-energy4} is absent for $N<2$; for
$N=0$, every lower-channel contribution proportional to $N$ is omitted
before division.  Thus $e_0^{(2)}=-1/\delta$ and
$e_0^{(4)}=1/[\delta^3(1+2\delta)]$, giving $e_0^{(4)}=1/3$ at $\delta=1$.
If
$\Delta_j=e_N-e_{N+j}$, the wavefunction begins
\begin{align}
|\Psi_N\rangle={}&|N\rangle+\lambda\left(
 \frac{\sqrt N}{\Delta_{-1}}|N-1\rangle
 +\frac{\sqrt{N+1}}{\Delta_1}|N+1\rangle\right)\notag\\
&+\lambda^2\left(
 \frac{\sqrt{N(N-1)}}{\Delta_{-1}\Delta_{-2}}|N-2\rangle
 +\frac{\sqrt{(N+1)(N+2)}}{\Delta_1\Delta_2}|N+2\rangle\right)
 +O(\lambda^3).
\label{eq:weak-drive-wavefunction}
\end{align}
At order $k$, only $|n-N|\le k$ and $n-N\equiv k\pmod2$ occur.

Bender and Bettencourt reorganized growing polynomial diagonals of the
anharmonic-oscillator perturbation series and iterated that operation until
its large-order exponent matched WKB \cite{BenderBettencourt1996Multiple}.
For the present tridiagonal problem, the directly analogous monotone upper
edge can be summed exactly:
\begin{equation}
 \frac{\Psi_N(z;\lambda)}{z^N/\sqrt{N!}}\Big|_{\rm upper\ edge}
 ={}_0F_1(;2N+2\delta;-2\lambda z).
 \label{eq:weak-drive-0f1}
\end{equation}
It reorganizes the scaling $\lambda z=O(1)$.  The lower edge is only the
terminating polynomial
${}_1F_1(-N;-2N+2-2\delta;2\lambda/z)$; its apparent poles cancel after
multiplication by $z^N$.  These are selected edge families, not a sum of
the complete perturbation series.  If $b=2N+2\delta$ and
$x=-2\lambda z$, then
\begin{equation}
 {}_0F_1(;b;x)=\Gamma(b)x^{(1-b)/2}I_{b-1}(2\sqrt{x})
 =\Gamma(b)(-x)^{(1-b)/2}J_{b-1}(2\sqrt{-x}),
 \label{eq:weak-drive-bessel}
\end{equation}
with consistent branches and $b\notin\{0,-1,-2,\ldots\}$.  Its two formal
large-argument exponentials are
$\exp(\pm2\sqrt{-2\lambda z})$.  Independently, the exact equation is put
in normal form by $\Psi=z^{-\delta}e^{\lambda/z}u$, where
\begin{equation}
 u''+\left\{\frac{2\lambda}{z}
 +\frac{-2\epsilon+\delta-\delta^2}{z^2}
 +\frac{2(1-\delta)\lambda}{z^3}-\frac{\lambda^2}{z^4}\right\}u=0.
 \label{eq:weak-drive-normal-form}
\end{equation}
Hence the formal leading WKB exponent and transport power are respectively
$\pm2\sqrt{-2\lambda z}$ and $z^{1/4-\delta}$.  The dominant Bessel
asymptotic has the same power because
$z^Nx^{(1-2b)/4}\propto z^{1/4-\delta}$.  This comparison requires the
further overlap $1\ll|x|$ in a sector with a fixed square-root branch; it
is not a global WKB theorem.

The first upper subdiagonal can be incorporated by a genuine multiplicative
step.  Write $Y_0={}_0F_1(;b;x)$, $r=Y_0'/Y_0$, and
\begin{equation}
 \frac{\Psi_N}{z^N/\sqrt{N!}}=Y_0(x)Z(x),\qquad
 Z=1+\lambda^2Z_1+O(\lambda^4).
\end{equation}
Direct substitution in the exact Bargmann equation gives
\begin{equation}
 \mathcal A_bZ-\lambda^2\left[2Z'+2\left(r+\frac Nx\right)Z\right]
 -(\epsilon-e_N)Z=0,
 \quad
 \mathcal A_b=\frac{x^2}{2}\partial_x^2+
 \left(x^2r+\frac{bx}{2}\right)\partial_x .
 \label{eq:weak-drive-dressed}
\end{equation}
Hence
$\mathcal A_bZ_1=2(r+N/x)+e_N^{(2)}$ and locally
$Z_1=Y_1/Y_0$, where $Y_1$ is the additive sum.
For $N\ge1$, $Y_1=-4N[(b-2)x]^{-1}+O(x)$; for $N=0$ the
$x^{-1}$ term is absent, including the regular case $b=2$.
Zeros of $Y_0$ generically become poles of $Z_1$, although
$Y_0Z_1=Y_1$ remains regular, so this factorization is local.

Writing $Z=\exp(\lambda^2S_1+\lambda^4S_2+\cdots)$ gives
$\mathcal A_bS_1=2(r+N/x)+e_N^{(2)}$ and
\begin{equation}
 \mathcal A_bS_2=2S_1'+e_N^{(4)}-\frac{x^2}{2}(S_1')^2.
 \label{eq:weak-drive-second-dressing}
\end{equation}
The Laurent solvability condition reproduces $e_N^{(4)}$.  In a dominant
Bessel sector,
$S_1=-2e_N^{(2)}x^{-1/2}-(2+e_N^{(2)}/2)x^{-1}+O(x^{-3/2})$,
matching the next energy, gauge, and transport terms of formal WKB.
Summing $Y_2$ by a normalized Green integral verifies
$S_2=Y_2/Y_0-S_1^2/2$.  The latter has generic double poles at
zeros of $Y_0$, whereas the reconstructed additive state is entire and
has rapidly convergent Fock coefficients.  The perturbative expansion of
its Rayleigh quotient agrees with the exact Rayleigh--Schr\"odinger series
through orders $\lambda^{12},\lambda^8,\lambda^6$, and $\lambda^4$ for
$N=0,1,2$, and $N\ge3$, respectively; the level dependence reflects
omitted lower diagonals.

This section develops a regular Rayleigh--Schr\"odinger perturbation
expansion of the weak-drive eigenproblem in the parameter $\lambda=|F|/V$,
obtained after a number-operator phase rotation, giving the energy
corrections through order $\lambda^4$
\eqref{eq:weak-drive-energy2}--\eqref{eq:weak-drive-energy4}, independently
checked against the Bargmann coefficient recurrence.  It then shows that
the monotone upper-edge diagonal of the resulting tridiagonal perturbation
series sums exactly to a confluent hypergeometric closed form
\eqref{eq:weak-drive-0f1}, in analogy with the diagonal resummation of
Bender and Bettencourt, and extends this resummation multiplicatively to
the first subdiagonal, matching the exact perturbative energy series
through the orders recorded above.

In principle the full multiplicative series could define a global Bargmann
state only after its moving zeros and growth are controlled.  A finite
exponential truncation cannot: cumulant poles at zeros of $Y_0$ become
essential singularities, while a pure exponential is zero-free.  A global
form would likely separate a zero-carrying canonical product from its
exponential factor.  Direct multiple scales or quasilinearization of the
logarithmic-derivative Riccati equation are plausible local or sectorial
directions, but moving poles, Stokes data, and Bargmann admissibility leave
their practical closure open.  This is not a closed global hierarchy, and
the Bessel rather than Weber description remains sectorial and zero-dependent.

\section{Summary}
\label{sec:summary}

This section collects, for reference, every substantive mathematical claim
made above, together with its precise epistemic status: \emph{proven}
(established unconditionally, within the stated standing hypotheses, such
as $F\ne0$ where noted), \emph{conditional} (following from an explicitly
named, unproved hypothesis or conjecture stated elsewhere in the paper), or
\emph{conjectured} (itself an unproved hypothesis proposed here). No new
mathematical content is introduced in this section; each item below
restates, with a cross-reference, a claim already made in the body of the
paper, and does not go beyond its strongest formulation there.

\begin{enumerate}
\item \textbf{Self-adjointness and spectral discreteness (proven).} The
 Hamiltonian~\eqref{eq:H} is self-adjoint on $\Dom(N^2)$, essentially
 self-adjoint on the finite-particle core, bounded below, and has compact
 resolvent, so its spectrum is purely discrete with no finite accumulation
 point (Proposition~\ref{prop:self-adjoint}, Proposition~\ref{prop:lower-bound},
 and Theorem~\ref{thm:compact-resolvent}); for the single-mode case this specializes a
 more general driven-lattice result, with explicit constants supplied here
 directly in the Bargmann representation.
\item \textbf{Continued-fraction spectral characterization (proven, for
 $F\ne0$).} By the asymptotic dichotomy of
 Lemma~\ref{lem:asymptotic-dichotomy} and
 Theorem~\ref{thm:bargmann-characteristic}, a complex number $E$ is an
 eigenvalue of $H$ if and only if it satisfies the exact scalar condition
 $\Xi(E)=0$ \eqref{eq:characteristic-cf}, equivalently if and only if the
 physical-slice recurrence \eqref{eq:taylor-recurrence} selects its minimal
 solution, equivalently if and only if the resulting Bargmann series
 defines an entire function of order $1/2$; every such eigenvalue is
 simple.
\item \textbf{The undriven spectrum (proven, exact).} For $F=0$ the
 spectrum is exactly $\{E_n^{(0)}\}_{n\in\Nzero}$ of
 \eqref{eq:undriven-spectrum}, with eigenfunctions $z^n/\sqrt{n!}$; the same
 degeneration reappears, consistently, at the level of the Liouville normal
 form as the reduction of \eqref{eq:standard-ode} to an Euler equation in
 Section~\ref{sec:liouville}.
\item \textbf{Singularity structure and formal local behaviours (proven,
 as formal statements).} The first-derivative-free normal form
 \eqref{eq:liouville-normal-form} has an unramified irregular singularity
 of Poincar\'e rank $1$ at the origin and a ramified irregular singularity
 of rank $1/2$ at infinity (Proposition~\ref{prop:singularity-classification}),
 with two explicit leading formal behaviours at each point
 (\eqref{eq:phi-regular-formal}--\eqref{eq:phi-singular-formal} at the
 origin, \eqref{eq:infinity-formal-behaviours} at infinity) and an explicit
 compensation mechanism
 (\eqref{eq:compensated-branch}--\eqref{eq:enhanced-branch}) distinguishing
 them; sectorial asymptotic existence and Stokes connection data are not
 asserted by these formal statements alone.
\item \textbf{Identification with the double-confluent Heun family (proven,
 algebraic correspondence only).} The normal-form equation is exactly the
 degenerate ($\varepsilon_{\rm D}=0$) parameter specialization
 \eqref{eq:dche-correspondence} of the canonical double-confluent Heun
 family, and the Whittaker--Ince equation separately arises from a distinct
 limiting operation \eqref{eq:whittaker-ince-limit};
 neither correspondence, by itself, identifies a preferred global Heun
 solution, proves a connection formula, or yields an energy-quantization
 condition.
\item \textbf{Normalized sectorial solutions at infinity (proven, for
 $F\ne0$).} Theorem~\ref{thm:olver-sectorial} constructs, by Olver's
 fixed-parameter error-control method, unique holomorphic canonical
 solutions on suitable half-planes near infinity, with explicit remainder
 bounds \eqref{eq:olver-epsilon-bound}--\eqref{eq:olver-radial-majorant} and
 no large-parameter assumption.
\item \textbf{Strong-drive energy, leading order (proven), next order
 (conditional).} In the strong-drive limit, the leading coalescing-turning-point
 energy coefficient $-\tfrac32\eta^{4/3}$ in \eqref{eq:strong-drive-energy}
 is proved rigorously from the double-root turning-point geometry alone,
 while its next coefficient in the same display is conditional on
 completing the global connection step formalized by
 Conjecture~\ref{conj:essential-connection}.
\item \textbf{Global connection coefficient (conjectured).}
 Conjecture~\ref{conj:essential-connection} proposes that a global
 compensated-to-essential connection coefficient $C_{\rm ess}$, whose
 vanishing would give the exact strong-drive quantization condition, takes
 the displayed near-Weber form; the underlying domain chain, symmetry
 reduction, determinant identification, and remainder control are all
 conjectural.
\item \textbf{Independent Bogoliubov energy expansion (proven, as a formal
 algebraic identity; not a statement about the true spectrum beyond item~7's
 rigorous leading term).} Independently of
 Conjecture~\ref{conj:essential-connection}, a Bogoliubov operator expansion
 yields the algebraically verified formal Rayleigh--Schr\"odinger energy
 series \eqref{eq:strong-drive-energy-constant} through order
 $\eta^{-2/3}$; its agreement with the leading two terms of item~7 is a
 consistency check, not a proof, of that conjecture.
\item \textbf{Complex-WKB branches and the Fedoryuk--Bargmann connection
 (conjectured/conditional).} An independent complex-WKB treatment directly
 in the Fedoryuk normal form identifies two candidate sectorial energy
 branches \eqref{eq:fedoryuk-local-energies}; under the (conjectured)
 Fedoryuk--Bargmann orientation hypothesis of
 Section~\ref{sec:sectorial-solutions} selecting the imaginary-axis branch,
 and under the further Global Fedoryuk--Bargmann connection hypotheses
 H1--H3 of Conjecture~\ref{conj:fedoryuk-bargmann-lemma}, the resulting
 conditional quantization condition \eqref{eq:fedoryuk-conditional-quantization}
 agrees with item~9's formal series; both the orientation choice and H1--H3
 are unproved, and this agreement is again only a consistency check.
\item \textbf{Weak-drive perturbative energies (proven, as a formal
 perturbative expansion).} Direct Rayleigh--Schr\"odinger perturbation
 theory in the weak-drive parameter $\lambda=|F|/V$ gives the energy
 corrections \eqref{eq:weak-drive-energy2}--\eqref{eq:weak-drive-energy4}
 through order $\lambda^4$, independently reproduced from the Bargmann
 coefficient recurrence.
\item \textbf{Exact upper-edge resummation (proven, exact closed form).}
 The monotone upper-edge diagonal of the resulting tridiagonal perturbation
 series sums exactly to the confluent hypergeometric closed form
 \eqref{eq:weak-drive-0f1}, in direct analogy with the diagonal resummation
 of Bender and Bettencourt.
\end{enumerate}

\section{Outlook}
\label{sec:outlook}

Taken as a whole, this work gives a self-adjoint, discrete-spectrum
formulation of the driven Kerr eigenproblem in the Bargmann representation
(Section~\ref{sec:operator-properties}), an exact scalar spectral condition
$\Xi(E)=0$ requiring no matrix diagonalization
(Section~\ref{sec:heunwi}), a complete classification of the equation's
singularity structure and its place in the double-confluent Heun family
(Section~\ref{sec:liouville}), rigorously controlled sectorial solutions at
infinity (Section~\ref{sec:sectorial-infinity}), and two independently
cross-checked asymptotic pictures of the strong- and weak-drive regimes
(Sections~\ref{sec:strong-drive}--\ref{sec:weak-drive}).  The one remaining
open step, a global origin-to-infinity connection that would turn the
strong-drive WKB pictures into a proof, is stated only as a conjecture here
(Summary, items 7--10).  Beyond that step, comparison of the present
degenerate ($\varepsilon_{\rm D}=0$) parameter case with the broader
double-confluent Heun literature
\cite{Buhring1994DCHE,ElJaickFigueiredo2008Solutions,
ElJaickFigueiredo2013Coulomb,IshkhanyanEtAl2014DCHE,Leaver1986Spectral}, and
explicit comparison of the resulting holomorphic eigenfunctions with
existing coherent-state, Husimi, and Wigner descriptions
\cite{MaslovaEtAl2019Squeezing,KirchmairEtAl2013Kerr}, are natural
directions for future work.

\appendix
\section{Exploratory WKB iteration portraits}
\label{app:wkb-portraits}

For complementary exploratory visualization of the local WKB dynamics,
we present iteration portraits derived from the sectorial Fedoryuk--Weber
analysis of Section~\ref{sec:sectorial-solutions}. While this is not physical
time evolution and does not prove global Bargmann admissibility, the
resulting regularized iterated map provides qualitative insight into the
connection structure.

For exploratory visualization only, we treat the local, sectorial
Fedoryuk--Weber expression \eqref{eq:fedoryuk-local-pullback} formally as a
self-map of the complex plane.  Starting from each plotted pixel $z_0$, we
compute
\begin{equation}
 z_{j+1}=\mathcal R_{\rm post}\!\left[
  \Psi_{n,\sigma}^{\rm loc}(z_j;\mathcal R_{\rm eval})\right],
 \label{eq:regularized-local-iteration}
\end{equation}
where \(\mathcal R_{\rm eval}\) acts during evaluation and
\(\mathcal R_{\rm post}\) afterwards.  This is not physical time evolution,
and \(\Psi^{\rm loc}\) is not a proved global Bargmann eigenfunction.

The numerical prescription is deliberately explicit.  A complex denominator
\(d\) with \(|d|<\epsilon_{\rm den}=10^{-8}\) is replaced by
\(10^{-8}e^{i\arg d}\), with phase zero at \(d=0\).  Only the real part of
each complex exponent is clipped to \([-80,80]\), leaving its imaginary part
unchanged.  After every application, newly nonfinite values are marked invalid
and that mask is propagated; every remaining value with modulus above
\(z_{\rm cap}=10^6\) is capped radially, preserving phase.  Invalid pixels
are black.  NumPy principal branches are used for logarithms and powers;
branch cuts are neither tracked nor analytically continued, so their
discontinuities remain visible.  The regularized map, not the unmodified
analytic expression alone, generates the pixels.  No smoothing, interpolation
over singularities, denoising, or post-hoc image enhancement is applied, and
no claim or test of fractality is made.

In all panels hue is \((\arg z_k+\pi)/(2\pi)\), saturation is \(0.97\), and
valid brightness lies in \([0.50,0.98]\).  With
\(t\) the clipped normalization of \(\log(1+|z_k|)\) between its finite
2nd and 98th percentiles, phase--modulus uses
\(0.50+0.48t^{0.65}\); banded modulus replaces \(t^{0.65}\) by
\(0.15t+0.85[1+\sin(2\pi\,2.4t)]/2\); equalized brightness uses the
deterministic empirical midrank CDF.  Thus the colour rules expose different
features rather than merely recolouring identical data.

The intervention rates (percent of all pixels) are:
\begin{center}
\small
\begin{tabular}{cclrrr}
\hline
\(\eta\)&branch&application&denominator floor&exponent clip&radial cap\\ \hline
4&R&1&0&0.0169&0.2202\\
 &&2&0.1762&22.8771&23.6858\\
4&I&1&0&0&0\\
 &&2&0&36.2521&7.9441\\
0.25&R&1&0&0.0003&0.0041\\
 &&2&0.0013&0.1881&0.1988\\
 &&3&0.1918&4.1502&4.3292\\
0.25&I&1&0&0&0\\
 &&2&0&0&0\\
 &&3&0&0&0.5218\\ \hline
\end{tabular}
\end{center}
New-nonfinite and total-invalid rates are zero in every row.  The large
second-application rates at \(\eta=4\) show how strongly the disclosed
numerical hammer acts.  Conversely, the milder rates at \(\eta=0.25\) do not
validate the approximation: those panels are formal extrapolations far
outside the controlled strong-drive asymptotic regime.

\begin{figure}[p]
 \centering
 \begin{minipage}[t]{.49\linewidth}\centering
  \includegraphics[width=\linewidth]{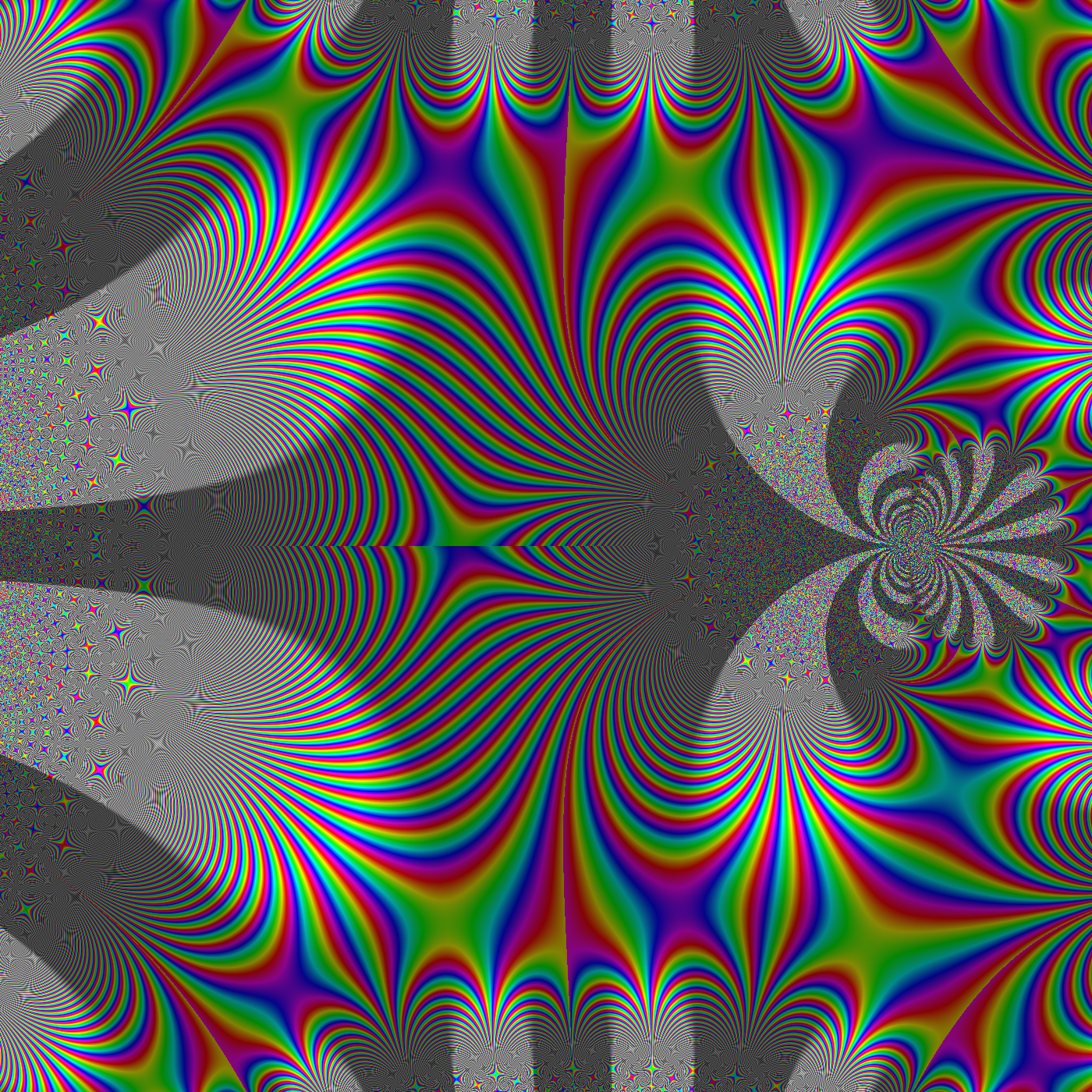}\\[-1mm]
  \textbf{(a)}
 \end{minipage}\hfill
 \begin{minipage}[t]{.49\linewidth}\centering
  \includegraphics[width=\linewidth]{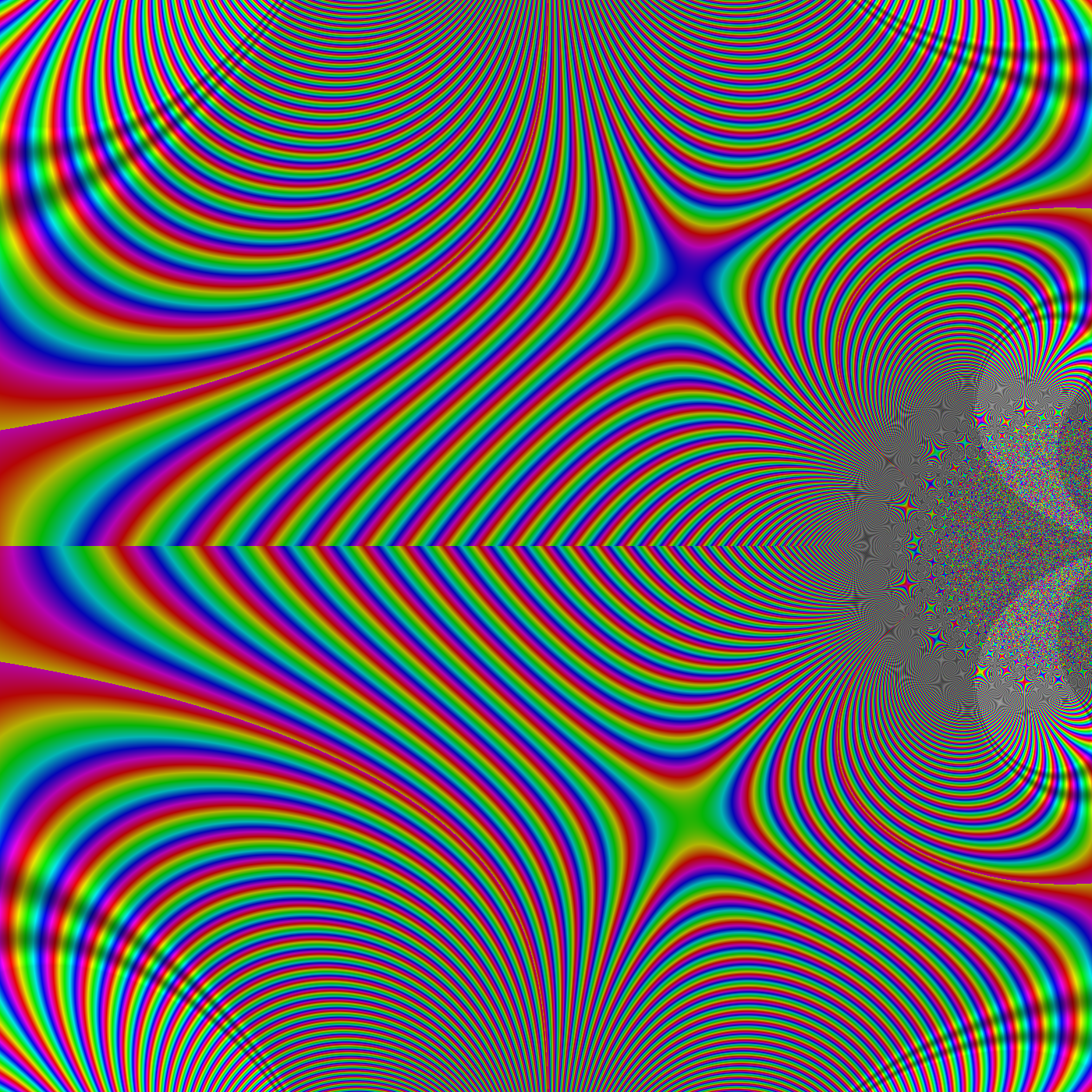}\\[-1mm]
  \textbf{(b)}
 \end{minipage}
 \begin{minipage}[t]{.49\linewidth}\centering
  \includegraphics[width=\linewidth]{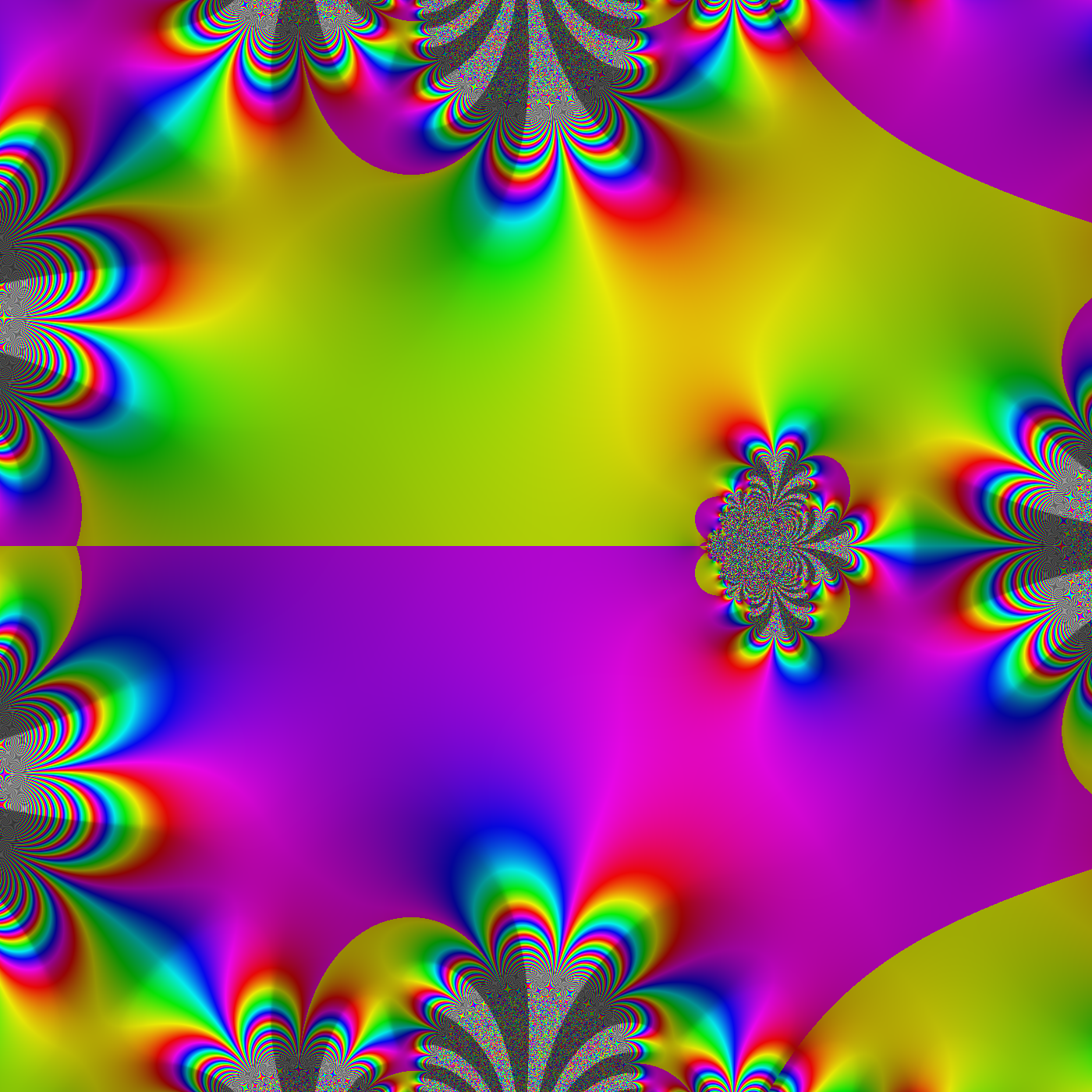}\\[-1mm]
  \textbf{(c)}
 \end{minipage}\hfill
 \begin{minipage}[t]{.49\linewidth}\centering
  \includegraphics[width=\linewidth]{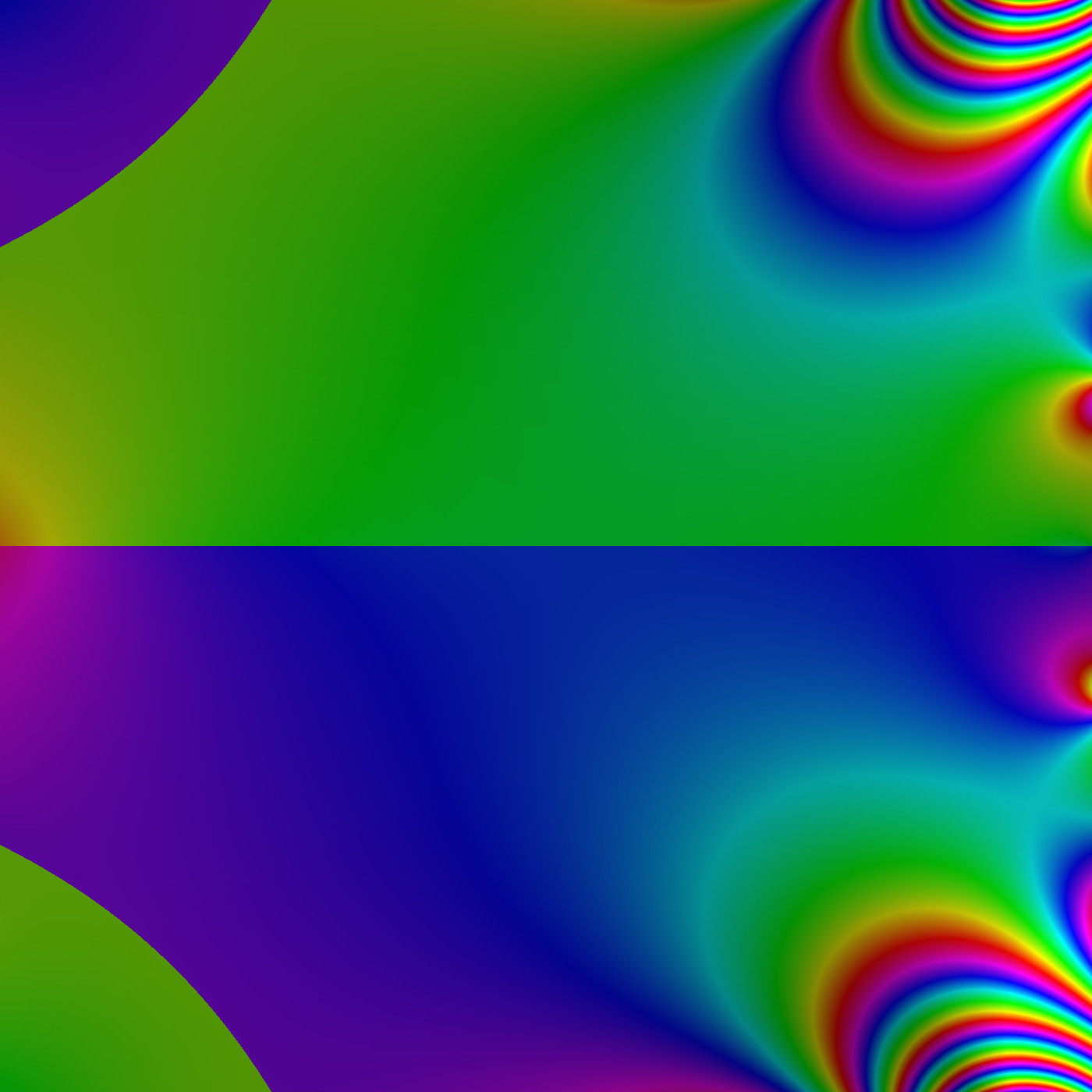}\\[-1mm]
  \textbf{(d)}
 \end{minipage}
 \caption{Exploratory portraits with \(\delta=0.7\), \(n=0\), principal
 branches, and the naive regularization defined in the text.  (a)
 \(\eta=4\), \(k=2\), branch \(R\), domain
 \([-4,0.8]\times[-2.4,2.4]\), phase--modulus colour.  (b) \(\eta=4\),
 \(k=2\), branch \(I\), domain \([-3.2,-0.2]\times[-1.5,1.5]\), smooth
 banded-modulus colour.  (c) \(\eta=0.25\), \(k=3\), branch \(R\), domain
 \([-2,1]\times[-1.5,1.5]\), equalized phase/modulus colour; this is a
 formal extrapolation far outside the controlled strong-drive asymptotic
 regime.  (d) \(\eta=0.25\), \(k=3\), branch \(I\), domain
 \([-1.8,-0.1]\times[-1,1]\), phase--modulus colour, with the same explicit
 out-of-regime warning.  Hue encodes phase and brightness encodes modulus in
 (a), (b), and (d), and the empirical modulus rank in (c).}
 \label{fig:iterated-wkb-final}
\end{figure}

Visually, the strong-drive pair contains dense nested phase fans: the wider
\(R\) window emphasizes repeated lobes, while the local \(I\) window emphasizes
a symmetric central seam and smooth modulus bands.  The extrapolative \(R\)
panel has broad colour fields punctuated by compact rosettes, whereas the
extrapolative \(I\) panel is quieter except near its right boundary.  These
descriptions characterize the regularized iterated map defined above.  They
are exploratory visualizations, with no interpretation as physical evolution,
a global Bargmann eigenfunction, or evidence of fractality.

\clearpage
\section*{Acknowledgment}

During the preparation of this work, generative AI tools were used
in two distinct capacities. First, AI-assisted code generation
(ChatGPT Work, Claude Opus, Claude Code, and Codex) was used for creation
of Python routines; all final implementation was inspected and verified
by the author. Second, large language models (ChatGPT Work, Claude Opus)
were used to assist with prose editing and language refinement at the
manuscript stage. The author reviewed and took responsibility for all
content thus produced. No AI tool is listed as an author; the author
accepts full responsibility for the scientific content of this work,
consistent with both arXiv and the Journal of Mathematical Physics
policies.

\bibliographystyle{unsrt}
\bibliography{manuscript}

\end{document}